\DeclarePairedDelimiterX{\ket}[1]{\lvert}{\rangle}{#1}
\DeclarePairedDelimiterX{\bra}[1]{\langle}{\rvert}{#1}
\newcommand{\ketbra}[2]{\left\vert #1 \right\rangle\! \left\langle #2 \right\vert}
\newcommand{\tr}{\operatorname{Tr}} 
\newcommand{\imeas}{\mathbf{IM}} 
\newcommand{\io}{\mathbf{IO}} 
\newcommand{\dio}{\mathbf{DIO}} 
\newcommand{\mio}{\mathbf{MIO}} 
\newcommand{\sepo}{\mathbf{Sep}} 
\newcommand{\sepd}{\mathbf{SepD}} 
\newcommand{\cloccm}{\overline{\mathbf{LOCCM}}} 
\newcommand{\clocc}{\overline{\mathbf{LOCC}}} 
\newcommand{\cnot}{\mathrm{CNOT}}
\newcommand{\mcp}{measurement-cohering power}
\newcommand{\mep}{measurement-entangling power}
\newcommand{\scp}{state-cohering power}
\newcommand{\sep}{state-entangling power}
\theoremstyle{definition}
\newtheorem{definition}{Definition}
\theoremstyle{definition}
\newtheorem{theorem}{Theorem}
\theoremstyle{definition}
\newtheorem{lemma}{Lemma}
\newtheorem{proposition}[theorem]{Proposition}
\theoremstyle{remark}
\newtheorem{corollary}[theorem]{Corollary}
\begin{document}

\title{Maneuvering measurement-coherence into measurement-entanglement}

\author{Ho-Joon Kim}
\affiliation{Department of Physics, Hanyang University, Seoul, 04763, Republic of Korea}
\orcid{0000-0002-3634-5831}
\email{eneration@gmail.com}
\affiliation{Department of Mathematics and Research Institute for Basic Sciences, Kyung Hee University, Seoul 02447, Korea}
\author{Soojoon Lee}
\email{level@khu.ac.kr}
\orcid{0000-0003-2925-1017}
\affiliation{Department of Mathematics and Research Institute for Basic Sciences, Kyung Hee University, Seoul 02447, Korea}
\affiliation{School of Computational Sciences, Korea Institute for Advanced Study, Seoul 02455, Korea}
\maketitle

\begin{abstract}
    Quantum dynamics governs the transformation of static quantum resources, such as coherence and entanglement, in both quantum states and measurements. Prior studies have established that a quantum channel's \scp{} can be converted into the \sep{} without additional coherence. Here, we complete this coherence-to-entanglement paradigm by demonstrating that a channel's \mcp{} can likewise be converted into the \mep{}. This result reinforces, on the dynamical level, the intuition that entanglement emerges as a manifestation of coherence. To formalize this picture, we develop resource theories for measurement-cohering and measurement-entangling powers and characterize the structure of incoherent measurements to analyze measurement-coherence generation. Furthermore, we show that the \scp{} of a quantum channel is equivalent to the \mcp{} of its adjoint map, and a corresponding equivalence also exists between the \sep{} and the \mep{}.
\end{abstract}

\section{Introduction}
Quantum coherence is a fundamental manifestation of the superposition principle and lies at the heart of many quintessential quantum phenomena such as double-slit interference and Schr\"{o}dinger's cat. In recent decades, coherence has been formalized within the framework of quantum resource theories ~\cite{levi2014QuantitativeTheoryCoherent,baumgratz2014QuantifyingCoherence,winter2016OperationalResourceCoherence,chitambar2016ComparisonIncoherentOperations,streltsov2017QuantumCoherenceResource}, leading to a unified understanding of its operational roles in quantum sensing ~\cite{ramsey1950MolecularBeamResonance,czhang2019DemonstratingQuantumCoherence}, quantum computation~\cite{hillery2016CoherenceResourceDecision,lhshi2017CoherenceDepletionGrover,ahnefeld2022CoherenceResourceShors}, and thermodynamics~\cite{gour2022RoleQuantumCoherence,gour2023ErratumRoleQuantum}. In parallel, entanglement—another central quantum resource—has been extensively studied, and its deep connection to coherence has been clarified: coherence distributed across different subsystems naturally gives rise to entanglement. This connection has been explored both conceptually and operationally, yielding explicit protocols that convert state-coherence into state-entanglement under appropriate restrictions on quantum operations~\cite{mskim2002EntanglementBeamSplitter,asboth2005ComputableMeasureNonclassicality,vogel2014UnifiedQuantificationNonclassicality,killoran2016ConvertingNonclassicalityEntanglement,streltsov2015MeasuringCoherenceEntanglement,chitambar2016RelatingEntanglementCoherence,regula2018ConvertingMultilevelNonclassicality,shkim2021ConvertingCoherenceBased}.

While these developments have primarily focused on states, quantum measurements themselves also constitute carriers of non-classical resources~\cite{buscemi2005CleanPOVM,girolami2014ObservableMeasureCoherence,oszmaniec2017SimulatingPOVM,oszmaniec2019OperationalRelevanceResource,skrzypczyk2019RobustnessMeasurementDiscrimination,skrzypczyk2019AllSetsIncompatible,theurer2019QuantifyingOperationsApplication,bischof2019ResourceTheoryCoherence,lipka-bartosik2021OperationalSignificanceQuantum,guff2021ResourceTheoryQuantum,buscemi2024CompleteOperationalResource,linden2025HowUseArbitrary}. A quantum measurement can possess coherence with respect to a fixed basis, enabling the extraction of phase-sensitive information that cannot be accessed by incoherent measurements~\cite{theurer2019QuantifyingOperationsApplication,cimini2019MeasuringCoherenceQuantum,narasimhachar2025CoherentMeasurementCost}. Likewise, measurements can exhibit entanglement, allowing joint measurement strategies that outperform any LOCC implementable scheme~\cite{lipka-bartosik2021OperationalSignificanceQuantum,hshuang2021InformationTheoreticBoundsQuantum,schen2024OptimalTradeoffEntanglement}. Despite their importance, measurement-based resources remain less explored than their state-based counterparts. Recent work has begun to fill this gap by establishing resource theories for measurement-coherence and measurement-entanglement and by demonstrating that measurement-coherence can be converted into measurement-entanglement in static scenarios~\cite{hjkim2022RelationQuantumCoherence}.

\begin{figure}[htb]
	\centering
	\includegraphics[width=0.9\linewidth]{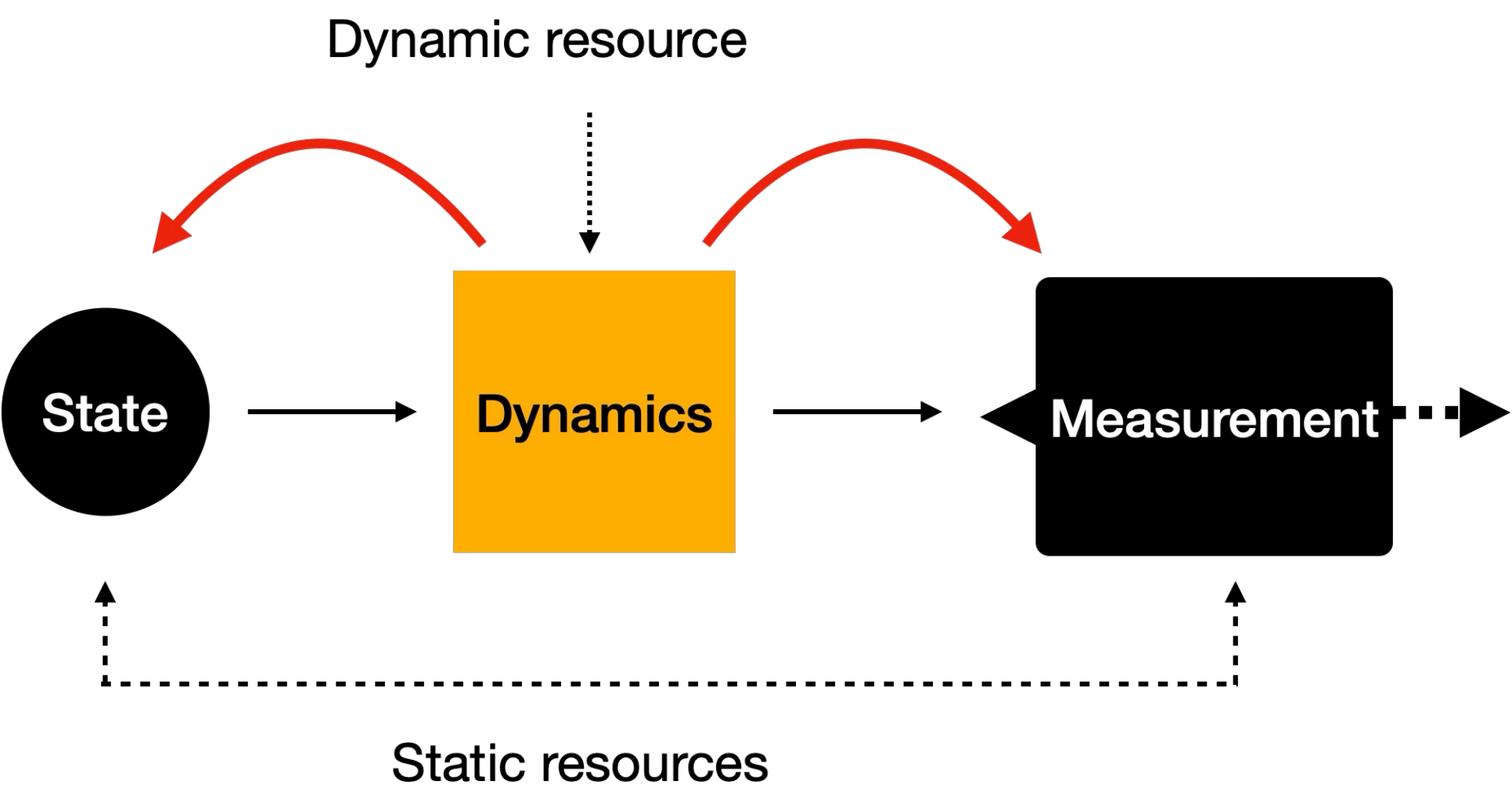}
	\caption{Quantum dynamics can modify static quantum resources in both quantum states and quantum measurements.}
	\label{fig: static-dynamic-resource}
\end{figure}
A complementary and equally fundamental aspect of quantum theory is quantum dynamics. Quantum channels do not merely transform states and measurements; they also serve as dynamical resources capable of generating~\cite{zanardi2000EntanglingPowerQuantum,zanardi2017CoherencegeneratingPowerQuantum,takahashi2022CreatingDestroyingCoherence}, destroying~\cite{horodecki2003EntanglementBreakingChannels,zwliu2017ResourceDestroyingMaps}, or preserving static quantum resources~\cite{cyhsieh2020ResourcePreservability} as depicted in Fig.~\ref{fig: static-dynamic-resource}. In addition, quantum channels themselves may function as dynamical resources, thereby facilitating the simulation of processes that do not possess intrinsic quantum resources~\cite{nielsen2003QuantumDynamicsPhysical,hjkim2021OneShotManipulationEntanglement,saxena2020DynamicalResourceTheory,yliu2020OperationalResourceTheory,yluo2025OneshotManipulationCoherence}. Viewed from this dynamical perspective, coherence and entanglement appear not only in static objects such as states or POVMs but also in the transformations applied to them. A recent result identified a relation between such dynamicals resources: the state-cohering power of a quantum channel can be fully converted into its state-entangling power, revealing a dynamical analogue of the coherence–entanglement correspondence known for states~\cite{theurer2020QuantifyingDynamicalCoherence}.

However, the dynamical landscape for measurement resources has remained largely unexplored. In particular, it is not known whether measurement-cohering power—the ability of a channel to generate measurement-coherence through pre-processing—can be operationally related to measurement-entangling power, or whether these two quantities obey a dynamical correspondence analogous to the one established for states. Establishing such a relationship is essential for completing the resource-theoretic picture linking coherence and entanglement across both static and dynamical settings.

In this work, we develop resource theories for quantum channels that quantify their ability to generate measurement-coherence and measurement-entanglement, and we show that these two forms of dynamical resources are fundamentally connected. We first demonstrate that the measurement-cohering power of any quantum channel can, without introducing any additional coherence, be fully converted into measurement-entangling power by means of a fixed construction involving only free operations and the final dephasing step. This establishes a genuine dynamical analogue of coherence-to-entanglement conversion on the measurement side.

A useful ingredient in our analysis is a structural characterization of the incoherent measurements: we show that every incoherent POVM can be reduced to a standard incoherent basis measurement followed solely by classical post-processing. This observation allows us to treat incoherent measurements as canonical objects and provides the foundation for analyzing how quantum channels generate measurement-resources.

Finally, we uncover a duality between state-based and measurement-based dynamical resource powers. For unital channels, the state-cohering power of the channel turns out to coincide with the measurement-cohering power of its adjoint map, and an analogous equivalence emerges between state-entangling and measurement-entangling powers. Taken together, these results complete the conceptual framework connecting coherence and entanglement across both static and dynamical settings, while highlighting a deeper symmetry between state- and measurement-based quantum resources and the unifying role played by quantum dynamics.

\section{Measurement-coherence and measurement-entanglement}
In the resource theory of quantum coherence~\cite{streltsov2017QuantumCoherenceResource}, quantum coherence of an operator is quantified with respect to a fixed orthonormal basis, which is referred to as an incoherent basis. The choice of incoherent basis is contingent upon the experimental or physical constraints imposed on the system under consideration. We set an incoherent basis of a system $A$ and denote it as $\mathcal{I}_{A} = \{\ket{i}_{A}: i=0, \dots, d-1\}$, with respect to which quantum coherence is considered. An operator $O_{A}$ that is diagonal in the incoherent basis is referred to as incoherent. Any incoherent operator is a fixed point under the dephasing channel $\Delta_{A}(O_{A})= \sum_{i=0}^{d-1} \bra{i}O_{A}\ket{i}_{A} \ketbra{i}{i}_{A} = O_{A}$. Similarly, a quantum state $\rho_{A}$ is incoherent if $\Delta_{A}(\rho_{A})=\rho_{A}$. Here, we consider constituent systems of dimension $d$ and assume that a quantum measurement for each system has $n$ outcomes, which can be described as a positive-operator-valued measure (POVM) given by $\mathcal{M}_{A}=\{M_{x} : \sum_{x=0}^{n-1} M_{x}=I_{A},M_{x}\ge 0\; \forall x=0, \ldots, n-1\}$, where $I_{A}$ is the identity operator. A quantum measurement $\mathcal{M}_{A}$ can be equivalently described by a quantum channel with classical outputs, as
\begin{equation}\label{eq: measurement as a channel}
	\mathcal{M}_{A}(\rho_{A}) = \sum_{x=0}^{n-1}\tr_{A}(M_{x}\rho_{A}) \ketbra{x}{x}_{A'},
\end{equation}
where $\rho_{A}$ is a quantum state and $A'$ denotes the output system~\cite{watrous2018TTQI}. The output of a quantum measurement is regarded as being classical implying that $\{\ketbra{x}{x}_{A'}:x=0,\dots, n-1\}$ is an incoherent basis for the system $A'$. A quantum measurement $\mathcal{M}_{A}$  is called an incoherent measurement if its POVM elements are incoherent, that is, $\Delta_{A}(M_{x})=M_{x}$ for all $x=0,\dots, n-1$. If a quantum measurement $\mathcal{M}_{A}$ is incoherent, the probability of an outcome is solely dependent on the incoherent information of input states, disregarding the coherent components~\cite{theurer2019QuantifyingOperationsApplication,kdwu2021ExperimentalProgressQuantum}.
We denote the set of incoherent measurements as $\imeas$. Measurement-coherence of a quantum measurement $\mathcal{M}_{A}$ is quantified by the relative entropy of measurement-coherence $C_{m}(\cdot)$, as defined in Ref.~\cite{hjkim2022RelationQuantumCoherence}:
\begin{equation}
	C_{m}(\mathcal{M}_{A}) = \min_{\mathcal{F}_{A}\in \imeas} D_{m}(\mathcal{M}_{A}\Vert \mathcal{F}_{A}),
\end{equation}
where the measurement relative entropy between two quantum measurements $D_{m}(\cdot\Vert \cdot)$ is proportional to the average relative entropy between POVM elements as
\begin{equation}
	D_{m}(\mathcal{M}_{A}\Vert \mathcal{N}_{A}) \coloneqq \dfrac{1}{d}\sum_{x=0}^{n-1} D(M_{x}\Vert N_{x}),
\end{equation}
with the quantum relative entropy $D(\cdot\Vert \cdot)$.
The relative entropy of measurement-coherence of a quantum measurement is found to be equal to the average relative entropy of coherence of each POVM element, up to a multiplicative constant~\cite{hjkim2022RelationQuantumCoherence}:
\begin{equation}\label{eq: measurement relative entropy of coherence reduced}
	C_{m}(\mathcal{M}_{A}) = \dfrac{1}{d}\sum_{x=0}^{n-1} \{S(\Delta_{A}(M_{x}))- S(M_{x})\},
\end{equation}
where $S(X_{A})=-\tr_{A} X_{A} \log X_{A}$ for $X_{A}\ge 0$ is the von Neumann entropy.

A quantum measurement $\mathcal{M}_{A}$ undergoes a transformation into another quantum measurement $\widetilde{\mathcal{M}}_{A}$ when a pre-processing channel and a classical post-processing channel are added~\cite{buscemi2005CleanPOVM,hjkim2022RelationQuantumCoherence}. This can be described as follows:
\begin{equation}
	\widetilde{\mathcal{M}}_{A}=\mathcal{S}_{A'}\circ \mathcal{M}_{A}\circ \mathcal{E}_{A},
\end{equation}
where a classical post-processing channel $\mathcal{S}_{A'}(\rho_{A'})=\sum_{x,x'=0}^{n-1} p(x'\vert x)\bra{x}\rho_{A'}\ket{x}_{A'} \ketbra{x'}{x'}_{A'}$ sends an outcome $x$ to another outcome $x'$ with a conditional probability distribution $p(x'\vert x)$ satisfying $\sum_{x'}p(x'\vert x)=1$ for each $x$.
At this point, we demonstrate that an incoherent measurement is, in essence, the incoherent basis measurement:
\begin{theorem}
	An incoherent measurement $\mathcal{M}_{A}\in \imeas$ can be implemented by the incoherent basis measurement $\mathcal{I}_{A}=\{\ketbra{i}{i}_{A}:i=0,\dots,d-1\}$ followed by a classical post-processing channel $\mathcal{S}_{A}$:
	\begin{equation}
		\mathcal{M}_{A} = \mathcal{S}_{A}\circ \mathcal{I}_{A}.
	\end{equation}
\end{theorem}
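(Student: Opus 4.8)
The plan is to exploit the defining property of an incoherent measurement---that every POVM element is diagonal in the incoherent basis---to read off a stochastic post-processing matrix directly, and then verify the factorization by a one-line comparison of outcome probabilities. First I would write each element of $\mathcal{M}_{A}=\{M_{x}\}_{x=0}^{n-1}$ as $M_{x}=\sum_{i=0}^{d-1}\lambda_{x\mid i}\,\ketbra{i}{i}_{A}$ with $\lambda_{x\mid i}\coloneqq\bra{i}M_{x}\ket{i}_{A}\ge 0$, which is legitimate precisely because $\Delta_{A}(M_{x})=M_{x}$. The completeness relation $\sum_{x}M_{x}=I_{A}$ then gives $\sum_{x}\lambda_{x\mid i}=\bra{i}I_{A}\ket{i}_{A}=1$ for each $i$, so $\{\lambda_{x\mid i}\}_{x}$ is a genuine conditional probability distribution for every $i$. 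This lets me define the classical post-processing channel $\mathcal{S}_{A}$ sending the $d$ outcomes of the incoherent basis measurement $\mathcal{I}_{A}$ to the $n$ outcomes of $\mathcal{M}_{A}$ via $p(x\mid i)\coloneqq\lambda_{x\mid i}$; explicitly $\mathcal{S}_{A}(\sigma_{A})=\sum_{x,i}\lambda_{x\mid i}\,\bra{i}\sigma_{A}\ket{i}_{A}\,\ketbra{x}{x}_{A}$, which is exactly the kind of map introduced in the excerpt (now with possibly unequal numbers of input and output labels, a harmless generalization).

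Next, since quantum channels are linear it suffices to check $\mathcal{M}_{A}=\mathcal{S}_{A}\circ\mathcal{I}_{A}$ on an arbitrary input state $\rho_{A}$. Composing gives $(\mathcal{S}_{A}\circ\mathcal{I}_{A})(\rho_{A})=\mathcal{S}_{A}\bigl(\sum_{i}\bra{i}\rho_{A}\ket{i}_{A}\ketbra{i}{i}_{A}\bigr)=\sum_{x}\bigl(\sum_{i}\lambda_{x\mid i}\bra{i}\rho_{A}\ket{i}_{A}\bigr)\ketbra{x}{x}_{A}$, whereas $\mathcal{M}_{A}(\rho_{A})=\sum_{x}\tr_{A}(M_{x}\rho_{A})\ketbra{x}{x}_{A}$ with $\tr_{A}(M_{x}\rho_{A})=\sum_{i}\lambda_{x\mid i}\bra{i}\rho_{A}\ket{i}_{A}$ by the diagonal form of $M_{x}$. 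The two expressions coincide termwise, which establishes the claim.

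I do not expect a genuine obstacle here; the only care needed is bookkeeping---allowing the post-processing channel to have more (or fewer) input outcomes than output outcomes, and, if one insists on all maps acting literally on the register $A$ rather than on the classical output register, noting that the classical outcome register can be relabelled or embedded into $A$ without loss. I would also add the remark that, consistently with Eq.~\eqref{eq: measurement relative entropy of coherence reduced}, this decomposition makes manifest why an incoherent measurement sees only the incoherent information $\Delta_{A}(\rho_{A})$ of the input: both sides depend on $\rho_{A}$ solely through the diagonal entries $\bra{i}\rho_{A}\ket{i}_{A}$, so $C_{m}(\mathcal{M}_{A})=0$ is equivalent to admitting such a $\mathcal{S}_{A}\circ\mathcal{I}_{A}$ implementation.
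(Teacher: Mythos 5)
Your proof is correct and follows essentially the same route as the paper's: the same conditional distribution $p(x\vert i)=\bra{i}M_{x}\ket{i}_{A}$, the same post-processing channel $\mathcal{S}_{A}$, and verification by comparing outcome probabilities on an arbitrary input. The only cosmetic difference is that the paper phrases the first step as $\mathcal{M}_{A}=\mathcal{M}_{A}\circ\Delta_{A}$ together with the identification of $\Delta_{A}$ with the incoherent-basis measurement, whereas you insert the diagonal expansion of each $M_{x}$ directly—these are the same computation.
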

We defer proofs of the above and subsequent results to Appendices.
The set of quantum channels that do not generate any measurement-coherence is known to be detection-incoherent operations ($\dio$)~\cite{theurer2019QuantifyingOperationsApplication}. These serve as free operations for measurement-coherence. A quantum channel $\mathcal{E}_{A}$ is in $\dio$ if and only if it satisfies the equation $\Delta_{A}\circ \mathcal{E}_{A}=\Delta_{A}\circ \mathcal{E}_{A}\circ \Delta_{A}$. This characterization is analogous to the characterization of a quantum channel in the maximally incoherent operations ($\mio$), which is the largest set of quantum channels that send incoherent states to incoherent states.

In the resource theory of measurement-entanglement, we take the closure of the set of LOCC measurements ($\cloccm$) as our free resources, that can be implemented as LOCC operations with local measurements~\cite{watrous2018TTQI}. The closure of the set of LOCC channels ($\clocc$) corresponds to the free operations that preserve $\cloccm$. Measurement-entanglement of a bipartite quantum measurement $\mathcal{M}_{AB}$ is quantified by the relative entropy of measurement-entanglement of $\mathcal{M}_{AB}$ defined as follows\footnote{We choose $\cloccm$ as our free resource, which possesses a more direct operational meaning than the set of separable measurements, the free resource used in Ref.~\cite{hjkim2022RelationQuantumCoherence}.}:
\begin{equation}
	E_{m}(\mathcal{M}_{AB}) \coloneqq \min_{\mathcal{F}\in \cloccm(A:B)}D_{m}(\mathcal{M}_{AB}\Vert \mathcal{F}_{AB}).
\end{equation}
We remark that, despite the similarities between the resource theories of quantum coherence and entanglement, there is no entanglement-destroying channel that destroys entanglement while holding every separable state as its fixed point~\cite{gour2017QuantumResourceTheories}. This is in contrast to the case of quantum coherence, where the dephasing channel destroys static quantum coherence while holding every incoherent state as a fixed point of the channel.

\section{Measurement-resource powers of quantum dynamics}
Now we focus on the capacity of quantum dynamics to generate measurement-resources. Quantum channels are capable of generating measurement-coherence and measurement-entanglement by acting as a pre-processing channel prior to a quantum measurement. In contrast, classical post-processing after a quantum measurement is unable to generate any measurement-resources~\cite{hjkim2022RelationQuantumCoherence}.

As previously introduced, $\dio$ is the largest set of quantum channels under which incoherent measurements remain incoherent. In light of this, we define the \mcp{} of a quantum channel $\mathcal{E}_{A}$ as follows:
\begin{align}\label{eq: def. mcgp}
	C(\mathcal{E}_{A}) \coloneqq& \min_{\mathcal{F}_{A}\in \dio}\max_{\mathcal{M}_{A\widetilde{A}}\in \imeas}\nonumber\\
    &D_{m}(\mathcal{M}_{A\widetilde{A}}\circ \mathcal{E}_{A}\Vert \mathcal{M}_{A\widetilde{A}}\circ \mathcal{F}_{A}),
\end{align}
where the ancillary system $\widetilde{A}$ can be arbitrary. The definition in Eq.~\eqref{eq: def. mcgp} may appear formal, but it has a direct physical interpretation.
The optimization over all incoherent measurements is unnecessary; the maximizing measurement is always the canonical incoherent basis measurement $\mathcal{I}_A$.
Intuitively, the measurement-cohering power captures how much the quantum channel $\mathcal{E}_A$ ``twists'' the classical axes of the standard incoherent measurement $\mathcal{I}_A$ into a superposition, thereby generating coherence in the measurement process.
This leads to the following operational simplification:
\begin{theorem}\label{thm: mcgp canonical form}
	For a quantum channel $\mathcal{E}_{A}$, the \mcp{} of the quantum channel is given by
	\begin{equation}\label{eq: mcgp equals to mc from inc. meas.}
		C(\mathcal{E}_{A}) = C_{m}(\mathcal{I}_{A}\circ \mathcal{E}_{A}),
	\end{equation}
	where $\mathcal{I}_{A}=\{\ketbra{i}{i}_{A}:i=0,\dots, d-1\}$ is the incoherent-basis measurement.
\end{theorem}
Theorem~\ref{thm: mcgp canonical form} ensures that the generalized Fourier transform corresponds to a quantum channel that has the maximum \mcp{} as in the case for the \scp{}~\cite{saxena2020DynamicalResourceTheory} since the generalized Fourier transform turns the incoherent basis measurement into a measurement with maximum measurement-coherence.

Furthermore, Theorem~\ref{thm: mcgp canonical form} establishes a direct pathway for experimental verification of the \mcp{}.
Since the \mcp{} of a channel $\mathcal{E}$ is equivalent to the measurement-coherence of the effective measurement $\mathcal{I} \circ \mathcal{E}$, it opens the possibility of quantifying it using well-established techniques for characterizing static coherence.
For instance, quantitative witnesses for coherence~\cite{djzhang2018EstimatingCoherenceMeasures,tzhang2024QuantificationEntanglementCoherence}, although originally developed for quantum states, might be adapted to quantify measurement-coherence, leveraging the fact that POVM elements are positive operators akin to unnormalized quantum states.
This bridges our dynamical framework with existing experimental protocols for static resources.

Similarly, we can quantify the capability of a channel to generate measurement-entanglement.
By identifying the closure of LOCC channels ($\clocc$) as the set of free resources, we define the \mep{} of a quantum channel $\mathcal{E}_{AB}$ as:
\begin{multline}\label{eq: def. megp}
	E(\mathcal{E}_{AB}) \coloneqq \min_{\mathcal{F}\in \clocc(A:B)}\max_{\mathcal{M}\in \cloccm(A\widetilde{A}:B\widetilde{B})}\\ D_{m}(\mathcal{M}_{A\widetilde{A}B\widetilde{B}}\circ \mathcal{E}_{AB}\Vert \mathcal{M}_{A\widetilde{A}B\widetilde{B}}\circ \mathcal{F}_{AB}),
\end{multline}
where the ancillary systems $\widetilde{A}$ and $\widetilde{B}$ can be arbitrary. In the definition of the \mep{}, the ancillary systems $\widetilde{A}$ and $\widetilde{B}$ cannot be omitted, in contrast to Theorem~\ref{thm: mcgp canonical form} for the \mcp{}, because they play essential operational roles. Without these ancillary systems, the channel that attains the maximum \mep{} would be the generalized CNOT gate. However, once ancillary systems are allowed, the generalized SWAP channel achieves the maximum \mep{}, as it converts locally implemented Bell measurements into measurements with maximum measurement-entanglement, paralleling the situation for \sep{}~\cite{eisert2000OptimalLocalImplementation}.

The measurement-resource powers in Eq.~\eqref{eq: def. mcgp} and Eq.~\eqref{eq: def. megp} are legitimate resource monotones that satisfy the following properties~\cite{chitambar2019QuantumResourceTheories}: i) the \mcp{} $C(\mathcal{E}_{A})$ is non-negative and faithful, that is, $C(\mathcal{E}_{A})$ is zero if and only if $ \mathcal{E}_{A} $ is in $\dio$. ii) The \mcp{} is monotone under a $\dio$ $\mathcal{K}_{A}$ and a unital $ \dio$ $\mathcal{L}_{A}$ such that $C(\mathcal{K}_{A}\circ \mathcal{E}_{A}\circ \mathcal{L}_{A})\le C(\mathcal{E}_{A})$. iii) The \mcp{} is convex, that is, it holds that $C(p\mathcal{E}_{A}+(1-p) \mathcal{G}_{A})\le pC(\mathcal{E}_{A})+(1-p) C(\mathcal{G}_{A})$ for quantum channels $\mathcal{E}_{A}$ and $\mathcal{G}_{A}$, and any $0\le p \le 1$.

Similarly, the \mep{} $E(\mathcal{E}_{AB})$ also satisfies the three properties: i) $E(\mathcal{E}_{AB})\ge 0$, with equality holding if and only if $\mathcal{E}_{AB}\in \clocc(A\!:\! B)$. ii) The \mep{} $E(\mathcal{E}_{AB})$ is monotone under the composition by a $\clocc$ channel $\mathcal{K}_{AB}$ and a unital $\clocc$ channel $\mathcal{L}_{AB}$ such that $E(\mathcal{K}_{AB}\circ \mathcal{E}_{AB}\circ \mathcal{L}_{AB})\le E(\mathcal{E}_{AB})$. iii) The \mep{} $E$ is convex, that is, for quantum channels $\mathcal{E}_{AB}$ and $\mathcal{G}_{AB}$, and any $0\le p \le 1$, it holds that $E(p\mathcal{E}_{AB}+(1-p) \mathcal{G}_{AB})\le pE(\mathcal{E}_{AB})+(1-p) E(\mathcal{G}_{AB})$.

\section{Conversion of \mcp{} to \mep{}}
Having established the concepts of measurement-cohering and \mep{}s of quantum channels, we now present our main result: quantum coherence can be converted into entanglement at the level of dynamics generating measurement resources as illustrated in Fig.~\ref{fig: mcgp to megp conversion}. First, we demonstrate that the \mcp{} of a quantum channel $\mathcal{E}_{A}$ serves as an upper bound on the \mep{} of a composite channel constructed from the quantum channel $\mathcal{E}_{A}$, without any additional measurement-cohering power:
\begin{theorem}\label{thm: mcgp bounds megp}
	Let $\mathcal{E}_{A}$ be a quantum channel. Then
	\begin{equation}
		E(\Delta_{AB}\circ\mathcal{K}_{AB}\circ \mathcal{E}_{A}\circ \mathcal{L}_{AB})\le C(\mathcal{E}_{A})
	\end{equation}
	for any $\dio$ channel $\mathcal{K}_{AB}$ and any unital $\dio $ channel $\mathcal{L}_{AB}$.
\end{theorem}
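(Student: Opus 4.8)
Write $\Phi_{\mathcal{N}}\coloneqq\Delta_{AB}\circ\mathcal{K}_{AB}\circ\mathcal{N}\circ\mathcal{L}_{AB}$ for the composite channel built from a channel $\mathcal{N}$ on $A$, so that the statement reads $E(\Phi_{\mathcal{E}})\le C(\mathcal{E}_{A})$. The plan is to push everything back to the identity $C(\mathcal{E}_{A})=C_{m}(\mathcal{I}_{A}\circ\mathcal{E}_{A})$ [Eq.~\eqref{eq: mcgp equals to mc from inc. meas.}] and the reduced formula~\eqref{eq: measurement relative entropy of coherence reduced}. The first ingredient is a distinguished $\dio$ channel: set $\mathcal{G}_{A}\coloneqq\Delta_{A}\circ\mathcal{E}_{A}\circ\Delta_{A}$. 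A one-line check of $\Delta_{A}\circ\mathcal{G}_{A}=\mathcal{G}_{A}=\Delta_{A}\circ\mathcal{G}_{A}\circ\Delta_{A}$ gives $\mathcal{G}_{A}\in\dio$, and since $\mathcal{G}_{A}^{\dagger}(\ketbra{i}{i}_{A})=\Delta_{A}(\mathcal{E}_{A}^{\dagger}(\ketbra{i}{i}_{A}))$, a direct computation from Eqs.~\eqref{eq: measurement relative entropy of coherence reduced} and~\eqref{eq: mcgp equals to mc from inc. meas.} (using $D(M\Vert\Delta_{A}(M))=S(\Delta_{A}(M))-S(M)$) yields
\begin{equation}\label{eq: C as sum of D}
	C(\mathcal{E}_{A})=\frac{1}{d}\sum_{i=0}^{d-1} D(\mathcal{E}_{A}^{\dagger}(\ketbra{i}{i}_{A})\Vert\mathcal{G}_{A}^{\dagger}(\ketbra{i}{i}_{A})).
\end{equation}

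Next I would use, as the free operation in the minimization defining $E(\Phi_{\mathcal{E}})$, the channel $\mathcal{F}_{AB}\coloneqq\Phi_{\mathcal{G}}$, i.e.\ the same composite with $\mathcal{G}_{A}$ in place of $\mathcal{E}_{A}$. Because $\mathcal{G}_{A}\in\dio$ lifts to a $\dio$ channel on $AB$ and $\dio$ is closed under composition, $\mathcal{K}_{AB}\circ\mathcal{G}_{A}\circ\mathcal{L}_{AB}\in\dio(AB)$; hence $\mathcal{F}_{AB}$ equals this channel sandwiched by $\Delta_{AB}$ on both sides, which is precisely a classical stochastic channel on the product incoherent basis of $AB$, realizable by local dephasing measurements, classical post-processing and re-preparation---so $\mathcal{F}_{AB}\in\clocc(A\!:\!B)$. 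Substituting $\mathcal{F}_{AB}$ into the definition of $E$ reduces the theorem to proving $D_{m}(\mathcal{M}\circ\Phi_{\mathcal{E}}\Vert\mathcal{M}\circ\Phi_{\mathcal{G}})\le C(\mathcal{E}_{A})$ for every $\mathcal{M}\in\cloccm$.

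To establish this I would fix $\mathcal{M}$ and first discard the pre-processing $\mathcal{L}_{AB}$: it is unital, so $\mathcal{L}_{AB}^{\dagger}$ is a channel, and monotonicity of $D_{m}$ under unital pre-processing reduces the claim to $D_{m}(\mathcal{Q}\circ\mathcal{E}_{A}\Vert\mathcal{Q}\circ\mathcal{G}_{A})\le C(\mathcal{E}_{A})$ for the effective measurement $\mathcal{Q}\coloneqq\mathcal{M}\circ\Delta_{AB}\circ\mathcal{K}_{AB}$. The crux is that $\mathcal{K}_{AB}\in\dio$ forces every POVM element of $\mathcal{Q}$ to be diagonal on $AB$: from $\mathcal{K}_{AB}^{\dagger}\circ\Delta_{AB}=\Delta_{AB}\circ\mathcal{K}_{AB}^{\dagger}\circ\Delta_{AB}$ one gets $Q_{x}=\sum_{j,k}\ketbra{j}{j}_{A}\otimes\ketbra{k}{k}_{B}\otimes(Q_{x})_{jk}$ with $(Q_{x})_{jk}\ge0$ acting on the ancillas. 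Then $\mathcal{E}_{A}^{\dagger}$ (resp.\ $\mathcal{G}_{A}^{\dagger}$), applied only to the $A$-slot, produces operators block-diagonal in the $B$-register, so their relative entropy splits over $k$; discarding a classical copy of the index $j$ (data processing), using $D(X\otimes Z\Vert Y\otimes Z)=\tr(Z)D(X\Vert Y)$, and summing over $x$ with $\sum_{x}Q_{x}=I$ collapses the whole expression exactly to the right-hand side of~\eqref{eq: C as sum of D}, the ancillary dimensions cancelling the normalization in $D_{m}$. Since this holds for all $\mathcal{M}$, the theorem follows.

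The step I expect to be the main obstacle is this last, uniform-in-$\mathcal{M}$ estimate: $\mathcal{M}$ can itself be highly entangled and cohering, so it is not obvious that $D_{m}(\mathcal{Q}\circ\mathcal{E}_{A}\Vert\mathcal{Q}\circ\mathcal{G}_{A})$ cannot exceed the purely \emph{local} quantity $C(\mathcal{E}_{A})$. What rescues it is that the trailing $\Delta_{AB}$ together with the detection-incoherence of $\mathcal{K}_{AB}$ jointly render $\mathcal{Q}$ diagonal on $AB$, so $B$ and all ancillas become passive classical registers and only the coherence that $\mathcal{E}_{A}$ writes onto the $A$-diagonal can survive; the remaining, more routine, work is the bookkeeping of the ancillary systems $\widetilde A,\widetilde B$ (so that their arbitrariness and the normalization of $D_{m}$ line up) and the verification that $\mathcal{F}_{AB}\in\clocc(A\!:\!B)$.
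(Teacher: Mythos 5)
Your proposal is correct, and it is organized differently from the paper's argument. The paper never fixes a comparison channel: it keeps the minimization over $\dio$ channels $\mathcal{F}_{A}$ from the definition of $C(\mathcal{E}_{A})$ and transforms the whole min--max into the one defining $E$ through a chain of monotonicity steps (incoherent measurements stay incoherent under post-composition with $\mathcal{K}_{AB}\in\dio$, data processing under the unital $\mathcal{L}_{AB}$, the identification of dephased $\cloccm$ measurements with incoherent measurements, a separate ancilla-removal lemma, and finally the observation that $\Delta_{AB}\circ\mathcal{K}_{AB}\circ\mathcal{F}_{A}\circ\mathcal{L}_{AB}\in\clocc(A\!:\!B)$). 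You instead commit at the outset to the explicit free channel built from $\mathcal{G}_{A}=\Delta_{A}\circ\mathcal{E}_{A}\circ\Delta_{A}$, use the closed formula $C(\mathcal{E}_{A})=\tfrac{1}{d}\sum_{i}D(\mathcal{E}_{A}^{\dagger}(\ketbra{i}{i}_{A})\Vert\Delta_{A}\circ\mathcal{E}_{A}^{\dagger}(\ketbra{i}{i}_{A}))$, and prove a single uniform bound over all $\cloccm$ measurements. What your route buys is a shorter, more self-contained argument in which the ancilla bookkeeping is done inline rather than via a standalone lemma; what the paper's route buys is that the same chain immediately dovetails with the converse direction (the CNOT construction) and reuses lemmas needed elsewhere. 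The technical core is the same in both: detection-incoherence of $\mathcal{K}_{AB}$ plus the trailing $\Delta_{AB}$ force every effective POVM element to be diagonal on $AB$, and then the relative entropy is estimated blockwise with $D(X\otimes Z\Vert Y\otimes Z)=\tr(Z)\,D(X\Vert Y)$ and $\sum_{x}Q_{x}=I$ --- this is essentially the content of the paper's ancilla-removal lemma, which you re-derive on the fly.

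One spot to tighten: the phrase ``discarding a classical copy of the index $j$'' is not literally available, since after $\mathcal{E}_{A}^{\dagger}$ acts on the $A$-slot no register records $j$, so the relative entropy does not split over $j$ exactly. You need either the subadditivity $D(\sum_{j}P_{j}\Vert\sum_{j}R_{j})\le\sum_{j}D(P_{j}\Vert R_{j})$ (the ``forget a flag'' inequality, which the paper also invokes), or equivalently to refine $\mathcal{Q}$ into the measurement with outcomes $(x,j,k)$ and elements $\ketbra{j}{j}_{A}\otimes\ketbra{k}{k}_{B}\otimes(Q_{x})_{jk}$ --- which coarse-grains back to $\mathcal{Q}$ by classical post-processing, so monotonicity of $D_{m}$ applies --- and only then split exactly. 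The inequality goes in the direction you need, so this is a matter of wording rather than a gap; with that fix, and your (correct) verification that $\Delta_{AB}\circ\mathcal{K}_{AB}\circ\mathcal{G}_{A}\circ\mathcal{L}_{AB}$ is a dephase--measure--and--reprepare channel and hence in $\clocc(A\!:\!B)$, the proof goes through.
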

\begin{figure}[th]
	\centering
	\includegraphics[width=0.9\linewidth]{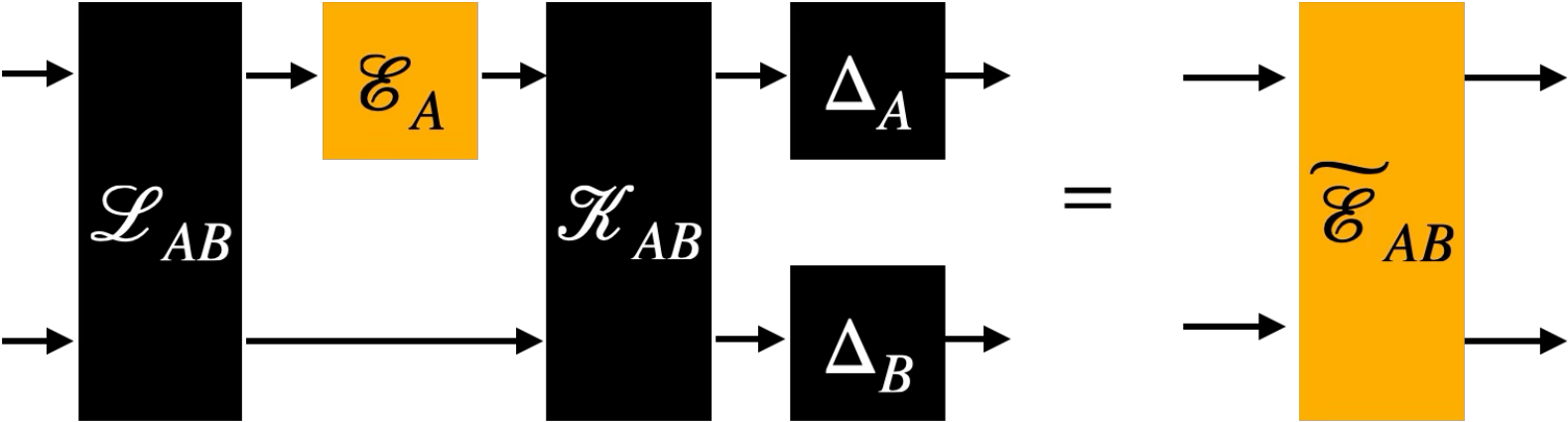}
	\caption{The \mcp{} of a quantum channel $\mathcal{E}_{A}$ can be converted to the \mep{} of a quantum channel $\widetilde{\mathcal{E}}_{AB}$ by composing it with a $\dio$ channel $\mathcal{K}_{AB}$, a unital $\dio$ channel $\mathcal{L}_{AB}$, and the dephasing channel $\Delta_{AB}$.}
	\label{fig: mcgp to megp conversion}
\end{figure}
Given that the channels surrounding the quantum channel $\mathcal{E}_{A}$ are $\dio$ channels, they do not contribute to the \mcp{} (see Fig.~\ref{fig: mcgp to megp conversion}). The dephasing channels in the post-processing serve to render the subsequent measurement incoherent and align the incoherent basis, thereby enabling the foregoing channels to convert the \mcp{} of the quantum channel $\mathcal{E}_{A}$ into the \mep{} in that incoherent basis. In contrast, when converting the \scp{} to the \sep{}, dephasing channels serve as a pre-processing step to ensure the input states are incoherent, as depicted in Table~\ref{fig: table of optimal conversions}~\cite{theurer2020QuantifyingDynamicalCoherence}. Consequently, Theorem~\ref{thm: mcgp bounds megp} implies that the \mcp{} of a quantum channel $\mathcal{E}_{A}$ is the sole source of the \mep{} of the composite quantum channel. Indeed, there exists a configuration that allows for complete conversion of the \mcp{} to the \mep{} although we restrict the pre-processing channel $\mathcal{L}_{AB}$ to be unital, as the following result shows (see Table~\ref{fig: table of optimal conversions}):
\begin{theorem}\label{thm: optimal mcgp conversion to megp}
	For a quantum channel $\mathcal{E}_{A}$, its \mcp{} can be fully converted to the \mep{} under a pre-processing by the adjoint channel of the generalized CNOT channel $\mathcal{U}_{\cnot}^{\dag}$ and a post-processing dephasing channel:
	\begin{equation}
		E(\Delta_{AB}\circ\mathcal{E}_{A} \circ \mathcal{U}_{\cnot}^{\dag})= C(\mathcal{E}_{A}),
	\end{equation}
	where $	U_{\cnot} = \sum_{i,j=0}^{d-1}\ketbra{i,i\oplus j}{ij}_{AB}$ with $\oplus$ denoting addition modulo $d$.
\end{theorem}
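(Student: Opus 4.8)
\emph{Proof plan.}
The inequality $E(\Delta_{AB}\circ\mathcal{E}_{A}\circ\mathcal{U}_{\cnot}^{\dag})\le C(\mathcal{E}_{A})$ is immediate from Theorem~\ref{thm: mcgp bounds megp}: $\mathcal{U}_{\cnot}^{\dag}$ is conjugation by the unitary $U_{\cnot}^{\dag}$, which permutes the product incoherent basis $\{\ket{ij}_{AB}\}$, so $\mathcal{U}_{\cnot}^{\dag}$ sends incoherent operators to incoherent operators, is unital, and satisfies $\Delta_{AB}\circ\mathcal{U}_{\cnot}^{\dag}=\Delta_{AB}\circ\mathcal{U}_{\cnot}^{\dag}\circ\Delta_{AB}$, i.e.\ it is a unital $\dio$ channel; with $\mathcal{K}_{AB}=\mathrm{id}_{AB}$ (which is $\dio$) and $\mathcal{L}_{AB}=\mathcal{U}_{\cnot}^{\dag}$, Theorem~\ref{thm: mcgp bounds megp} gives the bound. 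The reverse inequality is the substance.

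For ``$\ge$'', I would probe with the product incoherent-basis measurement $\mathcal{M}=\mathcal{I}_{A}\otimes\mathcal{I}_{B}$, a local measurement with trivial ancillas, hence $\mathcal{M}\in\cloccm(A\!:\!B)$. For every $\mathcal{F}\in\clocc(A\!:\!B)$ one has $\mathcal{M}\circ\mathcal{F}\in\cloccm$, so $D_{m}((\mathcal{M}\circ\Delta_{AB})\circ\mathcal{E}_{A}\circ\mathcal{U}_{\cnot}^{\dag}\Vert\mathcal{M}\circ\mathcal{F})\ge E_{m}((\mathcal{M}\circ\Delta_{AB})\circ\mathcal{E}_{A}\circ\mathcal{U}_{\cnot}^{\dag})$, and since $\mathcal{I}_{A}\otimes\mathcal{I}_{B}$ only reads product-basis populations, $\mathcal{M}\circ\Delta_{AB}=\mathcal{M}$. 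Thus definition~\eqref{eq: def. megp} reduces the claim to $E_{m}((\mathcal{I}_{A}\otimes\mathcal{I}_{B})\circ\mathcal{E}_{A}\circ\mathcal{U}_{\cnot}^{\dag})\ge C_{m}(\mathcal{I}_{A}\circ\mathcal{E}_{A})$, the right side being $C(\mathcal{E}_{A})$ by~\eqref{eq: mcgp equals to mc from inc. meas.}.

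I would then compute the POVM: writing $P_{k}:=\mathcal{E}_{A}^{\dag}(\ketbra{k}{k}_{A})$ for the POVM elements of $\mathcal{I}_{A}\circ\mathcal{E}_{A}$, and using $(\mathcal{U}_{\cnot}^{\dag})^{\dag}=\mathcal{U}_{\cnot}$, the measurement $(\mathcal{I}_{A}\otimes\mathcal{I}_{B})\circ\mathcal{E}_{A}\circ\mathcal{U}_{\cnot}^{\dag}$ has POVM $\{N_{kl}\}$ with $N_{kl}=W_{l}P_{k}W_{l}^{\dag}$, where $W_{l}\ket a:=\ket a_{A}\ket{a\oplus l}_{B}$ is an isometry, $W_{l}^{\dag}W_{l'}=\delta_{ll'}I_{A}$, and the ranges $\mathcal{H}_{l}$ decompose $\mathcal{H}_{AB}$ orthogonally. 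The key structural fact is that the only product vectors in each $\mathcal{H}_{l}$ are the $d$ states $\ket a_{A}\ket{a\oplus l}_{B}$, so each $\mathcal{H}_{l}$ is a maximally correlated subspace; since $W_{l}=(I_{A}\otimes X_{B}^{l})W_{0}$ with $X_{B}$ the incoherent cyclic shift and $W_{0}\ket a=\ket{aa}$, each $N_{kl}$ is a local-unitary image of the maximally correlated operator built from $P_{k}$. The classical POVM $F_{kl}:=W_{l}\Delta_{A}(P_{k})W_{l}^{\dag}$ lies in $\cloccm$, satisfies $\sum_{kl}F_{kl}=I_{AB}$, and yields $D_{m}(\{N_{kl}\}\Vert\{F_{kl}\})=C_{m}(\mathcal{I}_{A}\circ\mathcal{E}_{A})$, re-proving ``$\le$''; so the content is the matching lower bound $E_{m}(\{N_{kl}\})\ge C_{m}(\{P_{k}\})$. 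I would obtain it from a sum-form hashing/coherent-information estimate for separable POVMs, $\sum_{kl}D(N_{kl}\Vert F_{kl})\ge\sum_{kl}\big(S((N_{kl})_{B})-S(N_{kl})\big)$, combined with $(N_{kl})_{B}=\tr_{A}N_{kl}=X_{B}^{l}\,\Delta_{A}(P_{k})\,X_{B}^{-l}$, so that $S((N_{kl})_{B})-S(N_{kl})=S(\Delta_{A}P_{k})-S(P_{k})=D(P_{k}\Vert\Delta_{A}P_{k})$; the sum over the $d$ values of $l$ matches the $D_{m}$ normalization on $AB$ against that of $C_{m}$ on $A$ and delivers $E_{m}(\{N_{kl}\})\ge C_{m}(\mathcal{I}_{A}\circ\mathcal{E}_{A})$. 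An alternative is to route the argument through the equivalence between the \mep{} of $\Delta_{AB}\circ\mathcal{E}_{A}\circ\mathcal{U}_{\cnot}^{\dag}$ and the \sep{} of its adjoint $\mathcal{U}_{\cnot}\circ\mathcal{E}_{A}^{\dag}\circ\Delta_{AB}$, which by the state-level optimal conversion equals the \scp{} of $\mathcal{E}_{A}^{\dag}$, itself $C(\mathcal{E}_{A})$.

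The main obstacle is this lower bound, delicate precisely because of the contrast with coherence: for $C_{m}$ the per-element minimizers $\Delta_{A}(M_{x})$ already assemble into an incoherent POVM, so $C_{m}$ is an \emph{independent} sum, whereas for $E_{m}$ the per-element minimum of $D(N_{kl}\Vert\sigma)$ over separable $\sigma\le I$ can be as small as $(\tr P_{k})\log(\tr P_{k})+D(P_{k}\Vert\Delta_{A}P_{k})$, strictly below $D(P_{k}\Vert\Delta_{A}P_{k})$, so the joint constraint $\sum_{kl}F_{kl}=I_{AB}$ is indispensable and must be combined with the genuine maximally-correlated relative-entropy-of-entanglement identity $E_{r}(\rho_{\mathrm{mc}})=C_{r}(\rho)$, whose hard direction is the coherent-information bound. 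Naive shortcuts fail: projecting $F_{kl}$ onto $\mathcal{H}_{l}$ and pulling it back through $W_{l}$ need not produce an incoherent operator (e.g.\ $W_{0}^{\dag}\ketbra{{+}{+}}{{+}{+}}W_{0}\propto I+X$), whereas dephasing first destroys the very coherence of the $P_{k}$ one wants to convert. Reconciling the POVM normalization with the retained coherence of the $P_{k}$ — through the hashing-type estimate, or through the adjoint reduction — is where the weight of the argument lies.
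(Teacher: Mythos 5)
Your proposal follows essentially the same route as the paper: probe with the product incoherent-basis measurement (the paper reaches the same point via its structure lemmas for bipartite incoherent measurements and the no-ancilla lemma), identify the resulting POVM elements $U_{\cnot}\bigl(\mathcal{E}_{A}^{\dag}(\ketbra{k}{k}_{A})\otimes\ketbra{l}{l}_{B}\bigr)U_{\cnot}^{\dag}$ as isometric, maximally correlated embeddings of $P_{k}=\mathcal{E}_{A}^{\dag}(\ketbra{k}{k}_{A})$, and lower-bound the divergence from separable POVM elements by the coherent-information quantity $S(\Delta_{A}(P_{k}))-S(P_{k})$, which is exactly the paper's Lemma bounding $E_{R}(X_{AB})\ge S(X_{B})-S(X_{AB})$ and sums to $C(\mathcal{E}_{A})$. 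The only caveat is your parenthetical alternative via the adjoint-channel equivalence, which cannot recover the exact equality because those state/measurement equivalences hold only up to dimensional factors; your main route is the correct one and coincides with the paper's.
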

\begin{table*}[tb]
	\centering
	\renewcommand{\arraystretch}{1.5} 
	\setlength{\tabcolsep}{12pt} 
	\begin{tabular}{|c|c||c|}
		\hline
		& \textbf{State-resource } & \textbf{Measurement-resource} \\
		\hline
		\textbf{Static} &
		\begin{minipage}{0.35\textwidth}
			\centering
			\vspace{5pt} 
			\includegraphics[width=0.8\textwidth]{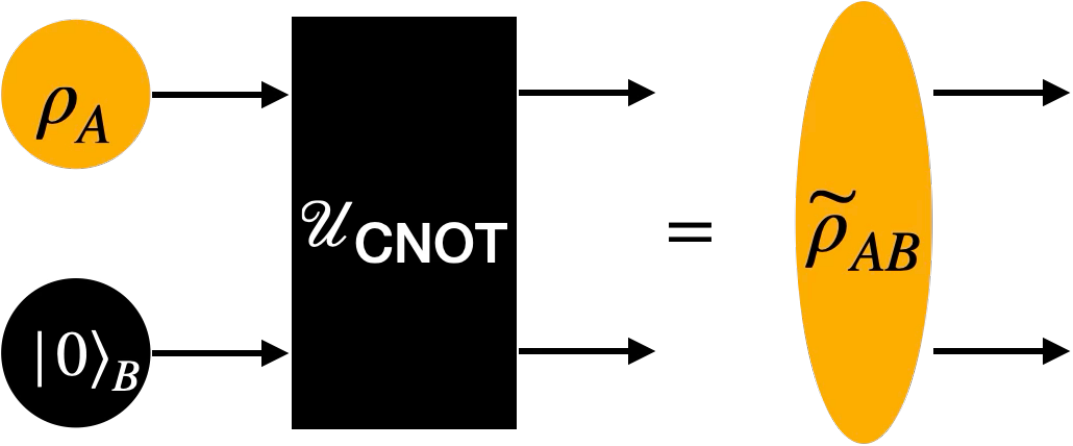}
			\vspace{5pt} 
		\end{minipage} &
		\begin{minipage}{0.38\textwidth}
			\centering
			\vspace{5pt} 
			\includegraphics[width=0.9\textwidth]{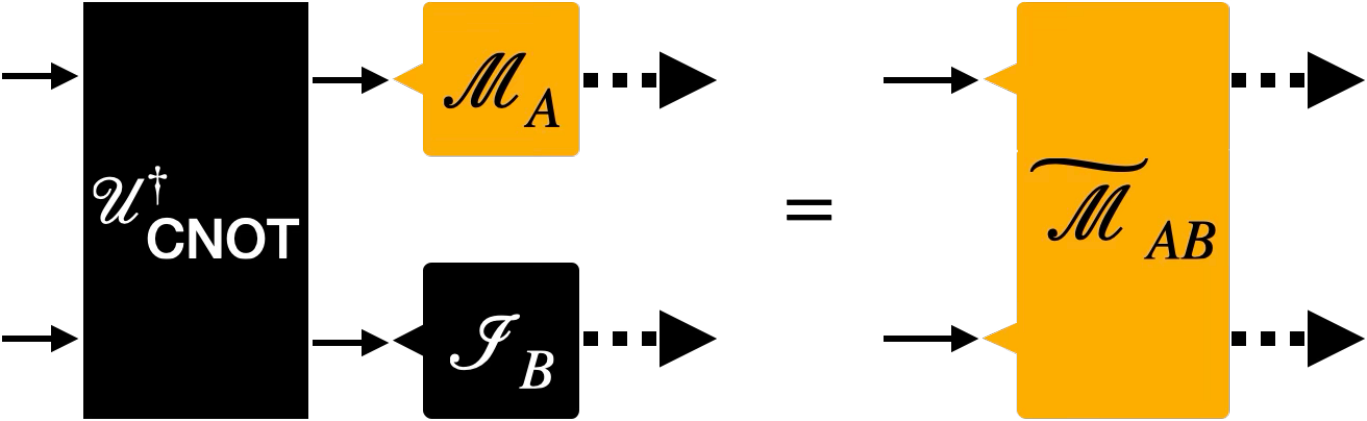}
			\vspace{5pt} 
		\end{minipage} \\
		\hline
		\textbf{Dynamic} &
		\begin{minipage}{0.36\textwidth}
			\centering
			\vspace{10pt} 
			\includegraphics[width=1.0\textwidth]{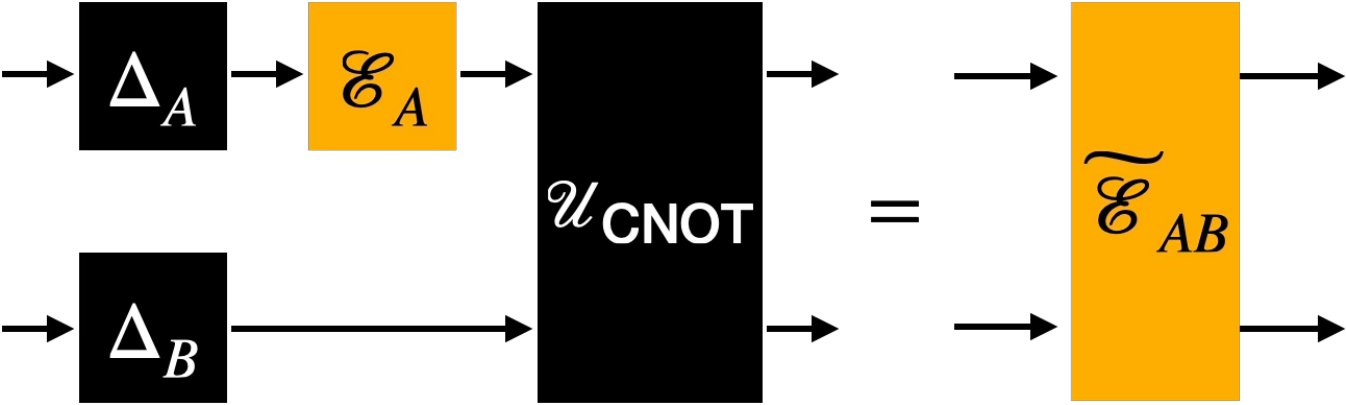}
			\vspace{1pt} 
		\end{minipage} &
		\begin{minipage}{0.38\textwidth}
			\centering
			\vspace{5pt} 
			\includegraphics[width=0.9\textwidth]{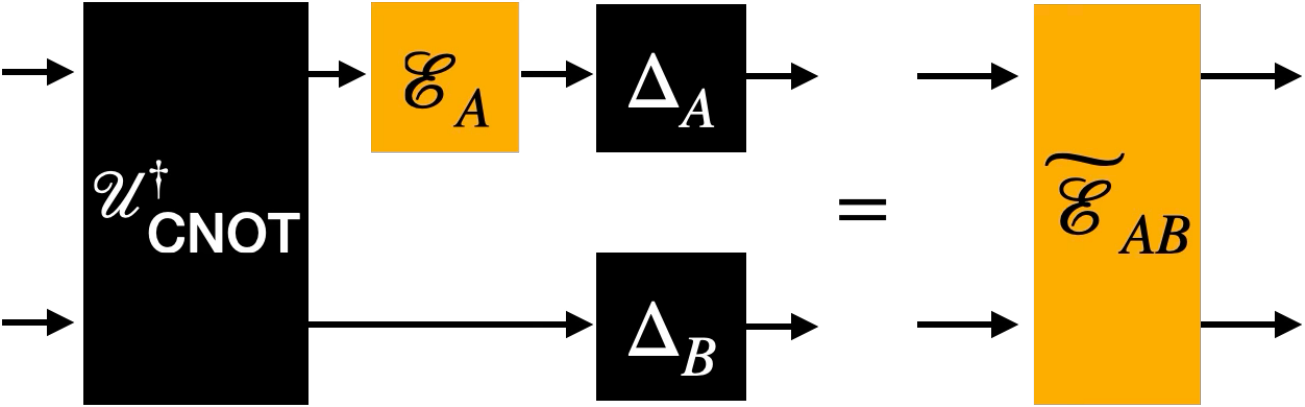}
			\vspace{5pt} 
		\end{minipage} \\
		\hline
	\end{tabular}
	\caption{Illustration of the optimal conversion of quantum coherence into entanglement at different levels of resource theory. The diagrams, arranged from the top left to the bottom left in a clockwise direction, depict: (i) the optimal conversion of quantum coherence of a quantum state $\rho_A$ into entanglement of a quantum state $\widetilde{\rho}_{AB}$~\cite{streltsov2015MeasuringCoherenceEntanglement}; (ii) the optimal conversion of measurement-coherence of a quantum measurement $\mathcal{M}_{A}$ into measurement-entanglement of a quantum measurement $\widetilde{\mathcal{M}}_{AB}$~\cite{hjkim2022RelationQuantumCoherence}; (iii) the optimal conversion of the \mcp{} of a quantum channel $\mathcal{E}_{A}$ into the \mep{} of a quantum channel $\widetilde{\mathcal{E}}_{AB}$ (this work); and (iv) the optimal conversion of the \scp{} of a quantum channel $\mathcal{E}_{A}$ into the \sep{} of a quantum channel $\widetilde{\mathcal{E}}_{AB}$~\cite{theurer2020QuantifyingDynamicalCoherence}.}
	\label{fig: table of optimal conversions}
\end{table*}
This result completes the picture of quantum coherence convertible to quantum entanglement at the dynamical level, specifically by addressing the previously unexplored role of quantum channels on the half of the static resources: measurement-resources. Measurement resources constitute half of the static resources in quantum information—complementing the widely studied state resources—yet have received comparatively less attention. This critical balance between measurement and state resources is fundamental, as state-coherence itself is inherently incomplete without coherent measurement to observe its effects~\cite{theurer2019QuantifyingOperationsApplication,narasimhachar2025CoherentMeasurementCost}. Similarly, measurement-entanglement is crucial in achieving exponential quantum advantage in quantum learning~\cite{hshuang2021InformationTheoreticBoundsQuantum,aharonov2022QuantumAlgorithmicMeasurement,schen2022ExponentialSeparationsLearning,schen2024OptimalTradeoffEntanglement}.

The convertibility of the \mcp{} to the \mep{} implies that a \mep{} monotone induces a \mcp{} monotone, as demonstrated by the following result:
\begin{theorem}
	A \mep{} monotone $\widetilde{E}$ induces a \mcp{} monotone $\widetilde{C}$ given by
	\begin{equation}
		\widetilde{C} (\mathcal{E}_{A}) = \max_{\mathcal{K}_{AB},\mathcal{L}_{AB}} \widetilde{E}(\Delta_{AB}\circ \mathcal{K}_{AB}\circ \mathcal{E}_{A}\circ \mathcal{L}_{AB}),
	\end{equation}
	where the maximization is over $\dio$ channels $\mathcal{K}_{AB}$ and unital $\dio $ channels $\mathcal{L}_{AB}$.
\end{theorem}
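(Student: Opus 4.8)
The plan is to verify that $\widetilde{C}$, as defined, inherits the three defining properties listed above for $C$: non-negativity together with faithfulness, monotonicity under composition with a $\dio$ channel and a unital $\dio$ channel, and convexity. The only facts about $\widetilde{E}$ I would need are that it is non-negative, vanishes on $\clocc(A\!:\!B)$, is convex, and — for faithfulness of $\widetilde{C}$ — is strictly positive off $\clocc(A\!:\!B)$; I would also use compactness of the sets of $\dio$ and unital $\dio$ channels so that the maximum is attained (otherwise everything below goes through verbatim with $\sup$ in place of $\max$). The substantive input is entirely supplied by the already-established Theorems~\ref{thm: mcgp bounds megp} and~\ref{thm: optimal mcgp conversion to megp}; the rest is structural.

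For non-negativity I would note that $\mathrm{id}_{AB}$ is both a $\dio$ channel and a unital $\dio$ channel, so the feasible set is non-empty and $\widetilde{C}(\mathcal{E}_A)\ge 0$ follows from $\widetilde{E}\ge 0$. For faithfulness, suppose first $\mathcal{E}_A\in\dio$. By Theorem~\ref{thm: mcgp bounds megp}, $E(\Delta_{AB}\circ\mathcal{K}_{AB}\circ\mathcal{E}_A\circ\mathcal{L}_{AB})\le C(\mathcal{E}_A)=0$ for every $\dio$ channel $\mathcal{K}_{AB}$ and unital $\dio$ channel $\mathcal{L}_{AB}$ (the equality using faithfulness of $C$); since $E\ge 0$ and $E$ is faithful, $\Delta_{AB}\circ\mathcal{K}_{AB}\circ\mathcal{E}_A\circ\mathcal{L}_{AB}\in\clocc(A\!:\!B)$, so $\widetilde{E}$ vanishes on it and $\widetilde{C}(\mathcal{E}_A)=0$. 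Conversely, if $\mathcal{E}_A\notin\dio$, I would take $\mathcal{K}_{AB}=\mathrm{id}_{AB}$ and $\mathcal{L}_{AB}=\mathcal{U}_{\cnot}^{\dag}$, which is unital (being a unitary channel) and $\dio$ (since $\mathcal{U}_{\cnot}$ permutes the product basis, so $\mathcal{U}_{\cnot}^{\dag}$ commutes with $\Delta_{AB}$). Theorem~\ref{thm: optimal mcgp conversion to megp} then gives $E(\Delta_{AB}\circ\mathcal{E}_A\circ\mathcal{U}_{\cnot}^{\dag})=C(\mathcal{E}_A)>0$, hence this channel lies outside $\clocc(A\!:\!B)$, and faithfulness of $\widetilde{E}$ yields $\widetilde{C}(\mathcal{E}_A)\ge\widetilde{E}(\Delta_{AB}\circ\mathcal{E}_A\circ\mathcal{U}_{\cnot}^{\dag})>0$. (If ``\mep{} monotone'' is read without faithfulness, this last step is simply dropped and $\widetilde{C}$ remains a convex monotone vanishing on $\dio$.)

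For monotonicity, let $\mathcal{K}_A'$ be a $\dio$ channel and $\mathcal{L}_A'$ a unital $\dio$ channel on $A$. Expanding $\widetilde{C}(\mathcal{K}_A'\circ\mathcal{E}_A\circ\mathcal{L}_A')$ and regrouping, every feasible term equals $\widetilde{E}\bigl(\Delta_{AB}\circ[\mathcal{K}_{AB}\circ(\mathcal{K}_A'\otimes\mathrm{id}_B)]\circ\mathcal{E}_A\circ[(\mathcal{L}_A'\otimes\mathrm{id}_B)\circ\mathcal{L}_{AB}]\bigr)$. The key observations are that $\mathcal{K}_A'\otimes\mathrm{id}_B$ is a $\dio$ channel and $\mathcal{L}_A'\otimes\mathrm{id}_B$ a unital $\dio$ channel on $AB$ — this follows from $\Delta_{AB}=\Delta_A\otimes\Delta_B$ together with the characterization $\Delta\circ\mathcal{X}=\Delta\circ\mathcal{X}\circ\Delta$ and preservation of unitality under tensoring with $\mathrm{id}_B$ — and that $\dio$ and unital $\dio$ channels are closed under composition (for $\dio$ this is immediate from its description as the largest class of pre-processings keeping incoherent measurements incoherent; for unitality it is trivial). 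Hence the two bracketed channels are $\dio$ and unital $\dio$ on $AB$, so each such term appears in the maximization defining $\widetilde{C}(\mathcal{E}_A)$ and is bounded by it; taking the maximum over $\mathcal{K}_{AB},\mathcal{L}_{AB}$ gives $\widetilde{C}(\mathcal{K}_A'\circ\mathcal{E}_A\circ\mathcal{L}_A')\le\widetilde{C}(\mathcal{E}_A)$. Note this uses no property of $\widetilde{E}$ at all, only the ``maximize over free surroundings'' form of the definition. Convexity is handled similarly: since composition is linear, $\Delta_{AB}\circ\mathcal{K}_{AB}\circ(p\mathcal{E}_A+(1-p)\mathcal{G}_A)\circ\mathcal{L}_{AB}$ is the convex combination, with weights $p$ and $1-p$, of $\Delta_{AB}\circ\mathcal{K}_{AB}\circ\mathcal{E}_A\circ\mathcal{L}_{AB}$ and $\Delta_{AB}\circ\mathcal{K}_{AB}\circ\mathcal{G}_A\circ\mathcal{L}_{AB}$; applying convexity of $\widetilde{E}$, then bounding the two summands by $\widetilde{C}(\mathcal{E}_A)$ and $\widetilde{C}(\mathcal{G}_A)$ respectively (the same $\mathcal{K}_{AB},\mathcal{L}_{AB}$ being feasible for both), and finally maximizing over $\mathcal{K}_{AB},\mathcal{L}_{AB}$ gives $\widetilde{C}(p\mathcal{E}_A+(1-p)\mathcal{G}_A)\le p\,\widetilde{C}(\mathcal{E}_A)+(1-p)\,\widetilde{C}(\mathcal{G}_A)$.

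I expect the main obstacle to be not any single calculation but the strict-positivity half of faithfulness, $\mathcal{E}_A\notin\dio\Rightarrow\widetilde{C}(\mathcal{E}_A)>0$: establishing it genuinely requires a \emph{complete} conversion of \mcp{} into \mep{}, i.e.\ Theorem~\ref{thm: optimal mcgp conversion to megp}, whereas the bound of Theorem~\ref{thm: mcgp bounds megp} alone would only yield that $\widetilde{C}$ vanishes on $\dio$. The remaining work — that $\dio$ and unital $\dio$ channels are stable under composition and under tensoring with $\mathrm{id}_B$, and that the maximization set is non-empty and compact — is routine but is precisely what makes the monotonicity and convexity arguments close.
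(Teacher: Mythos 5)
Your proposal is correct and follows the same overall architecture as the paper's proof (check the three monotone properties; use Theorem~\ref{thm: optimal mcgp conversion to megp} with $\mathcal{K}_{AB}=\mathsf{id}_{AB}$, $\mathcal{L}_{AB}=\mathcal{U}_{\cnot}^{\dag}$ for strict positivity off $\dio$; absorb $\mathcal{K}'_{A},\mathcal{L}'_{A}$ into the maximization domain for monotonicity; convexity from convexity of $\widetilde{E}$). The one place you genuinely diverge is the direction $\mathcal{E}_{A}\in\dio\Rightarrow\widetilde{C}(\mathcal{E}_{A})=0$: the paper argues structurally, showing that $\Delta_{AB}\circ\mathcal{K}_{AB}\circ\mathcal{E}_{A}\circ\mathcal{L}_{AB}$ equals its own composition with a pre-dephasing (since $\mathcal{K}_{AB}\circ\mathcal{E}_{A}\circ\mathcal{L}_{AB}\in\dio$) and that any such dephasing-sandwiched channel is a measure-and-prepare channel, hence in $\clocc(A\!:\!B)$ (Proposition~\ref{sup_thm: incoh. meas. sandwiched ch. is locc}), so every $\clocc$-vanishing $\widetilde{E}$ kills it; you instead infer $\clocc$-membership indirectly, from the bound $E(\Delta_{AB}\circ\mathcal{K}_{AB}\circ\mathcal{E}_{A}\circ\mathcal{L}_{AB})\le C(\mathcal{E}_{A})=0$ of Theorem~\ref{thm: mcgp bounds megp} together with the previously established faithfulness of the specific relative-entropy measure $E$. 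Both routes are valid given results already proved in the paper; the paper's is self-contained and exhibits the free (LOCC) implementation explicitly, while yours economizes by recycling faithfulness of $E$ and needs no new structural fact. Your explicit verification that $\mathcal{K}'_{A}\otimes\mathsf{id}_{B}$ and $\mathcal{L}'_{A}\otimes\mathsf{id}_{B}$ remain ($\dio$, resp.\ unital $\dio$) and that these classes are closed under composition fills in details the paper states only as ``reduces the domain of maximization,'' and your caveat that strict positivity requires faithfulness of $\widetilde{E}$ (implicitly assumed by the paper's definition of a \mep{} monotone) is accurate.
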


Our results reproduce those of static measurement-resources when considering measurement-resource powers for a quantum measurement described as a quantum channel, as in Eq.~\eqref{eq: measurement as a channel}. A quantum measurement $\mathcal{N}_{A}$ is a quantum channel whose output is incoherent, satisfying that $\mathcal{N}_{A}=\Delta_{A}\circ \mathcal{N}_{A}$. This, along with the observation that the dephasing channel corresponds to the incoherent basis measurement, leads to the following result:
\begin{theorem}
	For quantum measurements, the measurement-resource powers reduce to the relative entropies of measurement-resources: for a quantum measurement $\mathcal{N}_{A}$ and a bipartite quantum measurement $\mathcal{N}_{AB}$,
	\begin{equation}
		C(\mathcal{N}_{A})=C_{m}(\mathcal{N}_{A}),\quad E(\mathcal{N}_{AB})=E_{m}(\mathcal{N}_{AB}).
	\end{equation}
\end{theorem}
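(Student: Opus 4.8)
\emph{Proof proposal.} I would handle the two equalities separately, the coherence one being essentially immediate and the entanglement one requiring a short two-sided argument. For $C(\mathcal{N}_A)=C_m(\mathcal{N}_A)$, apply Eq.~\eqref{eq: mcgp equals to mc from inc. meas.} to $\mathcal{E}_A=\mathcal{N}_A$: since the incoherent-basis measurement coincides with the dephasing channel, $\mathcal{I}_A=\Delta_A$, we get $C(\mathcal{N}_A)=C_m(\mathcal{I}_A\circ\mathcal{N}_A)=C_m(\Delta_A\circ\mathcal{N}_A)$, and a quantum measurement satisfies $\Delta_A\circ\mathcal{N}_A=\mathcal{N}_A$, whence $C(\mathcal{N}_A)=C_m(\mathcal{N}_A)$. (The same identity also follows from the two-sided argument below with $\cloccm,\clocc,\mathcal{I}_{AB}$ replaced by $\imeas,\dio,\mathcal{I}_A$ and using that every incoherent measurement is, as a channel, a $\dio$ channel.)

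Since no counterpart of Eq.~\eqref{eq: mcgp equals to mc from inc. meas.} exists for entanglement, I would prove $E(\mathcal{N}_{AB})=E_m(\mathcal{N}_{AB})$ by bounding the definition~\eqref{eq: def. megp} from both sides. For ``$\ge$'', fix any $\mathcal{F}_{AB}\in\clocc(A\!:\!B)$ and lower-bound the inner maximum by its value at the incoherent-basis measurement $\mathcal{M}=\mathcal{I}_{AB}$ on $AB$ with trivial ancillas, which is a local and hence $\cloccm$ measurement. Because $\mathcal{N}_{AB}$ has incoherent output, $\mathcal{I}_{AB}\circ\mathcal{N}_{AB}=\mathcal{N}_{AB}$, while $\mathcal{I}_{AB}\circ\mathcal{F}_{AB}\in\cloccm(A\!:\!B)$ as a composition of a $\clocc$ channel with a local measurement; hence the inner maximum is at least $D_m(\mathcal{N}_{AB}\Vert\mathcal{I}_{AB}\circ\mathcal{F}_{AB})\ge E_m(\mathcal{N}_{AB})$, and minimizing over $\mathcal{F}_{AB}$ gives $E(\mathcal{N}_{AB})\ge E_m(\mathcal{N}_{AB})$.

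For ``$\le$'', let $\mathcal{G}^{*}\in\cloccm(A\!:\!B)$ attain $E_m(\mathcal{N}_{AB})$; as an LOCC measurement it is in particular an LOCC channel, so $\mathcal{G}^{*}\in\clocc(A\!:\!B)$, and I would take $\mathcal{F}_{AB}=\mathcal{G}^{*}$ in~\eqref{eq: def. megp}. A direct computation shows that for any $\mathcal{M}\in\cloccm(A\widetilde{A}\!:\!B\widetilde{B})$ with POVM $\{M_y\}$, the measurements $\mathcal{M}\circ(\mathcal{N}_{AB}\otimes\mathrm{id}_{\widetilde{A}\widetilde{B}})$ and $\mathcal{M}\circ(\mathcal{G}^{*}\otimes\mathrm{id}_{\widetilde{A}\widetilde{B}})$ have POVM elements $\widetilde{N}_y=\sum_x N_x\otimes M_y^{(x)}$ and $\widetilde{G}_y=\sum_x G_x\otimes M_y^{(x)}$, where $\{N_x\},\{G_x\}$ are the POVMs of $\mathcal{N}_{AB},\mathcal{G}^{*}$ and $M_y^{(x)}:=\bra{x}M_y\ket{x}$ is a POVM on $\widetilde{A}\widetilde{B}$ for each outcome $x$ of $\mathcal{N}_{AB}$. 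It then remains to establish the data-processing bound $d_{\widetilde{A}\widetilde{B}}^{-1}\sum_y D(\widetilde{N}_y\Vert\widetilde{G}_y)\le\sum_x D(N_x\Vert G_x)$, since dividing by $d_{AB}$ turns it into $D_m(\mathcal{M}\circ(\mathcal{N}_{AB}\otimes\mathrm{id})\Vert\mathcal{M}\circ(\mathcal{G}^{*}\otimes\mathrm{id}))\le D_m(\mathcal{N}_{AB}\Vert\mathcal{G}^{*})=E_m(\mathcal{N}_{AB})$, and maximizing over $\mathcal{M}$ with $\mathcal{F}_{AB}=\mathcal{G}^{*}$ yields $E(\mathcal{N}_{AB})\le E_m(\mathcal{N}_{AB})$. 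To prove that bound I would form the block-diagonal operators $\widehat{N}=\sum_x N_x\otimes\ketbra{x}{x}$ and $\widehat{G}=\sum_x G_x\otimes\ketbra{x}{x}$ on the classical register of $\mathcal{N}_{AB}$, for which $D(\widehat{N}\Vert\widehat{G})=\sum_x D(N_x\Vert G_x)$, and exhibit a completely positive and \emph{trace-preserving} map $\Phi$ with $\Phi(\ketbra{x}{x})=d_{\widetilde{A}\widetilde{B}}^{-1}\sum_y M_y^{(x)}\otimes\ketbra{y}{y}$, so that $(\mathrm{id}\otimes\Phi)(\widehat{N})=d_{\widetilde{A}\widetilde{B}}^{-1}\sum_y\widetilde{N}_y\otimes\ketbra{y}{y}$ and likewise for $\widehat{G}$; monotonicity of the quantum relative entropy under $\mathrm{id}\otimes\Phi$, together with $D(cX\Vert cY)=c\,D(X\Vert Y)$, then gives the bound. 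Trace-preservation of $\Phi$ rests on $\sum_y M_y^{(x)}=I_{\widetilde{A}\widetilde{B}}$.

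The step I expect to be the main obstacle is this last one: the normalization factor $d_{\widetilde{A}\widetilde{B}}^{-1}$ built into $D_m$ has no counterpart on the right-hand side, so the inequality only closes because $\Phi$ can be chosen genuinely trace-preserving rather than merely completely positive. Beyond that, the argument requires only routine verifications: the composite POVM-element formulas, the complete positivity of $\Phi$, and the membership of the auxiliary objects $\mathcal{I}_{AB}$, $\mathcal{I}_{AB}\circ\mathcal{F}_{AB}$, and the channel view of $\mathcal{G}^{*}$ in the claimed free sets, including under the closures $\cloccm$ and $\clocc$.
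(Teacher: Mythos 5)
Your proposal is correct, and its coherence half is exactly the paper's argument: apply Eq.~\eqref{eq: mcgp equals to mc from inc. meas.} and use $\mathcal{I}_{A'}\circ\mathcal{N}_{A}=\mathcal{N}_{A}$. The entanglement half follows the same two-sided strategy as the paper but with different supporting machinery. The paper obtains both inequalities by inserting dephasings and invoking its structural results—Lemma~\ref{sup_thm: no need for anc. meas.} to strip the ancillas $\widetilde{A}\widetilde{B}$ and Corollary~\ref{sup_thm: structure of a bipartite IM} with post-processing monotonicity to collapse the inner maximization—restricting the minimization from $\clocc$ to $\cloccm$ for ``$\le$'' and the maximization to $\mathcal{M}\circ\Delta_{AB}$ for ``$\ge$''. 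You instead get ``$\ge$'' by evaluating the inner maximum at the single local measurement $\mathcal{I}_{AB}$, and ``$\le$'' by inserting the $E_m$-minimizer $\mathcal{G}^{*}\in\cloccm\subseteq\clocc$ into \eqref{eq: def. megp} and proving the data-processing bound $D_{m}(\mathcal{M}\circ(\mathcal{N}_{AB}\otimes\mathrm{id})\Vert\mathcal{M}\circ(\mathcal{G}^{*}\otimes\mathrm{id}))\le D_{m}(\mathcal{N}_{AB}\Vert\mathcal{G}^{*})$ via the block embedding $\widehat{N}=\sum_x N_x\otimes\ketbra{x}{x}$ and a measure-and-prepare CPTP map $\Phi$; your normalization bookkeeping and the trace-preservation of $\Phi$ from $\sum_y M_y^{(x)}=I_{\widetilde{A}\widetilde{B}}$ are exactly what make this close. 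Your route is self-contained—it re-derives, specialized to measurement channels, the content the paper draws from Lemma~\ref{sup_thm: no need for anc. meas.}—and is marginally stronger since your bound holds for arbitrary POVMs $\mathcal{M}$, not only $\cloccm$ ones; the paper's route buys economy by reusing lemmas already needed for the main conversion theorems. For a final write-up, just state explicitly the attainment (or an $\epsilon$-argument) of the minimizer $\mathcal{G}^{*}$ over the closed set $\cloccm$ and the fact—also used implicitly by the paper—that an element of $\cloccm(A\!:\!B)$, viewed as a quantum-to-classical channel, lies in $\clocc(A\!:\!B)$.
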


We now proceed to investigate the relation between the resource powers of quantum channels with respect to both static quantum resources.

\section{State-resource power versus measurement-resource power}
A quantum channel $\mathcal{E}_{A}$ can generate static quantum resources in  both quantum states and measurements. The \scp{} of $\mathcal{E}_{A}$, quantifying the maximum coherence the channel can induce in incoherent states, is given by
\begin{equation}
	C_{g}(\mathcal{E}_{A})\coloneqq \max_{\sigma_{A\widetilde{A}}\in \mathbf{I}} C_{R}(\mathcal{E}_{A}(\sigma_{A\widetilde{A}})),
\end{equation}
where $C_{R}(\rho_{A})\coloneqq \min_{\sigma_{A}\in \mathbf{I}} D(\rho_{A}\Vert \sigma_{A} )$ is the relative entropy of coherence~\cite{winter2016OperationalResourceCoherence}, and $\mathbf{I}$ denotes the set of incoherent states. The \mcp{} of a channel $\mathcal{E}_A$ is given in Eq.~\eqref{eq: def. mcgp}.

Are there relations between the two static resource powers of a quantum channel? In general, there is no relation between a channel's state-resource power and its measurement-resource power. Consider a qubit preparation channel $\mathcal{E}_{A}(\rho_{A}) = \tr_{A}(\rho_{A})\ketbra{+}{+}_{A}$ with $\ket{+}_{A}=(\ket{0}_{A}+\ket{1}_{A})/\sqrt{2}$. While $\mathcal{E}_{A}$ has the maximum state-cohering power ($C_g(\mathcal{E}_{A})=1$), it has zero measurement-cohering power ($C(\mathcal{E}_{A})=0$) because the effective measurement $\mathcal{I}_{A} \circ \mathcal{E}_{A}$ yields a trivial measurement whose POVM element is $I/2$. Another quantum channel $\mathcal{G}_{A}(\rho_{A})=\bra{+}\rho_{A}\ket{+}_{A}\ketbra{0}{0}_{A}+\bra{-}\rho_{A}\ket{-}_{A}\ketbra{1}{1}_{A}$ shows the opposite case, where it can generate no state-coherence as $C_{g}(\mathcal{G}_{A})=0$ while it can generate the maximum measurement-coherence $C(\mathcal{G}_{A})=1$. Meanwhile, the Hadamard gate possesses the maximum state-cohering power (converting incoherent basis states into maximally coherent states) and also the maximum measurement-cohering power (converting the incoherent measurement $\mathcal{I}$ into a measurement in the Fourier basis). However, for unital quantum channels, we establish a connection between the \scp{} and the \mcp{}:
\begin{theorem}
	For a unital quantum channel $\mathcal{E}_{A}$, the \scp{} $ C_{g}(\mathcal{E}_{A})$ and the \mcp{} $C(\mathcal{E}_{A}^{\dag})$ of its adjoint channel $\mathcal{E}_{A}^{\dag}$ satisfy the following:
	\begin{equation}
		\dfrac{1}{d}C_{g}(\mathcal{E}_{A}) \le C(\mathcal{E}_{A}^{\dag})\le C_{g}(\mathcal{E}_{A}).
	\end{equation}
\end{theorem}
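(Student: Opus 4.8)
The plan is to collapse both $C_g(\mathcal{E}_{A})$ and $C(\mathcal{E}_{A}^{\dag})$ onto the same family of single-system quantities---the relative entropies of coherence $C_R(\mathcal{E}_{A}(\ketbra{i}{i}_{A}))$, $i=0,\dots,d-1$---and then read off the two inequalities from the elementary fact that the average of $d$ nonnegative reals lies between $1/d$ of their maximum and their maximum.

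First I would record that $\mathcal{E}_{A}^{\dag}$ is again a unital channel (trace preservation of $\mathcal{E}_{A}$ makes its adjoint unital, unitality of $\mathcal{E}_{A}$ makes its adjoint trace preserving, and complete positivity passes to adjoints), so $C(\mathcal{E}_{A}^{\dag})$ is well defined and, by Theorem~2, equals $C_m(\mathcal{I}_{A}\circ\mathcal{E}_{A}^{\dag})$. Since pre-composing a measurement with a channel acts by that channel's adjoint on the POVM elements, the effects of $\mathcal{I}_{A}\circ\mathcal{E}_{A}^{\dag}$ are $M_i=(\mathcal{E}_{A}^{\dag})^{\dag}(\ketbra{i}{i}_{A})=\mathcal{E}_{A}(\ketbra{i}{i}_{A})$; these are genuine density operators---images of pure states under a channel---and $\sum_i M_i=\mathcal{E}_{A}(I_{A})=I_{A}$ by unitality, so $\mathcal{I}_{A}\circ\mathcal{E}_{A}^{\dag}$ is a legitimate $d$-outcome POVM. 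Plugging this into the closed formula Eq.~\eqref{eq: measurement relative entropy of coherence reduced}, and using that $S(\Delta_{A}(M_i))-S(M_i)$ is exactly the relative entropy of coherence of the state $M_i$, I get
\begin{equation}
	C(\mathcal{E}_{A}^{\dag})=\frac{1}{d}\sum_{i=0}^{d-1}C_R\bigl(\mathcal{E}_{A}(\ketbra{i}{i}_{A})\bigr).
\end{equation}

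Next I would rewrite $C_g(\mathcal{E}_{A})$ in the same terms. Because $C_R$ is convex and continuous and $\mathcal{E}_{A}$ is affine, the map $\sigma_{A\widetilde{A}}\mapsto C_R(\mathcal{E}_{A}(\sigma_{A\widetilde{A}}))$ is convex on the compact convex set of incoherent states, hence attains its maximum at an extreme point, i.e.\ at a product basis state $\ketbra{i}{i}_{A}\otimes\ketbra{j}{j}_{\widetilde{A}}$; and tensoring $\mathcal{E}_{A}(\ketbra{i}{i}_{A})$ with an incoherent pure state does not change its relative entropy of coherence, so the ancilla drops out and $C_g(\mathcal{E}_{A})=\max_i C_R(\mathcal{E}_{A}(\ketbra{i}{i}_{A}))$. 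With both quantities expressed through $\{C_R(\mathcal{E}_{A}(\ketbra{i}{i}_{A}))\}_{i=0}^{d-1}$, the upper bound $C(\mathcal{E}_{A}^{\dag})\le C_g(\mathcal{E}_{A})$ is immediate since an average never exceeds a maximum, and the lower bound $\tfrac{1}{d}C_g(\mathcal{E}_{A})\le C(\mathcal{E}_{A}^{\dag})$ follows since each term is nonnegative, so the average is at least $1/d$ of the largest term.

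I expect the only steps requiring genuine care to be the two structural reductions: identifying the POVM elements of $\mathcal{I}_{A}\circ\mathcal{E}_{A}^{\dag}$ via the adjoint action of a pre-processing channel, and justifying that in $C_g$ both the ancilla $\widetilde{A}$ and any mixedness of the optimal incoherent input can be stripped away so that only incoherent basis states matter. Once these are settled the theorem is simply the statement that the maximum and the average of the $d$ values $C_R(\mathcal{E}_{A}(\ketbra{i}{i}_{A}))$ differ by at most a factor $d$, with the two extremes---one value nonzero, or all values equal---showing the bounds are individually tight.
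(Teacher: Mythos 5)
Your proposal is correct and follows essentially the same route as the paper: the paper likewise uses $C(\mathcal{E}_{A}^{\dag})=C_{m}(\mathcal{I}_{A}\circ\mathcal{E}_{A}^{\dag})=\frac{1}{d}\sum_{i}C_{R}(\mathcal{E}_{A}(\ketbra{i}{i}_{A}))$ together with the reduction $C_{g}(\mathcal{E}_{A})=\max_{i}C_{R}(\mathcal{E}_{A}(\ketbra{i}{i}_{A}))$ (proved there by the same convexity-of-$C_{R}$ argument you invoke via extreme points), and then concludes by comparing the average with the maximum using non-negativity of $C_{R}$. Your explicit checks that $\mathcal{E}_{A}^{\dag}$ is a channel and that the effects $\mathcal{E}_{A}(\ketbra{i}{i}_{A})$ form a POVM summing to $I_{A}$ by unitality are just the points the paper leaves implicit.
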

This result implies that for a unital quantum channel, the \scp{} of the channel and the \mcp{} of its adjoint channel are equivalent in the sense that one is non-zero if and only if the other is non-zero.

For entanglement, we find a similar relation. The \sep{} of a quantum channel $\mathcal{E}_{AB}$\footnote{The term \sep{} has also been used to characterize the average state-entanglement generated from arbitrary product states as in Ref.~\cite{zanardi2000EntanglingPowerQuantum}. We follow the convention of Ref.~\cite{nielsen2003QuantumDynamicsPhysical} for consistency with the \scp{}.} is defined as
\begin{equation}
	E_{g}(\mathcal{E}_{AB}) \coloneqq \max_{\sigma\in \sepd(A\widetilde{A}:B\widetilde{B})}E_{R}(\mathcal{E}_{AB}(\sigma_{A\widetilde{A}B\widetilde{B}})),
\end{equation}
where $E_{R}(\rho_{AB})\coloneqq \min_{\sigma\in \sepd(A:B)} D(\rho_{AB}\Vert \sigma_{AB})$ is the relative entropy of entanglement~\cite{vedral1997}. The \sep{} of a unital quantum channel and the \mep{} of its adjoint channel are equivalent in the following sense:
\begin{theorem}
	For a unital quantum channel $\mathcal{E}_{AB}$, the \sep{} $ E_{g}(\mathcal{E}_{AB})$ and the \mep{} $E(\mathcal{E}_{AB}^{\dag})$ of its adjoint channel $\mathcal{E}_{AB}^{\dag}$ satisfy the following:
	\begin{equation}
		\dfrac{1}{d^4}E_{g}(\mathcal{E}_{AB}) \le E(\mathcal{E}_{AB}^{\dag})\le E_{g}(\mathcal{E}_{AB}).
	\end{equation}
\end{theorem}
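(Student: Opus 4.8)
\emph{Proof strategy.}---Throughout, unitality of $\mathcal{E}_{AB}$ ensures that $\mathcal{E}_{AB}^{\dag}$ is again a channel, so $E(\mathcal{E}_{AB}^{\dag})$ is well defined, and $(\mathcal{E}_{AB}^{\dag})^{\dag}=\mathcal{E}_{AB}$. The one structural fact I will use repeatedly is that pre-composing a measurement $\mathcal{M}$ with POVM $\{M_{z}\}$ by $\mathcal{E}_{AB}^{\dag}$ produces the measurement with POVM $\{(\mathcal{E}_{AB}\otimes\mathrm{id})(M_{z})\}$, which is still a valid POVM precisely because $\mathcal{E}_{AB}(I)=I$. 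Alongside this I will use: homogeneity and scaling of the relative entropy, $D(cX\Vert cY)=c\,D(X\Vert Y)$ and, for $\tr\rho=1$, $D(\rho\Vert F)=D(\rho\Vert F/\tr F)-\log\tr F$; Klein's inequality $D(X\Vert Y)\ge\tr(X-Y)$; the fact that $\clocc$ channels and their adjoints preserve the set of separable operators; and $\sum_{z}\tr M_{z}=\dim$ for any POVM. The plan is to treat the two inequalities by completely different mechanisms, which is also why the constants are asymmetric: there is no entanglement-destroying $\clocc$ channel fixing every separable state, so the clean identity available in the \scp{}/\mcp{} case has no analogue here.

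\emph{Lower bound.} First I would show $E(\mathcal{E}_{AB}^{\dag})\ge\max_{\mathcal{M}\in\cloccm}E_{m}(\mathcal{M}\circ\mathcal{E}_{AB}^{\dag})$; this follows from the minimax inequality together with $\{\mathcal{M}\circ\mathcal{F}:\mathcal{F}\in\clocc\}\subseteq\cloccm$ (equivalently, from monotonicity of $E$ under the $\clocc$ measurement channel $\mathcal{M}$ and Theorem~6). It then suffices to find one $\mathcal{M}^{*}\in\cloccm$ with $E_{m}(\mathcal{M}^{*}\circ\mathcal{E}_{AB}^{\dag})\ge\tfrac{1}{d^{4}}E_{g}(\mathcal{E}_{AB})$. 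Since $\sigma\mapsto E_{R}(\mathcal{E}_{AB}(\sigma))$ is convex, the maximum defining $E_{g}$ is attained at an extreme point of $\sepd$, i.e.\ at a pure product state $\sigma^{*}=\ketbra{a^{*}}{a^{*}}_{A\widetilde{A}}\otimes\ketbra{b^{*}}{b^{*}}_{B\widetilde{B}}$ with Schmidt ranks $\le d$, so $\dim(A\widetilde{A})=\dim(B\widetilde{B})=d^{2}$. I take $\mathcal{M}^{*}$ to be the local product projective measurement obtained by extending $\ket{a^{*}},\ket{b^{*}}$ to orthonormal bases; it is LOCC with $n=d^{4}$ outcomes, and $\mathcal{M}^{*}\circ\mathcal{E}_{AB}^{\dag}$ has trace-one POVM elements $\rho_{z}$, one of which is $(\mathcal{E}_{AB}\otimes\mathrm{id})(\sigma^{*})$ with $E_{R}=E_{g}(\mathcal{E}_{AB})$. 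For any comparison POVM $\{F_{z}\}$ of an $\mathcal{F}'\in\cloccm$, the $F_{z}$ are separable, so $D(\rho_{z}\Vert F_{z})=D(\rho_{z}\Vert F_{z}/\tr F_{z})-\log\tr F_{z}\ge E_{R}(\rho_{z})-\log\tr F_{z}$; summing over all $d^{4}$ outcomes, discarding the nonnegative terms $E_{R}(\rho_{z})$ except the one equal to $E_{g}(\mathcal{E}_{AB})$, and using $\sum_{z}\tr F_{z}=\tr I=d^{4}$ with concavity of $\log$ to get $\sum_{z}\log\tr F_{z}\le d^{4}\log 1=0$, I obtain $\sum_{z}D(\rho_{z}\Vert F_{z})\ge E_{g}(\mathcal{E}_{AB})$. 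Dividing by the $1/d^{4}$ normalization in $D_{m}$ gives $E_{m}(\mathcal{M}^{*}\circ\mathcal{E}_{AB}^{\dag})\ge\tfrac{1}{d^{4}}E_{g}(\mathcal{E}_{AB})$, hence the lower bound.

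\emph{Upper bound, and the main obstacle.} For any $\mathcal{F}\in\clocc$ and any $\mathcal{M}\in\cloccm$ with separable POVM elements $M_{z}=(\tr M_{z})\widehat{M_{z}}$, homogeneity of $D$ and $\sum_{z}\tr M_{z}=\dim$ give $D_{m}(\mathcal{M}\circ\mathcal{E}_{AB}^{\dag}\Vert\mathcal{M}\circ\mathcal{F})=\tfrac{1}{\dim}\sum_{z}(\tr M_{z})\,D(\mathcal{E}_{AB}(\widehat{M_{z}})\Vert\mathcal{F}^{\dag}(\widehat{M_{z}}))\le\sup_{\omega\in\sepd}D(\mathcal{E}_{AB}(\omega)\Vert\mathcal{F}^{\dag}(\omega))$, and by joint convexity this supremum is attained at a pure product $\omega$. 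So it remains to exhibit a single $\mathcal{F}\in\clocc$ with $\sup_{\omega}D(\mathcal{E}_{AB}(\omega)\Vert\mathcal{F}^{\dag}(\omega))\le E_{g}(\mathcal{E}_{AB})=\sup_{\omega}E_{R}(\mathcal{E}_{AB}(\omega))$, the supremum over pure product states; the intended choice of $\mathcal{F}$ is a measure-and-reprepare channel---a local informationally complete measurement followed by repreparation of separable states adapted to the separable states nearest the outputs of $\mathcal{E}_{AB}$---so that $\mathcal{F}^{\dag}(\omega)$ is a separable operator tracking, up to the rescaling freedom one has because $\mathcal{F}^{\dag}(\omega)$ need not be normalized, the separable state closest to $\mathcal{E}_{AB}(\omega)$, after which a convexity argument bounds $D(\mathcal{E}_{AB}(\omega)\Vert\mathcal{F}^{\dag}(\omega))$ by $E_{R}(\mathcal{E}_{AB}(\omega))\le E_{g}(\mathcal{E}_{AB})$. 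I expect this construction to be the hard part: unlike the \scp{}-to-\sep{} picture, where the dephasing channel is a canonical free comparison map delivering the exact value, here the comparison channel must be engineered from $\mathcal{E}_{AB}$ itself while provably remaining in $\clocc$ (not merely separability-preserving), and it is precisely this rigidity that rules out a loss-free lower bound and leaves the $1/d^{4}$ factor.
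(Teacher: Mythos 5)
Your lower bound is correct, and it is essentially the mechanism of the paper's own proof: identify the optimal separable input as a pure product state $\sigma^{*}$ (ancillas of dimension $d$ suffice), embed it as a POVM element of a local product-basis measurement $\mathcal{M}^{*}\in\cloccm$, and use that every competitor in $\cloccm$ has separable POVM elements so that each term $D(\rho_{z}\Vert F_{z})$ is bounded below by $E_{R}(\rho_{z})-\log\tr F_{z}$. Your bookkeeping — the minimax inequality to pass to $\max_{\mathcal{M}}E_{m}(\mathcal{M}\circ\mathcal{E}_{AB}^{\dag})$, and Jensen's inequality giving $\sum_{z}\log\tr F_{z}\le 0$ because the competitor must have the same $d^{4}$ outcomes and $\sum_{z}\tr F_{z}=d^{4}$ — is in fact more explicit about the unnormalized POVM elements than the paper, which instead fixes the optimizers $\mathcal{M}^{*},\mathcal{F}^{*}$ of $E(\mathcal{E}_{AB}^{\dag})$, inserts an LOCC measurement containing $\sigma^{*}$, keeps the single corresponding term, and compares it directly with the optimal separable state $\tau^{*}$. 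So this half is fine, and arguably tidier than the original.

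The genuine gap is the upper bound, and you flag it yourself. Your reduction $D_{m}(\mathcal{M}\circ\mathcal{E}_{AB}^{\dag}\Vert\mathcal{M}\circ\mathcal{F})\le\sup_{\omega\in\sepd}D(\mathcal{E}_{AB}(\omega)\Vert\mathcal{F}^{\dag}(\omega))$ is valid, but the theorem then stands or falls with exhibiting one $\mathcal{F}\in\clocc$ for which this supremum is at most $E_{g}(\mathcal{E}_{AB})$, and you do not construct it. The measure-and-reprepare sketch does not obviously close it: $\mathcal{F}^{\dag}$ is linear while $\omega\mapsto\argmin_{\tau\in\sepd}D(\mathcal{E}_{AB}(\omega)\Vert\tau)$ is not, and the ``rescaling freedom'' you invoke is largely illusory, since trace preservation of $\mathcal{F}$ gives $\tr\mathcal{F}^{\dag}(\omega)=\tr(\omega\,\mathcal{F}(I))$, which averages to one over any basis and equals one whenever $\mathcal{F}$ is unital; in the unital case $\mathcal{F}^{\dag}(\omega)$ is a genuine separable state, so pointwise one has $D(\mathcal{E}_{AB}(\omega)\Vert\mathcal{F}^{\dag}(\omega))\ge E_{R}(\mathcal{E}_{AB}(\omega))$ — the wrong direction — and what must actually be proved is the sup-level statement $\sup_{\omega}D(\mathcal{E}_{AB}(\omega)\Vert\mathcal{F}^{\dag}(\omega))\le\sup_{\omega}E_{R}(\mathcal{E}_{AB}(\omega))$, which is exactly the nontrivial content. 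For comparison, the paper's route is to restrict the minimization in $E(\mathcal{E}_{AB}^{\dag})$ to \emph{unital} $\clocc$ channels, evaluate at the optimizers $\mathcal{F}^{*},\mathcal{M}^{*}$, rescale the POVM elements by $1/d^{4}$ to reduce to a single term $q_{0}\,D(\mathcal{E}_{AB}(\sigma)\Vert\mathcal{F}^{*\dag}(\sigma))$ with $\sigma$ separable and $q_{0}\le1$, and bound that by $E_{g}(\mathcal{E}_{AB})$ — i.e., it too ultimately rests on the adjoint of an optimally chosen unital $\clocc$ channel serving as a good enough separable comparison, which is precisely the crux you isolated but did not resolve. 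As submitted, therefore, the proposal proves only the $1/d^{4}$ lower bound; the upper bound remains open in your write-up.
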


\section{Conclusion}
We have developed resource theories for quantifying measurement-cohering and measurement-entangling powers of quantum channels, thereby extending the landscape of dynamical quantum resources to include measurement-based phenomena. Our results establish a dynamical correspondence between coherence and entanglement on the measurement side: the measurement-cohering power of any quantum channel can be fully converted into its measurement-entangling power using only free operations and a dephasing step. This conversion theorem completes the coherence--entanglement correspondence at the dynamical level and mirrors the known relationship between state-cohering and state-entangling powers, as illustrated in Table~\ref{fig: table of optimal conversions}.

A useful ingredient enabling these results is our structural characterization of incoherent measurements, showing that every incoherent POVM reduces to the incoherent basis measurement followed by classical post-processing. This structural insight allows measurement channels to be treated analogously to state preparations, permitting a unified treatment of their dynamical generation under quantum channels. Our framework also reproduces the static results of measurement-coherence and measurement-entanglement when a measurement is regarded as a channel with classical outputs.

Furthermore, we have shown that for unital channels, dynamical resource powers for states and measurements are tightly linked: the state-cohering power of a channel equals the measurement-cohering power of its adjoint map, and an analogous equivalence holds for state-entangling versus measurement-entangling powers. These dualities highlight a fundamental symmetry between state- and measurement-based quantum resources, suggesting that both arise from a single underlying structure governed by the dynamical action of quantum channels.

Despite these advances, several fundamental questions remain open. Although coherence-to-entanglement conversion is now understood in both static and dynamical regimes, it remains unclear whether all forms of entanglement can be obtained from coherence. Determining when, and to what extent, entanglement can be generated from coherence in a resource-theoretically consistent manner is an intriguing challenge. Moreover, the relationship between coherence and multipartite entanglement is still insufficiently explored~\cite{regula2018ConvertingMultilevelNonclassicality}, reflecting broader gaps in our understanding of multipartite entanglement itself.

Another important point is that static resource generation alone cannot capture the full complexity of quantum dynamics concerning resource utilization. Indeed, quantum dynamics itself can be viewed as a resource under appropriate restrictions on quantum superchannels~\cite{gour2019ComparisonQuantumChannels,gour2019HowQuantifyDynamical,gour2020DynamicalEntanglement,regula2021OneShotManipulationDynamical,xyuan2020OneshotDynamicalResource}. Recent works have shown that gates such as generalized Hadamard or SWAP can serve as fundamental dynamical resource units~\cite{saxena2020DynamicalResourceTheory,yluo2025OneshotManipulationCoherence,hjkim2021OneShotManipulationEntanglement,regula2021OneShotManipulationDynamical,xyuan2020OneshotDynamicalResource}, analogous to maximally coherent states or maximally entangled states in static resource theories. However, the connections among these dynamical quantum resources remain less understood compared to their static counterparts. Clarifying the relationship between dynamical coherence and dynamical entanglement resources may also prove valuable for quantum technologies, where the ability to manipulate channel-level resources underlies tasks such as fault-tolerant gate synthesis and quantum simulation. A unified understanding of these dynamical resources could help identify operations that are optimal or universal for emerging quantum devices.

Overall, our results place measurement-based dynamical resources on equal footing with their state-based counterparts and provide a unified theoretical foundation linking coherence and entanglement across static and dynamical levels. We expect that this framework will inform future advances in the broader development of resource-efficient quantum technologies, such as quantum learning and quantum simulation.

\begin{acknowledgments}
H-.J. K. was supported by the National Research Foundation of Korea(NRF) grant funded by the Korea government(MSIT)(No. RS-2023-NR119931). S. L. acknowledges support from the NRF grants funded by the MSIT  (No. RS-2022-NR068791 and No. RS-2024-00432214), Creation of the Quantum Information Science R\&D Ecosystem (No. RS-2023-NR068116) through the NRF funded by the MSIT, and the Institute of Information \& Communications Technology Planning \& Evaluation (IITP) grant funded by MSIT (No. RS-2025-02304540).
\end{acknowledgments}

\onecolumn
\appendix
\section*{Appendices}
\addcontentsline{toc}{section}{Appendices}

\vspace{1em}  

We assume that the systems under consideration have dimension $d$ and that a measurement produces $n$ outcomes for each system. The symbols $\io$, $\mio$, and $\dio$ are used to denote the set of incoherent operations, the set of maximally incoherent operations, and the set of detection-incoherent operations, respectively~\cite{streltsov2017QuantumCoherenceResource}. The set of incoherent measurements is written as $\imeas$. For the sake of clarity, the identity channel $\mathsf{id}_{A}$ in a system $A$ is often omitted for the sake of clarity.

\section{Resource theory of measurement-coherence and measurement-entanglement}
This section provides a concise overview of the essential elements of measurement-resource theories, primarily developed in Ref.~\cite{hjkim2022RelationQuantumCoherence}:
We quantify measurement-resources using the measurement relative entropy $D_{m}(\mathcal{M}_{A}\Vert \mathcal{N}_{A}) $ between measurements $\mathcal{M}_{A}=\{M_{x}\}_{x=0}^{n-1}$ and $\mathcal{N}_{A}=\{N_{x}\}_{x=0}^{n-1}$ defined by
\begin{equation}
	D_{m}(\mathcal{M}_{A}\Vert \mathcal{N}_{A}) = \dfrac{1}{d}\sum_{x=0}^{n-1} D(M_{x}\Vert N_{x}).
\end{equation}
Some properties of the measurement relative entropy are as follows:
\begin{lemma}
	Let $ \mathcal{M}_{A}, \mathcal{N}_{A}, \mathcal{K}_{A}, \mathcal{L}_{A}$ be quantum measurements with the output system $A'$, $ \mathcal{E}_{A} $ a unital quantum channel, and $ \mathcal{U}_{A} $ a unitary channel. Let $ \mathcal{S}_{A'} $ be a classical channel that sends $ \ket{x}_{A'} $ to $ \ket{y}_{A'} $ with a probability $ p(y\vert x) $ that satisfies $ \sum_{y} p(y\vert x)=1$ for all $ x $. Let $ 0\le p\le 1 $. The following holds:
	\begin{enumerate}
		\item $ D_{m}(\mathcal{M}_{A}\Vert \mathcal{N}_{A})\ge 0 $; the equality holds if and only if $ \mathcal{M}_{A}=\mathcal{N}_{A} $,
		\item $ D_{m}(\mathcal{M}_{A}\circ \mathcal{E}_{A}\Vert \mathcal{N}_{A}\circ \mathcal{E}_{A})\le D_{m}(\mathcal{M}_{A}\Vert \mathcal{N}_{A}),$
		\item $	D_{m}(\mathcal{M}_{A}\circ \mathcal{U}_{A}\Vert \mathcal{N}_{A}\circ \mathcal{U}_{A}) = D_{m}(\mathcal{M}_{A}\Vert \mathcal{N}_{A}), $
		\item $ D_{m}(\mathcal{S}_{A'}\circ \mathcal{M}_{A}\Vert \mathcal{S}_{A'}\circ \mathcal{N}_{A}) \le D_{m} (\mathcal{M}_{A}\Vert \mathcal{N}_{A})$,
		\item $	D_{m}(\mathcal{M}_{A}\otimes \mathcal{N}_{B}\Vert \mathcal{K}_{A}\otimes \mathcal{L}_{B}) = D_{m}(\mathcal{M}_{A}\Vert \mathcal{K}_{A}) +D_{m}(\mathcal{N}_{B}\Vert \mathcal{L}_{B}), $
		\item $	D_{m}(p\mathcal{M}_{A}+(1-p)\mathcal{N}_{A}\Vert p\mathcal{K}_{A}+(1-p)\mathcal{L}_{A}) \le pD_{m}(\mathcal{M}_{A}\Vert \mathcal{K}_{A})+(1-p)D_{m}(\mathcal{N}_{A}\Vert \mathcal{L}_{A}).$
	\end{enumerate}
\end{lemma}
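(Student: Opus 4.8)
The plan is to reduce all six items to standard properties of the quantum relative entropy $D(\cdot\Vert\cdot)$ on (generally unnormalized) positive operators, via one recurring gadget: to a measurement $\{M_{x}\}_{x=0}^{n-1}$ on $A$ associate the block-diagonal operator $\widehat M\coloneqq\sum_{x}\ketbra{x}{x}\otimes M_{x}$ on $XA$ with $X\cong\mathbb{C}^{n}$. Since $\widehat M$ and $\widehat N$ are block diagonal with the same block structure, $D(\widehat M\Vert\widehat N)=\sum_{x}D(M_{x}\Vert N_{x})$, and since $\{M_{x}\}$ is a POVM, $\tr\widehat M=\tr\sum_{x}M_{x}=d$. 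Hence $D_{m}(\mathcal{M}_{A}\Vert\mathcal{N}_{A})=\tfrac1d D(\widehat M\Vert\widehat N)$ is, up to the prefactor, the relative entropy of two positive operators of trace $d$. (Throughout, the convention $D(R\Vert S)=+\infty$ when $\operatorname{supp}R\not\subseteq\operatorname{supp}S$ is in force and affects no inequality.)

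Item 1 then follows from Klein's inequality, which gives $D(R\Vert S)\ge\tr R-\tr S$ with equality only if $R=S$: applied to $R=\widehat M$, $S=\widehat N$ (both of trace $d$) it yields $D_{m}\ge0$, vanishing iff $M_{x}=N_{x}$ for all $x$. For items 2 and 3, note that pre-composing $\mathcal{M}_{A}$ with a channel $\mathcal{E}_{A}$ replaces each $M_{x}$ by its Heisenberg image $\mathcal{E}_{A}^{\dagger}(M_{x})$, since $\tr(M_{x}\mathcal{E}_{A}(\rho))=\tr(\mathcal{E}_{A}^{\dagger}(M_{x})\rho)$; when $\mathcal{E}_{A}$ is unital and trace-preserving, $\mathcal{E}_{A}^{\dagger}$ is completely positive, unital (because $\mathcal{E}_{A}$ is trace-preserving) and trace-preserving (because $\mathcal{E}_{A}$ is unital), hence CPTP, so the data-processing inequality gives $D(\mathcal{E}_{A}^{\dagger}(M_{x})\Vert\mathcal{E}_{A}^{\dagger}(N_{x}))\le D(M_{x}\Vert N_{x})$; summing over $x$ and dividing by $d$ gives item 2. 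For a unitary channel $\mathcal{U}_{A}$, $\mathcal{U}_{A}^{\dagger}$ is conjugation by $U^{\dagger}$, under which $D$ is exactly invariant, giving the equality in item 3. Item 6 is likewise termwise: the mixed measurement has POVM elements $pM_{x}+(1-p)N_{x}$, and joint convexity of $D$ applied for each $x$ gives the bound after summation and normalization. Item 5 is a direct computation: $\log(R\otimes S)=\log R\otimes I+I\otimes\log S$ yields $D(M_{x}\otimes N_{y}\Vert K_{x}\otimes L_{y})=(\tr N_{y})D(M_{x}\Vert K_{x})+(\tr M_{x})D(N_{y}\Vert L_{y})$; summing over the pair $(x,y)$, using $\sum_{x}\tr M_{x}=\sum_{y}\tr N_{y}=d$, and dividing by $d^{2}$ (the composite $AB$ has dimension $d^{2}$) reproduces $D_{m}(\mathcal{M}_{A}\Vert\mathcal{K}_{A})+D_{m}(\mathcal{N}_{B}\Vert\mathcal{L}_{B})$ exactly, which is incidentally what pins down the $1/d$ prefactor.

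I expect the step needing the most care to be item 4. The post-processed measurement $\mathcal{S}_{A'}\circ\mathcal{M}_{A}$ has POVM elements $\widetilde M_{y}=\sum_{x}p(y\vert x)M_{x}$. The plan is to introduce the channel $\Phi$ on $XA$ with Kraus operators $\{\sqrt{p(y\vert x)}\,\ketbra{y}{x}\otimes I_{A}\}_{x,y}$; it is trace-preserving precisely because $\sum_{y}p(y\vert x)=1$, and a short calculation shows $\Phi(\widehat M)=\sum_{y}\ketbra{y}{y}\otimes\widetilde M_{y}$ and $\Phi(\widehat N)=\sum_{y}\ketbra{y}{y}\otimes\widetilde N_{y}$. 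Data processing then gives $\sum_{y}D(\widetilde M_{y}\Vert\widetilde N_{y})=D(\Phi(\widehat M)\Vert\Phi(\widehat N))\le D(\widehat M\Vert\widehat N)=\sum_{x}D(M_{x}\Vert N_{x})$, which is item 4 after dividing by $d$. The only genuine obstacle is verifying that this $\Phi$ is a legitimate channel acting on the embedded operators as claimed; the rest is bookkeeping around Klein's inequality, data processing, and joint convexity of the quantum relative entropy, all standard.
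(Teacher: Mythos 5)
Your proof is correct. Note that the paper itself does not prove this lemma: it is imported from the earlier work of Kim and Lee (Ref.~[36], \emph{Phys.\ Rev.\ A} \textbf{106}, 022401), so there is no in-paper argument to compare against; your route is the natural one, reducing every item to standard facts about the quantum relative entropy on unnormalized positive operators (Klein's inequality with the equal traces $\tr\sum_x M_x=\tr\sum_x N_x=d$, data processing under the CPTP map $\mathcal{E}_A^{\dag}$ — and you correctly isolate unitality of $\mathcal{E}_A$ as what makes $\mathcal{E}_A^{\dag}$ trace-preserving — exact unitary invariance, joint convexity, and the tensor identity $D(X\otimes Y\Vert K\otimes L)=\tr(Y)D(X\Vert K)+\tr(X)D(Y\Vert L)$ with the $1/\dim$ normalization fixing additivity). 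The block-diagonal embedding $\widehat M=\sum_x\ketbra{x}{x}\otimes M_x$ is a clean unifying device but is essentially equivalent to arguing termwise. For item 4 your conditional-probability channel $\Phi$ with Kraus operators $\sqrt{p(y\vert x)}\,\ketbra{y}{x}\otimes I_A$ is legitimate (it is trace-preserving by $\sum_y p(y\vert x)=1$ and maps $\widehat M$ to the block operator of $\widetilde M_y=\sum_x p(y\vert x)M_x$ as you claim); a marginally shorter alternative is the subadditivity fact $D(P_0+P_1\Vert Q_0+Q_1)\le D(P_0\Vert Q_0)+D(P_1\Vert Q_1)$, which this paper itself invokes in the proof of its Lemma~4, applied to $\widetilde M_y$ and summed over $y$. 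Either way the argument is complete.
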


In measurement-coherence resource theory, the set of incoherent measurements ($\imeas$) is the free resource, and the detection-incoherent operations ($\dio$) with classical post-processing channels correspond to the free operations that do not generate any measurement-coherence from incoherent measurements. The measurement-coherence of a measurement $\mathcal{M}_{A}=\{M_{x}\}$ is quantified by the measurement relative entropy of measurement-coherence defined as
\begin{align}
	C_{m}(\mathcal{M}_{A}) & \coloneqq \min_{\mathcal{F}_{A}\in \imeas} D_{m}(\mathcal{M}_{A}\Vert \mathcal{F}_{A}) \\
	& =\dfrac{1}{d}\sum_{x=0}^{n-1} \{S(\Delta(M_{x}))-S(M_{x})\},
\end{align}
where $S(X_{A})=-\tr_{A} X_{A}\log X_{A}$ for $X_{A}\ge 0$ is the von Neumann entropy.
In the measurement-entanglement resource theory introduced in Ref.~\cite{hjkim2022RelationQuantumCoherence}, the set of separable measurements and the set of separable channels were taken to be the free resources and free operations. Here, we adopt the closure of the set of LOCC measurement ($\cloccm$) and the closure of the set of LOCC channels ($\clocc$) with classical post-processing channels as the free resources and the free operations, respectively, due to their more direct operational meaning. The measurement-entanglement is quantified by the measurement relative entropy of measurement-entanglement defined as
\begin{equation}
	E_{m}(\mathcal{M}_{AB}) \coloneqq \min_{\mathcal{F}\in\cloccm(A:B)} D_{m}(\mathcal{M}_{AB}\Vert \mathcal{F}_{AB}).
\end{equation}

\begin{figure}[bh]
	\centering
	\includegraphics[width=0.5\linewidth]{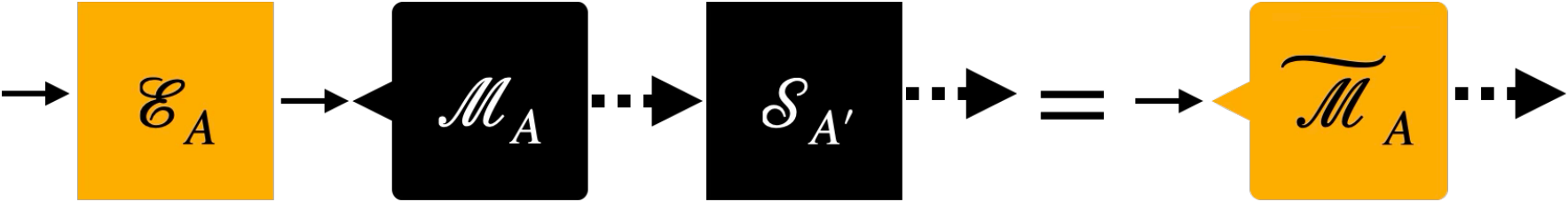}
	\caption{A quantum measurement $\mathcal{M}_{A}$ transforms to another quantum measurement $\widetilde{\mathcal{M}}_{A}$ under a pre-processing channel $\mathcal{E}_{A}$ and a classical post-processing channel $\mathcal{S}_{A'}$. The pre-processing channel can generate measurement-resource.}
	\label{fig: measurement transform}
\end{figure}
A quantum measurement can be transformed to another measurement by a pre-processing channel and a post-processing classical channel as depicted in Fig.~\ref{fig: measurement transform}. We demonstrate that an incoherent measurement is the incoherent basis measurement followed by a classical post-processing channel. It employs the observation that the dephasing channel can be seen as the incoherent basis measurement as follows:
\begin{proposition}\label{sup_thm: dephasing ch = incoh. basis meas.}
	The dephasing channel $\Delta_{A}$ with respect to the incoherent basis $\mathcal{I}_{A}=\{\ket{i}_{A}:i=0,\dots,d-1\}$ is equivalent to the quantum measurement in the incoherent basis $\mathcal{I}_{A}$ as a quantum channel, that is, $\mathcal{I}_{A}(\rho_{A})=\sum_{i=0}^{d-1} \tr_{A}(\ketbra{i}{i}_{A}\rho_{A})\ketbra{i}{i}_{A}$.
\end{proposition}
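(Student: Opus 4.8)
The plan is to verify the identity by directly unwinding the two defining formulas and matching them term by term, so this is essentially a bookkeeping check rather than a substantive argument. First I would recall the definition of the dephasing channel, $\Delta_A(O_A) = \sum_{i=0}^{d-1} \bra{i}O_A\ket{i}_A \ketbra{i}{i}_A$, together with the representation of a POVM $\mathcal{M}_A = \{M_x\}$ as a channel with classical output, $\mathcal{M}_A(\rho_A) = \sum_x \tr_A(M_x\rho_A)\ketbra{x}{x}_{A'}$, as in Eq.~\eqref{eq: measurement as a channel}. The preliminary step is to confirm that $\mathcal{I}_A = \{\ketbra{i}{i}_A : i=0,\dots,d-1\}$ is a legitimate POVM: each $\ketbra{i}{i}_A \ge 0$, and $\sum_i \ketbra{i}{i}_A = I_A$ since $\{\ket{i}_A\}$ is an orthonormal basis; in fact it is a projective (sharp) measurement.

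Next I would substitute $M_i = \ketbra{i}{i}_A$ into the channel form of the measurement and use cyclicity of the trace to rewrite $\tr_A(\ketbra{i}{i}_A\rho_A) = \bra{i}\rho_A\ket{i}_A$. This yields $\mathcal{I}_A(\rho_A) = \sum_i \bra{i}\rho_A\ket{i}_A \,\ketbra{i}{i}_{A'}$, which is term-for-term identical to $\Delta_A(\rho_A)$ once the classical output system $A'$ is identified with $A$ through the correspondence $\ket{i}_{A'} \leftrightarrow \ket{i}_A$. This identification is the natural one precisely because $\{\ketbra{i}{i}_{A'}\}$ is stipulated to be the incoherent basis of the output system. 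Since both $\mathcal{I}_A$ and $\Delta_A$ are linear, agreement on all density operators $\rho_A$ (equivalently on a spanning set of the operator space) is enough to conclude that they coincide as channels.

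I expect no real obstacle here; the single point deserving explicit mention is the identification of the classical outcome register $A'$ with $A$, i.e.\ the observation that recording the measurement outcome $i$ in the incoherent basis carries exactly the data of the diagonal entry $\bra{i}\rho_A\ket{i}_A$ and nothing more. Once this correspondence is spelled out, the proposition reduces to comparing two manifestly equal sums.
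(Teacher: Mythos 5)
Your proposal is correct: the direct term-by-term comparison of $\Delta_A(\rho_A)=\sum_i\bra{i}\rho_A\ket{i}_A\ketbra{i}{i}_A$ with the channel form of the measurement $\{\ketbra{i}{i}_A\}$, together with the identification of the classical output register $A'$ with $A$ via its incoherent basis, is exactly the immediate verification the paper relies on (it states the proposition without further argument, treating it as this same bookkeeping check). No gaps.
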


\begin{lemma}[Structure of an incoherent measurement]
	\label{sup_thm: structure of a IM}
	An incoherent measurement $\mathcal{M}_{A}=\{M_{x}\}_{x=0}^{n-1}\in \imeas$ is equivalent to an incoherent basis measurement $\mathcal{I}_{A}=\{\ketbra{i}{i}_{A}:i=0,\dots, d-1\}$ followed by a classical post-processing channel represented as $\mathcal{S}_{A}\circ \mathcal{I}_{A}$, where
	\begin{equation}
		\mathcal{S}_{A}(X_{A})= \sum_{x,i}p(x\vert i)\tr_{A} \left ( \ketbra{i}{i}_{A} X_{A}\right ) \ketbra{x}{x}_{A'}
	\end{equation}
	with a conditional probability distribution $p(x\vert i)=\bra{i}M_{x}\ket{i}_{A}\ge 0$ satisfying $\sum_{x}p(x\vert i)=1$ for all $i$ is a classical post-processing channel.
\end{lemma}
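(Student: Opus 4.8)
The plan is to exploit the defining property of incoherent POVM elements---that each $M_{x}$ is fixed by the dephasing channel---to write every $M_{x}$ as a nonnegative combination of the incoherent-basis projectors, and then to verify the claimed factorization by a direct computation on an arbitrary input state.

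First I would use $\mathcal{M}_{A}\in\imeas$, which means $\Delta_{A}(M_{x})=M_{x}$ for every $x$, i.e.\ $M_{x}=\sum_{i=0}^{d-1}\bra{i}M_{x}\ket{i}_{A}\,\ketbra{i}{i}_{A}$. Setting $p(x\vert i)\coloneqq\bra{i}M_{x}\ket{i}_{A}$, positivity $M_{x}\ge0$ forces $p(x\vert i)\ge0$, and comparing diagonal matrix elements in the POVM normalization $\sum_{x}M_{x}=I_{A}$ yields $\sum_{x}p(x\vert i)=1$ for each $i$. Hence $p(\cdot\vert\cdot)$ is a legitimate conditional distribution and $\mathcal{S}_{A}$ as written is a bona fide classical post-processing channel (completely positive and trace-preserving). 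This is the one place requiring a little bookkeeping care, but it reduces entirely to that normalization identity.

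Next I would evaluate both sides on an arbitrary state $\rho_{A}$. Using $M_{x}=\sum_{i}p(x\vert i)\ketbra{i}{i}_{A}$, the left side is $\mathcal{M}_{A}(\rho_{A})=\sum_{x}\tr_{A}(M_{x}\rho_{A})\ketbra{x}{x}_{A'}=\sum_{x,i}p(x\vert i)\bra{i}\rho_{A}\ket{i}_{A}\,\ketbra{x}{x}_{A'}$. For the right side, Proposition~\ref{sup_thm: dephasing ch = incoh. basis meas.} identifies $\mathcal{I}_{A}$ with $\Delta_{A}$, so $\mathcal{I}_{A}(\rho_{A})=\sum_{j}\bra{j}\rho_{A}\ket{j}_{A}\,\ketbra{j}{j}_{A}$; orthogonality $\ketbra{i}{i}_{A}\ketbra{j}{j}_{A}=\delta_{ij}\ketbra{i}{i}_{A}$ then gives $\tr_{A}(\ketbra{i}{i}_{A}\,\mathcal{I}_{A}(\rho_{A}))=\bra{i}\rho_{A}\ket{i}_{A}$, and substituting into the definition of $\mathcal{S}_{A}$ reproduces $\sum_{x,i}p(x\vert i)\bra{i}\rho_{A}\ket{i}_{A}\,\ketbra{x}{x}_{A'}$. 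The two expressions coincide for all $\rho_{A}$, giving $\mathcal{M}_{A}=\mathcal{S}_{A}\circ\mathcal{I}_{A}$, which also establishes Theorem~1 of the main text.

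I do not anticipate a genuine obstacle: the content is the structural observation that incoherence of a POVM element is exactly diagonality in $\mathcal{I}_{A}$, after which the factorization is a one-line diagonal computation. The only subtlety worth flagging explicitly in the write-up is verifying that the extracted weights $p(x\vert i)$ form valid conditional probabilities so that $\mathcal{S}_{A}$ qualifies as a free (classical post-processing) operation.
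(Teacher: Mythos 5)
Your proposal is correct and follows essentially the same route as the paper's proof: both rest on the fact that $\Delta_{A}(M_{x})=M_{x}$ means each POVM element is diagonal in $\mathcal{I}_{A}$, invoke Proposition~\ref{sup_thm: dephasing ch = incoh. basis meas.} to identify the dephasing channel with the incoherent basis measurement, and read off $p(x\vert i)=\bra{i}M_{x}\ket{i}_{A}$ as a valid conditional distribution from positivity and $\sum_{x}M_{x}=I_{A}$. The only (immaterial) difference is that the paper inserts $\Delta_{A}$ on the input state first and then expands, whereas you expand $M_{x}$ in the incoherent basis and compare both sides directly.
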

\begin{proof}
	We employ Proposition~\ref{sup_thm: dephasing ch = incoh. basis meas.} that the dephasing channel $\Delta_{A}$ is equivalent to the quantum measurement in the incoherent basis $\mathcal{I}_{A}$:
	\begin{align}
		\mathcal{M}_{A}(\mathcal{\rho_{A}}) & =\mathcal{M}_{A}\circ \Delta_{A}(\rho_{A}) \\
		& =\mathcal{M}_{A}\circ \mathcal{I}_{A}(\rho_{A}) \\
		&=\sum_{i} \tr_{A}(\ketbra{i}{i}_{A}\rho_{A})\mathcal{M}_{A}(\ketbra{i}{i}_{A})\\
		&=\sum_{i} \tr_{A}(\ketbra{i}{i}_{A}\rho_{A})\sum_{x}\tr_{A}(M_{x}\ketbra{i}{i}_{A})\ketbra{x}{x}_{A'}\\
		&=\sum_{x}\sum_{i}p(x\vert i)\tr_{A}(\ketbra{i}{i}_{A}\rho_{A}) \ketbra{x}{x}_{A'}\\
		&=\mathcal{S}_{A}\circ \mathcal{I}_{A}(\rho_{A}).
	\end{align}
	$\mathcal{M}_{A}$ is a POVM, hence we have $p(x\vert i)=\bra{i}M_{x}\ket{i}_{A}\ge 0$ and $\sum_{x}p(x\vert i)=1$ for all $i$.
\end{proof}

\begin{corollary}\label{sup_thm: structure of a bipartite IM}
	A bipartite incoherent measurement $\mathcal{M}_{AB}\in \imeas$ is equal to the incoherent basis measurement $\mathcal{I}_{AB}=\{\ketbra{ij}{ij}_{AB}:i,j=0,\dots, d-1\}$ followed by a classical post-processing channel $\mathcal{S}_{AB}\circ \mathcal{I}_{AB}$, where $\mathcal{S}_{AB}(X_{AB})= \sum_{x,y,i,j}p(x,y\vert i,j)\tr_{AB} ( \ketbra{ij}{ij}_{AB} X_{AB}) \ketbra{xy}{xy}_{A'B'}$ with a conditional probability distribution $p(x,y\vert i, j)=\bra{i,j}M_{xy}\ket{ij}_{AB}$ is a classical post-processing channel.
\end{corollary}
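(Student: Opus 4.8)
The plan is to mimic the proof of Lemma~\ref{sup_thm: structure of a IM} almost verbatim, observing that nothing in that argument used the fact that $A$ is an elementary system rather than a composite one: the only structural inputs were that the incoherent basis is orthonormal and that the POVM elements are diagonal in it. For the bipartite case I would simply regard $AB$ as a single system of dimension $d^{2}$ equipped with the product incoherent basis $\mathcal{I}_{AB}=\{\ketbra{ij}{ij}_{AB}:i,j=0,\dots,d-1\}$, so that the statement $\mathcal{M}_{AB}=\{M_{xy}\}\in\imeas$ means precisely $\Delta_{AB}(M_{xy})=M_{xy}$ for every outcome pair $(x,y)$, where $\Delta_{AB}$ is the dephasing channel onto $\mathcal{I}_{AB}$. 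This is the natural bipartite incoherent basis, already the convention fixed for measurement-entanglement, so the "canonical incoherent basis measurement'' appearing in the corollary is exactly the product-basis measurement $\mathcal{I}_{AB}$.

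First I would invoke Proposition~\ref{sup_thm: dephasing ch = incoh. basis meas.}, applied to the system $AB$, to identify $\Delta_{AB}$ with the incoherent basis measurement $\mathcal{I}_{AB}$ acting as a channel. Since each $M_{xy}$ is a fixed point of $\Delta_{AB}$, we get $\mathcal{M}_{AB}=\mathcal{M}_{AB}\circ\Delta_{AB}=\mathcal{M}_{AB}\circ\mathcal{I}_{AB}$. Pushing a state $\rho_{AB}$ through this composition and using $\mathcal{I}_{AB}(\rho_{AB})=\sum_{ij}\tr_{AB}(\ketbra{ij}{ij}_{AB}\rho_{AB})\ketbra{ij}{ij}_{AB}$, then applying $\mathcal{M}_{AB}$ to each rank-one incoherent projector, yields
\begin{equation}
	\mathcal{M}_{AB}(\rho_{AB})=\sum_{ij}\tr_{AB}(\ketbra{ij}{ij}_{AB}\rho_{AB})\sum_{xy}\tr_{AB}(M_{xy}\ketbra{ij}{ij}_{AB})\ketbra{xy}{xy}_{A'B'}.
\end{equation}
Collecting the coefficient $p(x,y\vert i,j)=\bra{ij}M_{xy}\ket{ij}_{AB}$ reproduces exactly $\mathcal{S}_{AB}\circ\mathcal{I}_{AB}(\rho_{AB})$ with the $\mathcal{S}_{AB}$ stated in the corollary.

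Finally I would check that $\mathcal{S}_{AB}$ is a legitimate classical post-processing channel: positivity $p(x,y\vert i,j)=\bra{ij}M_{xy}\ket{ij}_{AB}\ge 0$ follows from $M_{xy}\ge 0$, and normalization $\sum_{xy}p(x,y\vert i,j)=\bra{ij}\bigl(\sum_{xy}M_{xy}\bigr)\ket{ij}_{AB}=\bra{ij}I_{AB}\ket{ij}_{AB}=1$ follows from POVM completeness. There is no genuine obstacle here; the corollary is a routine lift of Lemma~\ref{sup_thm: structure of a IM} from a $d$-dimensional system to the $d^{2}$-dimensional composite, and the only point worth stating explicitly is that $\imeas$ on $AB$ is understood with respect to the product incoherent basis, which is what makes $\mathcal{I}_{AB}$ the correct canonical measurement in the bipartite statement.
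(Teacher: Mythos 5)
Your proposal is correct and follows exactly the route the paper intends: the paper's proof of this corollary is literally ``a similar calculation as in Lemma~\ref{sup_thm: structure of a IM},'' and you carry out that calculation verbatim on the composite system $AB$ with the product incoherent basis, including the positivity and normalization check for $p(x,y\vert i,j)$.
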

\begin{proof}
	A similar calculation as in Lemma~\ref{sup_thm: structure of a IM} shows the result.
\end{proof}
Note that Lemma~\ref{sup_thm: structure of a IM} and Corollary~\ref{sup_thm: structure of a bipartite IM} hold for any measurement $\mathcal{M}$ with an arbitrary number of measurement outcomes.

\section{Resource theory of measurement-resource powers}
\subsection{Measurement-cohering power}
The free static resource for measurement coherence is the set of incoherent measurements $\imeas$, and the largest set of free operations is $\dio$. We define the \mcp{} of a quantum channel as below:
\begin{definition}
	The \mcp{} of a quantum channel $\mathcal{E}_{A}$ is defined as
	\begin{equation}
		C(\mathcal{E}_{A}) \coloneqq \min_{\mathcal{F}_{A}\in \dio} \max_{\mathcal{M}_{A\widetilde{A}}\in \imeas} D_{m}(\mathcal{M}_{A\widetilde{A}}\circ \mathcal{E}_{A}\Vert \mathcal{M}_{A\widetilde{A}}\circ \mathcal{F}_{A}),
	\end{equation}
	where the maximization is over $\imeas$ with an arbitrary ancillary system $\widetilde{A}$.
\end{definition}
The following lemma demonstrates that the optimizations in the definition of the \mcp{} are unnecessary:
\begin{lemma}
	\label{sup_thm: meas-coh gen power}
	The \mcp{} of a quantum channel $\mathcal{E}_{A}$ is equal to the maximum measurement-coherence that can be generated via the quantum channel $\mathcal{E}_{A}$ from incoherent measurements. Furthermore, it is equal to the measurement relative entropy of measurement-coherence of the incoherent basis measurement $\mathcal{I}_{A}=\{\ketbra{i}{i}_{A}:i=0,\dots,d-1\}$ with the pre-processing channel $\mathcal{E}_{A}$:
	\begin{equation}
		C(\mathcal{E}_{A})= \max_{\mathcal{M}_{A}\in \imeas} C_{m} (\mathcal{M}_{A}\circ \mathcal{E}_{A}) =C_{m}(\mathcal{I}_{A}\circ \mathcal{E}_{A}).
	\end{equation}
\end{lemma}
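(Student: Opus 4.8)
The plan is to sandwich $C(\mathcal{E}_A)$ between $C_m(\mathcal{I}_A\circ\mathcal{E}_A)$ and $\max_{\mathcal{M}\in\imeas}C_m(\mathcal{M}\circ\mathcal{E}_A)$ from both sides, so that all three coincide. Two preliminary observations drive the argument. First, for \emph{any} measurement $\mathcal{N}_A=\{N_x\}$ one has $C_m(\mathcal{N}_A)=D_m(\mathcal{N}_A\Vert\mathcal{N}_A\circ\Delta_A)$: the pre-dephased measurement $\mathcal{N}_A\circ\Delta_A$ has POVM elements $\{\Delta_A(N_x)\}$, and since $N_x$ and $\Delta_A(N_x)$ share their diagonal, $D(N_x\Vert\Delta_A(N_x))=S(\Delta_A(N_x))-S(N_x)$, matching Eq.~\eqref{eq: measurement relative entropy of coherence reduced}. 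Second, $\Delta_A\in\dio$ and, more to the point, $\Delta_A\circ\mathcal{E}_A\circ\Delta_A\in\dio$ for every channel $\mathcal{E}_A$, which is immediate from $\Delta_A^2=\Delta_A$ and the characterization $\Delta\circ\mathcal{G}=\Delta\circ\mathcal{G}\circ\Delta$ of $\dio$. The workhorse will be Lemma~\ref{sup_thm: structure of a IM} and Corollary~\ref{sup_thm: structure of a bipartite IM}: every $\mathcal{M}_{A\widetilde{A}}\in\imeas$ factors as $\mathcal{S}\circ\mathcal{I}_{A\widetilde{A}}$ with $\mathcal{S}$ a classical post-processing channel and $\mathcal{I}_{A\widetilde{A}}=\mathcal{I}_A\otimes\mathcal{I}_{\widetilde{A}}=\Delta_A\otimes\Delta_{\widetilde{A}}$ the incoherent-basis measurement, recalling $\mathcal{I}_A=\Delta_A$ as channels (Proposition~\ref{sup_thm: dephasing ch = incoh. basis meas.}).

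For $C(\mathcal{E}_A)\ge C_m(\mathcal{I}_A\circ\mathcal{E}_A)$ I would fix an arbitrary $\mathcal{F}_A\in\dio$ and first check that $\mathcal{I}_A\circ\mathcal{F}_A=\Delta_A\circ\mathcal{F}_A\in\imeas$: taking adjoints in $\Delta\circ\mathcal{F}_A=\Delta\circ\mathcal{F}_A\circ\Delta$ and evaluating at $\ketbra{i}{i}$ gives $\mathcal{F}_A^{\dag}(\ketbra{i}{i})=\Delta(\mathcal{F}_A^{\dag}(\ketbra{i}{i}))$, so its POVM elements are diagonal. Hence $D_m(\mathcal{I}_A\circ\mathcal{E}_A\Vert\mathcal{I}_A\circ\mathcal{F}_A)\ge C_m(\mathcal{I}_A\circ\mathcal{E}_A)$; since $\mathcal{M}=\mathcal{I}_A$ is one admissible choice in the inner maximization, this bound survives the max over $\mathcal{M}$ and then the min over $\mathcal{F}_A$, so $C(\mathcal{E}_A)\ge C_m(\mathcal{I}_A\circ\mathcal{E}_A)$. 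The same choice $\mathcal{M}=\mathcal{I}_A$ also gives $\max_{\mathcal{M}\in\imeas}C_m(\mathcal{M}\circ\mathcal{E}_A)\ge C_m(\mathcal{I}_A\circ\mathcal{E}_A)$.

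For the reverse $C(\mathcal{E}_A)\le C_m(\mathcal{I}_A\circ\mathcal{E}_A)$ the decisive move is the \emph{choice} of free reference channel $\mathcal{F}_0=\Delta_A\circ\mathcal{E}_A\circ\Delta_A\in\dio$ (the naive $\mathcal{F}=\Delta_A$ is useless, since $D_m(\mathcal{I}_A\circ\mathcal{E}_A\Vert\mathcal{I}_A)$ is generically infinite). For any $\mathcal{M}_{A\widetilde{A}}=\mathcal{S}\circ(\mathcal{I}_A\otimes\mathcal{I}_{\widetilde{A}})\in\imeas$ one has $\mathcal{M}_{A\widetilde{A}}\circ\mathcal{E}_A=\mathcal{S}\circ((\mathcal{I}_A\circ\mathcal{E}_A)\otimes\mathcal{I}_{\widetilde{A}})$ and $\mathcal{M}_{A\widetilde{A}}\circ\mathcal{F}_0=\mathcal{S}\circ((\mathcal{I}_A\circ\mathcal{F}_0)\otimes\mathcal{I}_{\widetilde{A}})$, and a one-line adjoint computation shows $\mathcal{I}_A\circ\mathcal{F}_0$ has POVM elements $\{\Delta_A(\mathcal{E}_A^{\dag}(\ketbra{i}{i}))\}$, i.e.\ $\mathcal{I}_A\circ\mathcal{F}_0=(\mathcal{I}_A\circ\mathcal{E}_A)\circ\Delta_A$. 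Applying monotonicity of $D_m$ under the classical post-processing $\mathcal{S}$, then additivity of $D_m$ over the tensor factor together with $D_m(\mathcal{I}_{\widetilde{A}}\Vert\mathcal{I}_{\widetilde{A}})=0$, yields $D_m(\mathcal{M}_{A\widetilde{A}}\circ\mathcal{E}_A\Vert\mathcal{M}_{A\widetilde{A}}\circ\mathcal{F}_0)\le D_m(\mathcal{I}_A\circ\mathcal{E}_A\Vert(\mathcal{I}_A\circ\mathcal{E}_A)\circ\Delta_A)=C_m(\mathcal{I}_A\circ\mathcal{E}_A)$. Taking the max over $\mathcal{M}$ and using $\mathcal{F}_0$ in the min gives the claim; in particular the enlarged-system maximization in the definition is redundant, and $C(\mathcal{E}_A)=C_m(\mathcal{I}_A\circ\mathcal{E}_A)$.

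It remains to pin down the middle expression. The inequality $\max_{\mathcal{M}\in\imeas}C_m(\mathcal{M}\circ\mathcal{E}_A)\ge C_m(\mathcal{I}_A\circ\mathcal{E}_A)$ was already noted; for the reverse I would invoke that $C_m$ is monotone under classical post-processing (such channels map $\imeas$ into $\imeas$, so this descends from the corresponding monotonicity of $D_m$) and additive over tensor products (directly from Eq.~\eqref{eq: measurement relative entropy of coherence reduced}). Writing $\mathcal{M}_{A\widetilde{A}}=\mathcal{S}\circ(\mathcal{I}_A\otimes\mathcal{I}_{\widetilde{A}})$ then gives $C_m(\mathcal{M}_{A\widetilde{A}}\circ\mathcal{E}_A)\le C_m((\mathcal{I}_A\circ\mathcal{E}_A)\otimes\mathcal{I}_{\widetilde{A}})=C_m(\mathcal{I}_A\circ\mathcal{E}_A)+C_m(\mathcal{I}_{\widetilde{A}})=C_m(\mathcal{I}_A\circ\mathcal{E}_A)$, closing the chain. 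I expect the only genuinely non-routine step to be identifying $\mathcal{F}_0=\Delta_A\circ\mathcal{E}_A\circ\Delta_A$ as the right free channel, so that $\mathcal{I}_A\circ\mathcal{F}_0$ is exactly the optimal incoherent approximant of $\mathcal{I}_A\circ\mathcal{E}_A$; everything else is bookkeeping powered by the structure of incoherent measurements.
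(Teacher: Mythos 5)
Your proposal is correct and follows essentially the paper's own route: the decisive choice of the free reference channel ($\mathcal{F}_{0}=\Delta_{A}\circ\mathcal{E}_{A}\circ\Delta_{A}$, versus the paper's $\mathcal{E}_{A}\circ\Delta_{A}$, an immaterial difference), the reduction of every incoherent measurement to the incoherent-basis measurement plus classical post-processing, and the monotonicity/additivity of $D_{m}$ are exactly the ingredients of the published proof. The only differences are organizational: you run an explicit two-sided sandwich (lower bound via $\mathcal{M}=\mathcal{I}_{A}$ and $\mathcal{I}_{A}\circ\mathcal{F}_{A}\in\imeas$ for $\mathcal{F}_{A}\in\dio$, upper bound via $\mathcal{F}_{0}$) and absorb the ancilla inline through additivity of $D_{m}$, whereas the paper evaluates the min--max by splitting $D(X\Vert\Delta(Y))=S(\Delta(X))-S(X)+D(\Delta(X)\Vert\Delta(Y))$ in the adjoint picture and removes the ancilla with a separate lemma.
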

For the proof of Lemma~\ref{sup_thm: meas-coh gen power}, we need the following result:
\begin{lemma}\label{sup_thm: no need for anc for IM}
	For quantum channels $\mathcal{E}_{A}$ and $\mathcal{F}_{A}$, it holds that
	\begin{equation}
		\max_{\mathcal{M}_{A\widetilde{A}}\in \imeas}D_{m}(\mathcal{M}_{A\widetilde{A}}\circ \mathcal{E}_{A}\Vert \mathcal{M}_{A\widetilde{A}}\circ \mathcal{F}_{A})	=\max_{\mathcal{M}_{A}\in \imeas}D_{m}(\mathcal{M}_{A}\circ \mathcal{E}_{A}\Vert \mathcal{M}_{A}\circ \mathcal{F}_{A}),
	\end{equation}
	where the ancillary system $\widetilde{A} $ can be arbitrary.
\end{lemma}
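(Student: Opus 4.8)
The plan is to prove the two inequalities separately, the ``$\ge$'' direction being immediate and the ``$\le$'' direction carrying the content. For ``$\ge$'', simply take the ancilla $\widetilde{A}$ to be one-dimensional; then every incoherent measurement $\mathcal{M}_{A}$ on $A$ is already an incoherent measurement $\mathcal{M}_{A\widetilde{A}}$ on the composite system, so the left-hand maximum is at least the right-hand one. The substance is the reverse inequality: adjoining an incoherent ancilla to the measurement cannot help distinguish $\mathcal{M}\circ\mathcal{E}_{A}$ from $\mathcal{M}\circ\mathcal{F}_{A}$, where throughout $\mathcal{E}_{A}$ abbreviates $\mathcal{E}_{A}\otimes\mathsf{id}_{\widetilde{A}}$.

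For ``$\le$'', fix an arbitrary $\mathcal{M}_{A\widetilde{A}}\in\imeas$ with $\widetilde{A}$ of any dimension, and apply the structure theorem to the composite system $A\widetilde{A}$, whose incoherent basis is the product basis $\{\ket{i}_{A}\otimes\ket{k}_{\widetilde{A}}\}$. By Lemma~\ref{sup_thm: structure of a IM} (valid for an arbitrary system and any number of outcomes, as noted after Corollary~\ref{sup_thm: structure of a bipartite IM}) we may write $\mathcal{M}_{A\widetilde{A}}=\mathcal{S}_{A\widetilde{A}}\circ\mathcal{I}_{A\widetilde{A}}$, where $\mathcal{I}_{A\widetilde{A}}$ is the incoherent-basis measurement on $A\widetilde{A}$ and $\mathcal{S}_{A\widetilde{A}}$ is a classical post-processing channel. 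Using Proposition~\ref{sup_thm: dephasing ch = incoh. basis meas.} together with the factorization $\Delta_{A\widetilde{A}}=\Delta_{A}\otimes\Delta_{\widetilde{A}}$ of the dephasing channel, one has $\mathcal{I}_{A\widetilde{A}}=\mathcal{I}_{A}\otimes\mathcal{I}_{\widetilde{A}}$ as channels, and hence, since $(\mathcal{I}_{A}\otimes\mathcal{I}_{\widetilde{A}})\circ(\mathcal{E}_{A}\otimes\mathsf{id}_{\widetilde{A}})=(\mathcal{I}_{A}\circ\mathcal{E}_{A})\otimes\mathcal{I}_{\widetilde{A}}$,
\[
\mathcal{M}_{A\widetilde{A}}\circ(\mathcal{E}_{A}\otimes\mathsf{id}_{\widetilde{A}})=\mathcal{S}_{A\widetilde{A}}\circ\left[(\mathcal{I}_{A}\circ\mathcal{E}_{A})\otimes\mathcal{I}_{\widetilde{A}}\right],
\]
and likewise with $\mathcal{F}_{A}$ in place of $\mathcal{E}_{A}$.

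Now invoke the properties of the measurement relative entropy established above: monotonicity under the classical post-processing $\mathcal{S}_{A\widetilde{A}}$, followed by additivity over tensor products and faithfulness ($D_{m}(\mathcal{N}\Vert\mathcal{N})=0$), to obtain
\[
D_{m}(\mathcal{M}_{A\widetilde{A}}\circ\mathcal{E}_{A}\Vert\mathcal{M}_{A\widetilde{A}}\circ\mathcal{F}_{A})\le D_{m}(\mathcal{I}_{A}\circ\mathcal{E}_{A}\Vert\mathcal{I}_{A}\circ\mathcal{F}_{A})+D_{m}(\mathcal{I}_{\widetilde{A}}\Vert\mathcal{I}_{\widetilde{A}})=D_{m}(\mathcal{I}_{A}\circ\mathcal{E}_{A}\Vert\mathcal{I}_{A}\circ\mathcal{F}_{A}).
\]
Since $\mathcal{I}_{A}\in\imeas$, the right-hand side is at most $\max_{\mathcal{M}_{A}\in\imeas}D_{m}(\mathcal{M}_{A}\circ\mathcal{E}_{A}\Vert\mathcal{M}_{A}\circ\mathcal{F}_{A})$. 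As this bound holds for every $\mathcal{M}_{A\widetilde{A}}\in\imeas$, taking the supremum over the left-hand side gives ``$\le$''; combined with ``$\ge$'' this proves the claimed equality.

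There is no deep obstacle here: the argument is purely structural, and the only care needed is bookkeeping — verifying that the structure theorem applies verbatim to the composite system $A\widetilde{A}$ of arbitrary dimension, that the product of incoherent-basis measurements is again the incoherent-basis measurement on the composite, and that a joint classical post-processing on the product output is of the form to which the monotonicity item applies. The one conceptual point to highlight is that the structure theorem collapses an arbitrary incoherent ancilla-assisted measurement down to the bare incoherent-basis measurement $\mathcal{I}_{A}$ on $A$ up to a free classical post-processing, after which the ancilla (which only ever sees the identity channel) drops out by additivity.
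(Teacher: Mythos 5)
Your proposal is correct and follows essentially the same route as the paper's proof: decompose the ancilla-assisted incoherent measurement via the structure lemma as a classical post-processing of the incoherent-basis measurement on $A\widetilde{A}$, apply monotonicity of $D_{m}$ under post-processing, and reduce to $\mathcal{I}_{A}$ alone. The only difference is that you justify the step $D_{m}(\mathcal{I}_{A\widetilde{A}}\circ\mathcal{E}_{A}\Vert\mathcal{I}_{A\widetilde{A}}\circ\mathcal{F}_{A})=D_{m}(\mathcal{I}_{A}\circ\mathcal{E}_{A}\Vert\mathcal{I}_{A}\circ\mathcal{F}_{A})$ explicitly via additivity and faithfulness of $D_{m}$, which the paper asserts without detail.
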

\begin{proof}
	The left-hand side (LHS) is equal to or greater than the right-hand side (RHS) due to the broader domain being considered for maximization. The opposite direction is demonstrated below by employing the structure of a bipartite incoherent measurement as stated in Corollary~\ref{sup_thm: structure of a bipartite IM}. Denoting the set of classical post-processing channels as $\mathbf{S}$, it follows that
	\begin{align}
		\max_{\mathcal{M}_{A\widetilde{A}}\in \imeas}D_{m}(\mathcal{M}_{A\widetilde{A}}\circ \mathcal{E}_{A}\Vert \mathcal{M}_{A\widetilde{A}}\circ \mathcal{F}_{A})&=\max_{\mathcal{S}_{A\widetilde{A}}\in\mathbf{S}}D_{m}(\mathcal{S}_{A\widetilde{A}}\circ \mathcal{I}_{A\widetilde{A}}\circ \mathcal{E}_{A}\Vert \mathcal{S}_{A\widetilde{A}}\circ \mathcal{I}_{A\widetilde{A}}\circ \mathcal{F}_{A}) \\
		& \le D_{m}( \mathcal{I}_{A\widetilde{A}}\circ \mathcal{E}_{A}\Vert \mathcal{I}_{A\widetilde{A}}\circ \mathcal{F}_{A}) \\
		&=D_{m}( \mathcal{I}_{A}\circ \mathcal{E}_{A}\Vert \mathcal{I}_{A}\circ \mathcal{F}_{A}) \\
		&\le \max_{\mathcal{M}_{A}\in \imeas}D_{m}(\mathcal{M}_{A}\circ \mathcal{E}_{A}\Vert \mathcal{M}_{A}\circ \mathcal{F}_{A}),
	\end{align}
	where the first inequality follows from the monotonicity of $D_{m}$ under classical post-processing channels.
\end{proof}
Here's the proof of Lemma~\ref{sup_thm: meas-coh gen power}:
\begin{proof}
	A quantum channel $\mathcal{F}_{A}$ is $\dio$ if $\mathcal{F}_{A}^{\dag}\circ \Delta_{A} = \Delta_{A}\circ \mathcal{F}_{A}^{\dag}\circ \Delta_{A}$. The \mcp{} of a quantum channel $\mathcal{E}_{A}$ is given by
	\begin{align}
		C(\mathcal{E}_{A}) &\coloneqq  \min_{\mathcal{F}_{A}\in \dio} \max_{\mathcal{M}_{A\widetilde{A}}\in \imeas} D_{m}(\mathcal{M}_{A\widetilde{A}}\circ \mathcal{E}_{A}\Vert \mathcal{M}_{A\widetilde{A}}\circ \mathcal{F}_{A})\\
		&=\min_{\mathcal{F}_{A}\in \dio} \max_{\mathcal{M}_{A}\in \imeas} D_{m}(\mathcal{M}_{A}\circ \mathcal{E}_{A}\Vert \mathcal{M}_{A}\circ \mathcal{F}_{A})\\
		&= \min_{\mathcal{F}_{A}\in \dio} \max_{\mathcal{M}_{A}\in \imeas}\dfrac{1}{d} \sum_{x} D( \mathcal{E}_{A}^{\dag}(M^{x}_{A})\Vert \mathcal{F}_{A}^{\dag}(M^{x}_{A})),
	\end{align}
	where we used Lemma~\ref{sup_thm: no need for anc for IM} for the second equality. For $\mathcal{M}_{A} \in \imeas$, we have that $\mathcal{F}_{A}^{\dag}(M^{x}_{A}) =\mathcal{F}_{A}^{\dag} \circ\Delta_{A}(M^{x}_{A})= \Delta_{A}\circ \mathcal{F}_{A}^{\dag}(M^{x}_{A}) $. Using $D(X\Vert \Delta(Y))=S(\Delta(X))-S(X)+D(\Delta(X)\Vert \Delta(Y))$, we find that
	\begin{align}
		C(\mathcal{E}_{A}) & =  \min_{\mathcal{F}_{A}\in \dio} \max_{\mathcal{M}_{A}\in \imeas} \dfrac{1}{d} \sum_{x} \{ S( \Delta_{A}\circ \mathcal{E}_{A}^{\dag}(M^{x}_{A})) - S(\mathcal{E}_{A}^{\dag}(M^{x}_{A}))\\
		&\qquad+ D(\Delta_{A}\circ \mathcal{E}_{A}^{\dag}(M^{x}_{A} )\Vert \Delta_{A}\circ \mathcal{F}_{A}^{\dag}(M^{x}_{A})) \}\\
		&=  \max_{\mathcal{M}_{A}\in \imeas} \dfrac{1}{d} \sum_{x}\left \{ S(\Delta_{A} \circ \mathcal{E}_{A}^{\dag}(M^{x}_{A})) -S(\mathcal{E}_{A}^{\dag}(M^{x}_{A})) \right \}\\
		&= \max_{\mathcal{M}_{A}\in \imeas} C_{m} (\mathcal{M}_{A}\circ \mathcal{E}_{A})\\
		&=\max_{\mathcal{S}_{A}\in\mathbf{S}}C_{m} (\mathcal{S}_{A}\circ\mathcal{I}_{A}\circ \mathcal{E}_{A})\\
		&=C_{m} (\mathcal{I}_{A}\circ \mathcal{E}_{A}),
	\end{align}
	where we choose $\mathcal{F}_{A}=\mathcal{E}_{A}\circ \Delta_{A}\in \dio$ in the second equality regarding the non-negativity of the relative entropy. The fourth equality follows from Lemma~\ref{sup_thm: structure of a IM}, and the final equality holds from the monotonicity of $C_{m}$ under classical post-processing channels.
\end{proof}

\begin{theorem}
	The \mcp{} $C(\mathcal{E}_{A})$ of a quantum channel $\mathcal{E}_{A}$ is a resource monotone satisfying
	\begin{enumerate}
		\item Non-negativity and faithfulness: $C(\mathcal{E}_{A})\ge 0$. $C(\mathcal{E}_{A}) = 0$ if and only if $\mathcal{E}_{A}\in \dio$.
		\item Monotonicity: For a detection-incoherent channel $\mathcal{K}_{A}$ and a unital detection-incoherent channel $\mathcal{L}_{A}$, it holds that
		\begin{equation}
			C(\mathcal{K}_{A}\circ \mathcal{E}_{A}\circ \mathcal{L}_{A}) \le C(\mathcal{E}_{A}).
		\end{equation}
		\item Convexity: For quantum channels $\mathcal{E}_{A}$ and $\mathcal{G}_{A}$, it holds that $C(p \mathcal{E}_{A}+(1-p) \mathcal{G}_{A})\le pC(\mathcal{E}_{A})+(1-p)C(\mathcal{G}_{A})$ for $0\le p \le 1$.
	\end{enumerate}
\end{theorem}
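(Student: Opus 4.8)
The plan is to push all three properties through the closed-form characterization $C(\mathcal{E}_{A})=C_{m}(\mathcal{I}_{A}\circ\mathcal{E}_{A})=\max_{\mathcal{M}_{A}\in\imeas}C_{m}(\mathcal{M}_{A}\circ\mathcal{E}_{A})$ of Lemma~\ref{sup_thm: meas-coh gen power}, together with the nonnegativity/faithfulness, contractivity-under-unital-pre-processing, and joint-convexity properties of $D_{m}$, and the adjoint form of the $\dio$ condition, $\mathcal{K}_{A}\in\dio\iff\mathcal{K}_{A}^{\dag}\circ\Delta_{A}=\Delta_{A}\circ\mathcal{K}_{A}^{\dag}\circ\Delta_{A}$. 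A sub-step used repeatedly is that a $\dio$ channel maps incoherent measurements to incoherent measurements: for $\mathcal{M}_{A}=\{M_{x}\}\in\imeas$ and $\mathcal{K}_{A}\in\dio$, the POVM of $\mathcal{M}_{A}\circ\mathcal{K}_{A}$ is $\{\mathcal{K}_{A}^{\dag}(M_{x})\}$, and using $M_{x}=\Delta_{A}(M_{x})$ with the adjoint $\dio$ identity one gets $\Delta_{A}(\mathcal{K}_{A}^{\dag}(M_{x}))=\Delta_{A}\circ\mathcal{K}_{A}^{\dag}\circ\Delta_{A}(M_{x})=\mathcal{K}_{A}^{\dag}\circ\Delta_{A}(M_{x})=\mathcal{K}_{A}^{\dag}(M_{x})$, so $\mathcal{M}_{A}\circ\mathcal{K}_{A}\in\imeas$.

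\textbf{Nonnegativity and faithfulness.} Nonnegativity is immediate since $C(\mathcal{E}_{A})=\min_{\mathcal{F}\in\imeas}D_{m}(\mathcal{I}_{A}\circ\mathcal{E}_{A}\Vert\mathcal{F})\ge 0$. If $\mathcal{E}_{A}\in\dio$, applying the sub-step with $\mathcal{M}_{A}=\mathcal{I}_{A}$ and $\mathcal{K}_{A}=\mathcal{E}_{A}$ gives $\mathcal{I}_{A}\circ\mathcal{E}_{A}\in\imeas$, hence $C(\mathcal{E}_{A})=C_{m}(\mathcal{I}_{A}\circ\mathcal{E}_{A})=0$. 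Conversely, $C(\mathcal{E}_{A})=0$ forces each summand of the reduced formula for $C_{m}(\mathcal{I}_{A}\circ\mathcal{E}_{A})$ to vanish, i.e.\ each POVM element $\mathcal{E}_{A}^{\dag}(\ketbra{i}{i}_{A})$ is incoherent; then writing $\Delta_{A}(Y)=\sum_{i}\bra{i}Y\ket{i}_{A}\ketbra{i}{i}_{A}$ shows that $\mathcal{E}_{A}^{\dag}(\Delta_{A}(Y))$ is a real linear combination of incoherent operators, hence incoherent, which is precisely $\Delta_{A}\circ\mathcal{E}_{A}^{\dag}\circ\Delta_{A}=\mathcal{E}_{A}^{\dag}\circ\Delta_{A}$, i.e.\ $\mathcal{E}_{A}\in\dio$.

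\textbf{Monotonicity and convexity.} I would split monotonicity as $C(\mathcal{K}_{A}\circ\mathcal{E}_{A}\circ\mathcal{L}_{A})\le C(\mathcal{E}_{A}\circ\mathcal{L}_{A})\le C(\mathcal{E}_{A})$. For the first inequality, the max-form gives $C(\mathcal{K}_{A}\circ\mathcal{E}_{A}\circ\mathcal{L}_{A})=\max_{\mathcal{M}\in\imeas}C_{m}((\mathcal{M}\circ\mathcal{K}_{A})\circ\mathcal{E}_{A}\circ\mathcal{L}_{A})$, and since $\mathcal{K}_{A}\in\dio$ the measurement $\mathcal{M}\circ\mathcal{K}_{A}$ ranges over a subset of $\imeas$, so this is $\le\max_{\mathcal{N}\in\imeas}C_{m}(\mathcal{N}\circ\mathcal{E}_{A}\circ\mathcal{L}_{A})=C(\mathcal{E}_{A}\circ\mathcal{L}_{A})$. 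For the second, pick $\mathcal{F}^{*}\in\imeas$ with $C(\mathcal{E}_{A})=C_{m}(\mathcal{I}_{A}\circ\mathcal{E}_{A})=D_{m}(\mathcal{I}_{A}\circ\mathcal{E}_{A}\Vert\mathcal{F}^{*})$ (one may take $\mathcal{F}^{*}=\mathcal{I}_{A}\circ\mathcal{E}_{A}\circ\Delta_{A}$); since $\mathcal{L}_{A}\in\dio$ the sub-step gives $\mathcal{F}^{*}\circ\mathcal{L}_{A}\in\imeas$, so it is feasible for $C_{m}(\mathcal{I}_{A}\circ\mathcal{E}_{A}\circ\mathcal{L}_{A})$, and since $\mathcal{L}_{A}$ is unital, contractivity of $D_{m}$ under unital pre-processing yields $C(\mathcal{E}_{A}\circ\mathcal{L}_{A})\le D_{m}((\mathcal{I}_{A}\circ\mathcal{E}_{A})\circ\mathcal{L}_{A}\Vert\mathcal{F}^{*}\circ\mathcal{L}_{A})\le D_{m}(\mathcal{I}_{A}\circ\mathcal{E}_{A}\Vert\mathcal{F}^{*})=C(\mathcal{E}_{A})$. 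For convexity, linearity of composition gives $\mathcal{I}_{A}\circ(p\mathcal{E}_{A}+(1-p)\mathcal{G}_{A})=p(\mathcal{I}_{A}\circ\mathcal{E}_{A})+(1-p)(\mathcal{I}_{A}\circ\mathcal{G}_{A})$; taking optimal incoherent measurements $\mathcal{F}_{\mathcal{E}},\mathcal{F}_{\mathcal{G}}$ for the two summands, their convex mixture lies in the convex set $\imeas$, and joint convexity of $D_{m}$ gives $C(p\mathcal{E}_{A}+(1-p)\mathcal{G}_{A})\le pD_{m}(\mathcal{I}_{A}\circ\mathcal{E}_{A}\Vert\mathcal{F}_{\mathcal{E}})+(1-p)D_{m}(\mathcal{I}_{A}\circ\mathcal{G}_{A}\Vert\mathcal{F}_{\mathcal{G}})=pC(\mathcal{E}_{A})+(1-p)C(\mathcal{G}_{A})$.

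\textbf{Main obstacle.} The only step needing genuine care is the pre-processing half of monotonicity, because it uses \emph{both} hypotheses on $\mathcal{L}_{A}$: being $\dio$ (so that $\mathcal{F}^{*}\circ\mathcal{L}_{A}$ stays incoherent and hence feasible) and being unital (so that $D_{m}$ contracts under it as a pre-processing channel); dropping either hypothesis breaks the bound, which is exactly why the theorem restricts $\mathcal{L}_{A}$ in this way. A secondary point to handle cleanly is the ``basis projectors $\Rightarrow$ all operators'' argument in the converse of faithfulness, which upgrades incoherence of each $\mathcal{E}_{A}^{\dag}(\ketbra{i}{i}_{A})$ to the full $\dio$ identity.
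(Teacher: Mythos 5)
Your proof is correct and follows essentially the same route as the paper's: all three properties are funneled through the closed form $C(\mathcal{E}_{A})=C_{m}(\mathcal{I}_{A}\circ\mathcal{E}_{A})=\max_{\mathcal{M}\in\imeas}C_{m}(\mathcal{M}\circ\mathcal{E}_{A})$ of Lemma~\ref{sup_thm: meas-coh gen power}, together with the data-processing and joint-convexity properties of $D_{m}$ and the fact that pre-composition with a $\dio$ channel preserves $\imeas$. The only differences are cosmetic: you prove inline the sub-lemma the paper merely cites (monotonicity of $C_{m}$ under unital $\dio$ pre-processing, via feasibility of $\mathcal{F}^{*}\circ\mathcal{L}_{A}$ and contractivity of $D_{m}$ under unital pre-processing), and you obtain the converse of faithfulness from the reduced entropy formula plus the adjoint $\dio$ identity rather than the paper's appeal to $\dio$ being the largest $\imeas$-preserving set.
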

\begin{proof}
	\begin{enumerate}
		\item Non-negativity and faithfulness can be shown as follows: $C(\mathcal{E}_{A})\ge 0$ due to the non-negativity of $D_{m}$. $C(\mathcal{E}_{A})=0$ for $\mathcal{E}_{A}\in \dio$ from the definition. If $\mathcal{E}_{A}\notin \dio$, there exists $\mathcal{M}_{A\widetilde{A}}\in \imeas$ such that $\mathcal{M}_{A\widetilde{A}}\circ \mathcal{E}_{A}\notin \imeas$ implying $C(\mathcal{E}_{A})>0$ since $\dio$ is the largest set that preserves $\imeas$. Therefore, it holds that $C(\mathcal{E}_{A}) = 0$ if and only if $\mathcal{E}_{A}\in \dio$.
		\item For a detection-incoherent channel $\mathcal{K}_{A}$ and a unital detection-incoherent channel $\mathcal{L}_{A}$, it follows that
		\begin{align}
			C(\mathcal{K}_{A}\circ \mathcal{E}_{A}\circ \mathcal{L}_{A}) & = \max_{\mathcal{M}_{A}\in \imeas} C_{m} (\mathcal{M}_{A}\circ \mathcal{K}_{A}\circ \mathcal{E}_{A}\circ \mathcal{L}_{A})\\
			& \le \max_{\mathcal{M}_{A}\in \imeas} C_{m} (\mathcal{M}_{A}\circ \mathcal{E}_{A}\circ \mathcal{L}_{A})\\
			& \le \max_{\mathcal{M}_{A}\in \imeas} C_{m} (\mathcal{M}_{A}\circ \mathcal{E}_{A})\\
			&= C(\mathcal{E}_{A}),
		\end{align}
		where the first inequality follows due to $\mathcal{M}_{A}\circ \mathcal{K}_{A}\in \imeas$, and the second inequality holds from the monotonicity of $C_{m}$ under unital detection-incoherent channels. The last equality is given by Lemma~\ref{sup_thm: meas-coh gen power}.
		\item Convexity follows from the convexity of $C_{m}$ and Lemma~\ref{sup_thm: meas-coh gen power}:
		We first show that $C_{m}$ is convex: for a set of quantum measurements $\{\mathcal{M}_{i}\}$, let $C_{m}(\mathcal{M}_{i}) = D_{m}(\mathcal{M}_{i}\Vert \mathcal{F}_{i}^{\ast})$ for an incoherent measurement $\mathcal{F}_{i}^{\ast}$. Then, we have that, for a probability vector $\vec{p}$ satisfying $ \sum_{i} p_{i}=1$, it follows that
		\begin{align}
			C_{m} \left (\sum_{i}p_{i}\mathcal{M}_{i}\right )& =\min_{\mathcal{F}\in\imeas} D_{m}\left (\sum_{i}p_{i}\mathcal{M}_{i}\Vert \mathcal{F}\right ) \\
			& \le D_{m}\left (\sum_{i}p_{i}\mathcal{M}_{i}\Vert \sum_{i}p_{i}\mathcal{F}_{i}^{\ast}\right )\\
			&\le \sum_{i}p_{i} D_{m}\left (\mathcal{M}_{i}\Vert \mathcal{F}_{i}^{\ast}\right )\\
			&=\sum_{i} p_{i} C_{m}(\mathcal{M}_{i}),
		\end{align}
		where the second inequality follows from the joint convexity of the measurement relative entropy $D_{m}$. Now Lemma~\ref{sup_thm: meas-coh gen power} and convexity of $C_{m}$ imply that
		\begin{align}
			C(p \mathcal{E}_{A}+(1-p) \mathcal{G}_{A}) & = C_{m}(\mathcal{I}_{A}\circ (p \mathcal{E}_{A}+(1-p) \mathcal{G}_{A}))\\
			&\le p C_{m}(\mathcal{I}_{A}\circ \mathcal{E}_{A}) +(1-p) C_{m}(\mathcal{I}_{A}\circ \mathcal{G}_{A})\\
			&\le p C(\mathcal{E}_{A}) +(1-p) C(\mathcal{G}_{A}).
		\end{align}
	\end{enumerate}
\end{proof}

\subsection{Measurement-entangling power}
For measurement-entanglement, the free static resource is the closure of the set of LOCC measurements ($\cloccm$). Taking free dynamic resource as the closure of the set of LOCC channels ($\clocc$), we define the \mep{} of a quantum channel as follows:
\begin{definition}
	The \mep{} of a quantum channel $\mathcal{E}_{AB}$ is defined as
	\begin{equation}
		E(\mathcal{E}_{AB}) \coloneqq \min_{\mathcal{F}\in \clocc(A:B)} \max_{\mathcal{M}\in \cloccm(A\widetilde{A}:B\widetilde{B})} D_{m}(\mathcal{M}_{A\widetilde{A}B\widetilde{B}}\circ \mathcal{E}_{AB}\Vert \mathcal{M}_{A\widetilde{A}B\widetilde{B}}\circ \mathcal{F}_{AB}),
	\end{equation}
	where the ancillary systems $\widetilde{A}$ and $\widetilde{B}$ can be arbitrary.
\end{definition}
There is no preferred basis for entanglement, which is the cause of lack of simplified expression like Lemma~\ref{sup_thm: meas-coh gen power} for the \mcp{}.
\begin{theorem}
	The \mep{} of a quantum channel $\mathcal{E}_{A}$ is a resource monotone satisfying
	\begin{enumerate}
		\item Non-negativity and faithfulness: $E(\mathcal{E}_{AB}\ge 0)$. $E(\mathcal{E}_{AB}) = 0$ if and only if $\mathcal{E}_{AB}\in \clocc(A\!:\!B)$.
		\item Monotonicity: For a $\clocc$ channel $\mathcal{K}_{AB}$ and a unital $\clocc$ channel $\mathcal{L}_{AB}$, it follows that
		\begin{equation}
			E(\mathcal{K}_{AB}\circ \mathcal{E}_{AB}\circ \mathcal{L}_{AB}) \le E(\mathcal{E}_{AB}).
		\end{equation}
		\item Convexity: For quantum channels $\mathcal{E}_{AB}$ and $\mathcal{G}_{AB}$, it holds that $E(p \mathcal{E}_{AB}+(1-p) \mathcal{G}_{AB})\le pE(\mathcal{E}_{AB})+(1-p)E(\mathcal{G}_{AB})$ for $0\le p \le 1$.
	\end{enumerate}
\end{theorem}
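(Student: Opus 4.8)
The plan is to follow the template of the corresponding theorem for the \mcp{}, but since there is no simplification of $E$ analogous to Lemma~\ref{sup_thm: meas-coh gen power}, every estimate is produced directly from the minimax definition. The only tools I need are the basic properties of the measurement relative entropy $D_{m}$ established above --- principally non-negativity together with its equality condition, monotonicity of $D_{m}$ under pre-composition of the two measurements by a common unital channel, and joint convexity --- together with the structural facts that $\clocc(A\!:\!B)$ is closed, convex, stable under composition, and stable under tensoring with an identity channel, while $\cloccm$ contains every product (local) measurement and satisfies $\mathcal{M}\circ\mathcal{K}\in\cloccm(A\widetilde{A}\!:\!B\widetilde{B})$ whenever $\mathcal{M}\in\cloccm(A\widetilde{A}\!:\!B\widetilde{B})$ and $\mathcal{K}\in\clocc(A\!:\!B)$. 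Non-negativity $E(\mathcal{E}_{AB})\ge 0$ is immediate from $D_{m}\ge 0$, and if $\mathcal{E}_{AB}\in\clocc(A\!:\!B)$ then the choice $\mathcal{F}_{AB}=\mathcal{E}_{AB}$ in the outer minimization gives $E(\mathcal{E}_{AB})=0$.

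The converse of faithfulness is the only step that invokes an ingredient beyond the listed properties of $D_{m}$, so it is where I would be most careful; it is the main obstacle, though not a deep one. Suppose $E(\mathcal{E}_{AB})=0$. Because the set of channels is compact and $\clocc(A\!:\!B)$ is closed, and $\mathcal{F}\mapsto\max_{\mathcal{M}}D_{m}(\mathcal{M}\circ\mathcal{E}_{AB}\Vert\mathcal{M}\circ\mathcal{F})$ is lower semicontinuous, the outer minimum is attained (alternatively, pass to a subsequential limit of $\epsilon$-minimizers); hence there is $\mathcal{F}^{\ast}\in\clocc(A\!:\!B)$ with $D_{m}(\mathcal{M}\circ\mathcal{E}_{AB}\Vert\mathcal{M}\circ\mathcal{F}^{\ast})=0$, and so $\mathcal{M}\circ\mathcal{E}_{AB}=\mathcal{M}\circ\mathcal{F}^{\ast}$, for every $\mathcal{M}\in\cloccm$. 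Choosing $\mathcal{M}=\mathcal{M}^{\mathrm{IC}}_{A}\otimes\mathcal{M}^{\mathrm{IC}}_{B}$, a tensor product of local informationally complete POVMs --- which is a product, hence local, hence $\cloccm$ measurement, and is informationally complete on $AB$ --- equality of outcome statistics on all inputs forces $\mathcal{E}_{AB}(\rho_{AB})=\mathcal{F}^{\ast}(\rho_{AB})$ for every state $\rho_{AB}$; since density operators span the operator space, $\mathcal{E}_{AB}=\mathcal{F}^{\ast}$ as linear maps, hence as channels, so $\mathcal{E}_{AB}\in\clocc(A\!:\!B)$.

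For monotonicity, let $\mathcal{F}^{\ast}\in\clocc$ attain $E(\mathcal{E}_{AB})=\max_{\mathcal{M}}D_{m}(\mathcal{M}\circ\mathcal{E}_{AB}\Vert\mathcal{M}\circ\mathcal{F}^{\ast})$, the case $E(\mathcal{E}_{AB})=+\infty$ being trivial. Using $\mathcal{K}_{AB}\circ\mathcal{F}^{\ast}\circ\mathcal{L}_{AB}\in\clocc(A\!:\!B)$ as a feasible point in the outer minimization for $E(\mathcal{K}_{AB}\circ\mathcal{E}_{AB}\circ\mathcal{L}_{AB})$, then --- for each fixed $\mathcal{M}$ --- regarding $\mathcal{M}\circ\mathcal{K}_{AB}\circ\mathcal{E}_{AB}$ and $\mathcal{M}\circ\mathcal{K}_{AB}\circ\mathcal{F}^{\ast}$ as measurements and applying monotonicity of $D_{m}$ under pre-composition by the unital channel $\mathcal{L}_{AB}$, and finally substituting $\mathcal{M}'=\mathcal{M}\circ\mathcal{K}_{AB}\in\cloccm$ and enlarging the maximization back to all of $\cloccm$, I obtain $E(\mathcal{K}_{AB}\circ\mathcal{E}_{AB}\circ\mathcal{L}_{AB})\le\max_{\mathcal{M}}D_{m}(\mathcal{M}\circ\mathcal{K}_{AB}\circ\mathcal{E}_{AB}\Vert\mathcal{M}\circ\mathcal{K}_{AB}\circ\mathcal{F}^{\ast})\le\max_{\mathcal{M}'\in\cloccm}D_{m}(\mathcal{M}'\circ\mathcal{E}_{AB}\Vert\mathcal{M}'\circ\mathcal{F}^{\ast})=E(\mathcal{E}_{AB})$. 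As in the \mcp{} case, only $\mathcal{L}_{AB}$ is required to be unital (so that the $D_{m}$-monotonicity step applies), while $\mathcal{K}_{AB}$ enters merely through $\mathcal{M}\circ\mathcal{K}_{AB}\in\cloccm$.

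For convexity, take $\mathcal{F}_{1}^{\ast},\mathcal{F}_{2}^{\ast}\in\clocc$ optimal for $E(\mathcal{E}_{AB})$ and $E(\mathcal{G}_{AB})$; then $p\mathcal{F}_{1}^{\ast}+(1-p)\mathcal{F}_{2}^{\ast}\in\clocc(A\!:\!B)$ by convexity and is feasible in the minimization for $E(p\mathcal{E}_{AB}+(1-p)\mathcal{G}_{AB})$. Combining linearity of composition, $\mathcal{M}\circ(p\mathcal{E}_{AB}+(1-p)\mathcal{G}_{AB})=p\,\mathcal{M}\circ\mathcal{E}_{AB}+(1-p)\,\mathcal{M}\circ\mathcal{G}_{AB}$, with the joint convexity of $D_{m}$ and the elementary bound $\max(f+g)\le\max f+\max g$ yields $E(p\mathcal{E}_{AB}+(1-p)\mathcal{G}_{AB})\le p\max_{\mathcal{M}}D_{m}(\mathcal{M}\circ\mathcal{E}_{AB}\Vert\mathcal{M}\circ\mathcal{F}_{1}^{\ast})+(1-p)\max_{\mathcal{M}}D_{m}(\mathcal{M}\circ\mathcal{G}_{AB}\Vert\mathcal{M}\circ\mathcal{F}_{2}^{\ast})=pE(\mathcal{E}_{AB})+(1-p)E(\mathcal{G}_{AB})$. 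In summary, the only nonroutine ingredient is the separating family of LOCC measurements used for converse faithfulness; the remaining three parts are direct transcriptions of the single-letter properties of $D_{m}$ and the stability properties of $\clocc$ and $\cloccm$.
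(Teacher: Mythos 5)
Your proof is correct, and for monotonicity and convexity it is essentially the paper's argument: insert $\mathcal{K}_{AB}\circ\mathcal{F}\circ\mathcal{L}_{AB}\in\clocc(A\!:\!B)$ as a feasible point, use monotonicity of $D_{m}$ under pre-composition by the unital channel $\mathcal{L}_{AB}$, absorb $\mathcal{K}_{AB}$ via $\mathcal{M}\circ\mathcal{K}_{AB}\in\cloccm$, and for convexity combine the joint convexity of $D_{m}$ with $\max(f+g)\le\max f+\max g$ after mixing the two optimal free channels. The one place you genuinely diverge is the converse of faithfulness: the paper certifies $\mathcal{M}\circ\mathcal{E}_{AB}\neq\mathcal{M}\circ\mathcal{F}_{AB}$ for every $\mathcal{F}_{AB}\in\clocc(A\!:\!B)$ by choosing a single LOCC measurement whose POVM contains the maximally entangled (Choi) projector on $A\widetilde{A}B\widetilde{B}$ and invoking the Choi--Jamio{\l}kowski isomorphism, whereas you dispense with ancillas and use a tensor product of local informationally complete POVMs, arguing that $D_{m}=0$ forces equality of all outcome statistics and hence $\mathcal{E}_{AB}=\mathcal{F}^{\ast}$ as linear maps. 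Both separating families are legitimate elements of $\cloccm$ and both arguments work; yours is slightly more economical (no ancillary systems, no Choi state) and, notably, you make explicit the attainment of the outer minimum via closedness of $\clocc$ and lower semicontinuity of $\mathcal{F}\mapsto\max_{\mathcal{M}}D_{m}(\mathcal{M}\circ\mathcal{E}_{AB}\Vert\mathcal{M}\circ\mathcal{F})$, a point the paper passes over silently even though its own "for any $\mathcal{F}$ the value is positive, hence $E>0$" step needs exactly this compactness/semicontinuity input. The paper's Choi-projector choice, in turn, is the more informative witness, since the single outcome already pins down the whole channel; but as a proof of the theorem the two routes are interchangeable.
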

\begin{proof}
	\begin{enumerate}
		\item Non-negativity and faithfulness can be shown as follows: $E(\mathcal{E}_{AB})\ge 0$ due to the non-negativity of $D_{m}$. $E(\mathcal{E}_{AB})=0$ for $\mathcal{E}_{AB}\in \clocc(A\!:\!B)$ from the definition. For $\mathcal{E}_{AB}\notin \clocc(A\!:\!B)$, consider a LOCC measurement $\mathcal{M}_{A\widetilde{A}B\widetilde{B}}$ that includes $\ketbra{\Phi^{+}}{\Phi^{+}}_{A\widetilde{A}B\widetilde{B}}$ as its POVM element, where $\ket{\Phi^{+}}_{A\widetilde{A}B\widetilde{B}}$ is a maximally entangled state. By the Choi-Jamiolkowski isomorphism, it holds that $\mathcal{E}_{AB}(\Phi^{+}_{A\widetilde{A}B\widetilde{B}})\neq \mathcal{F}_{AB}(\Phi^{+}_{A\widetilde{A}B\widetilde{B}})$ for any $\mathcal{F}_{AB}\in \clocc(A\!:\!B)$. Therefore, it holds that $E(\mathcal{E}_{AB}) = 0$ if and only if $\mathcal{E}_{AB}\in \clocc(A\!:\!B)$.
		\item For a $\clocc$ channel $\mathcal{K}_{AB}$ and a unital $\clocc$ channel $\mathcal{L}_{AB}$, it follows that
		\begin{align}
			E(\mathcal{K}_{AB}\circ \mathcal{E}_{AB}\circ \mathcal{L}_{AB})	& = \min_{\mathcal{F}\in \clocc(A:B)}\max_{\mathcal{M}\in \cloccm(A\widetilde{A}:B\widetilde{B})}\\
			&\qquad D_{m}(\mathcal{M}_{A\widetilde{A}B\widetilde{B}}\circ \mathcal{K}_{AB}\circ \mathcal{E}_{AB}\circ \mathcal{L}_{AB}\Vert \mathcal{M}_{A\widetilde{A}B\widetilde{B}}\circ \mathcal{F}_{AB})\\
			&\le \min_{\mathcal{F}\in \clocc(A:B)}\max_{\mathcal{M}\in \cloccm(A\widetilde{A}:B\widetilde{B})}\\
			&\qquad D_{m}(\mathcal{M}_{A\widetilde{A}B\widetilde{B}}\circ \mathcal{K}_{AB}\circ \mathcal{E}_{AB}\circ \mathcal{L}_{AB}\Vert \mathcal{M}_{A\widetilde{A}B\widetilde{B}}\circ \mathcal{K}_{AB}\circ \mathcal{F}_{AB}\circ \mathcal{L}_{AB})\\
			&\le \min_{\mathcal{F}\in \clocc(A:B)}\max_{\mathcal{M}\in \cloccm(A\widetilde{A}:B\widetilde{B})}\\
			&\qquad D_{m}(\mathcal{M}_{A\widetilde{A}B\widetilde{B}}\circ \mathcal{K}_{AB}\circ \mathcal{E}_{AB}\Vert \mathcal{M}_{A\widetilde{A}B\widetilde{B}}\circ \mathcal{K}_{AB}\circ \mathcal{F}_{AB})\\
			&\le \min_{\mathcal{F}\in \clocc(A:B)}\max_{\mathcal{M}\in \cloccm(A\widetilde{A}:B\widetilde{B})} \\ &\qquad D_{m}(\mathcal{M}_{A\widetilde{A}B\widetilde{B}}\circ \mathcal{E}_{AB}\Vert \mathcal{M}_{A\widetilde{A}B\widetilde{B}}\circ \mathcal{F}_{AB})\\
			&=E(\mathcal{E}_{AB}),
		\end{align}
		where the first inequality follows because $\mathcal{K}_{AB}\circ \mathcal{F}_{AB}\circ \mathcal{L}_{AB}\in \clocc(A\!:\!B)$, the second inequality holds from the monotonicity of $D_{m}$, and the third inequality is due to $\mathcal{M}_{A\widetilde{A}B\widetilde{B}}\circ \mathcal{K}_{AB}\in \cloccm(A\widetilde{A}\!:\! B\widetilde{B})$.
		\item Convexity:
		\begin{align}
			E(p \mathcal{E}_{AB}+(1-p) \mathcal{G}_{AB})& = \min_{\mathcal{F}\in \clocc(A:B)}\max_{\mathcal{M}\in \cloccm(A\widetilde{A}:B\widetilde{B})}\\
			&\qquad D_{m}(\mathcal{M}_{A\widetilde{A}B\widetilde{B}}\circ  \{p \mathcal{E}_{AB}+(1-p) \mathcal{G}_{AB}\}\Vert \mathcal{M}_{A\widetilde{A}B\widetilde{B}}\circ \mathcal{F}_{AB})\\
			& = \min_{\mathcal{F},\mathcal{F}'\in \clocc(A:B)}\max_{\mathcal{M}\in \cloccm(A\widetilde{A}:B\widetilde{B})}\\
			&\qquad D_{m}(\mathcal{M}_{A\widetilde{A}B\widetilde{B}}\circ  \{p \mathcal{E}_{AB}+(1-p) \mathcal{G}_{AB}\}\Vert\\
			&\qquad\qquad \mathcal{M}_{A\widetilde{A}B\widetilde{B}}\circ \{p\mathcal{F}_{AB}+(1-p)\mathcal{F'}_{AB})\\
			& \le \min_{\mathcal{F},\mathcal{F}'\in \clocc(A:B)}\max_{\mathcal{M}\in \cloccm(A\widetilde{A}:B\widetilde{B})}\\
			&\qquad \{pD_{m}(\mathcal{M}_{A\widetilde{A}B\widetilde{B}}\circ  \mathcal{E}_{AB}\Vert \mathcal{M}_{A\widetilde{A}B\widetilde{B}}\circ\mathcal{F}_{AB})\\
			&\qquad +(1-p)D_{m}(\mathcal{M}_{A\widetilde{A}B\widetilde{B}}\circ  \mathcal{G}_{AB}\Vert \mathcal{M}_{A\widetilde{A}B\widetilde{B}}\circ\mathcal{F'}_{AB})\}\\
			& \le \min_{\mathcal{F},\mathcal{F}'\in \clocc(A:B)}\\
			&\quad \bigg\{\max_{\mathcal{M}\in \cloccm(A\widetilde{A}:B\widetilde{B})}pD_{m}(\mathcal{M}_{A\widetilde{A}B\widetilde{B}}\circ  \mathcal{E}_{AB}\Vert \mathcal{M}_{A\widetilde{A}B\widetilde{B}}\circ\mathcal{F}_{AB})\\
			&\quad +\max_{\mathcal{M}'\in \cloccm(A\widetilde{A}:B\widetilde{B})}(1-p)D_{m}(\mathcal{M}_{A\widetilde{A}B\widetilde{B}}'\circ  \mathcal{G}_{AB}\Vert \mathcal{M}_{A\widetilde{A}B\widetilde{B}}'\circ\mathcal{F'}_{AB})\bigg\}\\
			&=pE(\mathcal{E}_{AB}) + (1-p) E(\mathcal{G}_{AB}),
		\end{align}
		where the joint convexity of $D_{m}$ is used for the first inequality.
	\end{enumerate}
\end{proof}

\section{Main results}
\subsection{Measurement-cohering power conversion to \mep{}}
Here we demonstrate our main results that the \mcp{} of a quantum channel can be converted to the \mep{} in composition with free channels. The following lemma will be used to prove the main results:
\begin{lemma}\label{sup_thm: no need for anc. meas.}
	For quantum channels $\mathcal{E}_{AB}$ and $\mathcal{F}_{AB}$, it holds that
	\begin{align}
		\max_{\mathcal{M}\in \cloccm(A\widetilde{A} : B\widetilde{B})}&D_{m}(\mathcal{M}_{A\widetilde{A}B\widetilde{B}}\circ \Delta_{AB}\circ\mathcal{E}_{AB}\Vert \mathcal{M}_{A\widetilde{A}B\widetilde{B}}\circ \Delta_{AB}\circ\mathcal{F}_{AB})\nonumber\\
		&=\max_{\mathcal{M}\in \cloccm(A : B)}D_{m}(\mathcal{M}_{AB}\circ \Delta_{AB}\circ\mathcal{E}_{AB}\Vert \mathcal{M}_{AB}\circ \Delta_{AB}\circ\mathcal{F}_{AB})
	\end{align}
\end{lemma}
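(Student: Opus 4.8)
The plan is to establish the two inequalities of the claimed identity separately. The direction ``$\geq$'' is immediate: in the left-hand maximization the ancillary systems $\widetilde{A},\widetilde{B}$ may be taken one-dimensional, whence $\cloccm(A\widetilde{A}:B\widetilde{B})=\cloccm(A:B)$ and the two optimizations literally coincide.

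For the reverse inequality I would fix an arbitrary $\mathcal{M}_{A\widetilde{A}B\widetilde{B}}=\{M_{\vec{x}}\}\in\cloccm(A\widetilde{A}:B\widetilde{B})$ and exploit the block structure created by $\Delta_{AB}$. Because $\Delta_{AB}^{\dag}=\Delta_{AB}$, the POVM elements of $\mathcal{M}_{A\widetilde{A}B\widetilde{B}}\circ\Delta_{AB}$ are $\Delta_{AB}(M_{\vec{x}})=\sum_{ij}\ketbra{ij}{ij}_{AB}\otimes N^{ij}_{\vec{x}}$, where $N^{ij}_{\vec{x}}:=\bra{ij}_{AB}M_{\vec{x}}\ket{ij}_{AB}\geq 0$ acts on $\widetilde{A}\widetilde{B}$ and $\sum_{\vec{x}}N^{ij}_{\vec{x}}=I_{\widetilde{A}\widetilde{B}}$ for every $(i,j)$ (since $\sum_{\vec{x}}M_{\vec{x}}=I$). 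I then introduce the \emph{refined} measurement $\mathcal{R}_{A\widetilde{A}B\widetilde{B}}$ with POVM $\{\ketbra{ij}{ij}_{AB}\otimes N^{ij}_{\vec{x}}\}_{i,j,\vec{x}}$, which is a legitimate resolution of the identity, and verify the factorization $\mathcal{M}_{A\widetilde{A}B\widetilde{B}}\circ\Delta_{AB}=\mathcal{C}\circ\mathcal{R}_{A\widetilde{A}B\widetilde{B}}$, where $\mathcal{C}$ is the classical post-processing that discards the block label $(i,j)$ and retains $\vec{x}$. By monotonicity of $D_{m}$ under classical post-processing channels this yields
\[
D_{m}\big(\mathcal{M}_{A\widetilde{A}B\widetilde{B}}\circ\Delta_{AB}\circ\mathcal{E}_{AB}\,\big\Vert\,\mathcal{M}_{A\widetilde{A}B\widetilde{B}}\circ\Delta_{AB}\circ\mathcal{F}_{AB}\big)\leq D_{m}\big(\mathcal{R}_{A\widetilde{A}B\widetilde{B}}\circ\mathcal{E}_{AB}\,\big\Vert\,\mathcal{R}_{A\widetilde{A}B\widetilde{B}}\circ\mathcal{F}_{AB}\big).
\]

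It remains to evaluate the right-hand side, whose POVM elements are $\mathcal{E}_{AB}^{\dag}(\ketbra{ij}{ij}_{AB})\otimes N^{ij}_{\vec{x}}$ and $\mathcal{F}_{AB}^{\dag}(\ketbra{ij}{ij}_{AB})\otimes N^{ij}_{\vec{x}}$. Using the elementary identity $D(X\otimes Y\Vert Z\otimes Y)=D(X\Vert Z)\,\tr Y$ for positive operators — with the convention $0\log 0=0$, which keeps both sides consistent when some $N^{ij}_{\vec{x}}$ vanish or when a relative entropy is infinite — together with $\sum_{\vec{x}}\tr N^{ij}_{\vec{x}}=\tr I_{\widetilde{A}\widetilde{B}}=d_{\widetilde{A}}d_{\widetilde{B}}$, the ancillary factor cancels against the $1/\dim$ normalization of $D_{m}$, leaving $\tfrac{1}{d^{2}}\sum_{ij}D\big(\mathcal{E}_{AB}^{\dag}(\ketbra{ij}{ij}_{AB})\,\big\Vert\,\mathcal{F}_{AB}^{\dag}(\ketbra{ij}{ij}_{AB})\big)$. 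Since $\{\ketbra{ij}{ij}_{AB}\}$ is the POVM of the incoherent basis measurement and $\mathcal{I}_{AB}\circ\Delta_{AB}=\mathcal{I}_{AB}$ (the incoherent basis measurement reads only the diagonal, which $\Delta_{AB}$ leaves unchanged), this equals $D_{m}(\mathcal{I}_{AB}\circ\Delta_{AB}\circ\mathcal{E}_{AB}\Vert\mathcal{I}_{AB}\circ\Delta_{AB}\circ\mathcal{F}_{AB})$. As $\mathcal{I}_{AB}=\mathcal{I}_{A}\otimes\mathcal{I}_{B}\in\cloccm(A:B)$, this term is at most the right-hand maximum in the lemma; taking the supremum over $\mathcal{M}_{A\widetilde{A}B\widetilde{B}}$ then gives ``$\leq$'', and in fact shows the common value is attained at $\mathcal{M}=\mathcal{I}_{AB}$.

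The substance of the argument — and thus the step I expect to be the crux — lies in the second paragraph: once $\Delta_{AB}$ is applied, every resulting POVM element is block-diagonal over the incoherent basis of $AB$, so the block index can be read out by a finer measurement at no cost to $D_{m}$, after which the leftover ancilla POVM $\{N^{ij}_{\vec{x}}\}$ is just a resolution of the identity on $\widetilde{A}\widetilde{B}$ and contributes only the factor $d_{\widetilde{A}}d_{\widetilde{B}}$ that the $1/\dim$ prefactor of $D_{m}$ absorbs. The remaining work is bookkeeping; the only delicacy is the handling of rank-deficient or vanishing $N^{ij}_{\vec{x}}$ and of possibly infinite relative entropies, which the standard conventions dispatch uniformly on both sides of each identity.
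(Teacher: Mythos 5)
Your proof is correct and follows essentially the same route as the paper's: after $\Delta_{AB}$ every POVM element is block-diagonal in the incoherent basis of $AB$, the ancilla factor is stripped using $D(X\otimes Y\Vert Z\otimes Y)=\tr(Y)\,D(X\Vert Z)$, and the remaining quantity is identified with $D_{m}(\mathcal{I}_{AB}\circ\mathcal{E}_{AB}\Vert\mathcal{I}_{AB}\circ\mathcal{F}_{AB})$ with $\mathcal{I}_{AB}=\mathcal{I}_{A}\otimes\mathcal{I}_{B}\in\cloccm(A\!:\!B)$. The only cosmetic difference is that where you refine the measurement and invoke monotonicity of $D_{m}$ under the coarse-graining post-processing, the paper applies the equivalent inequality $D(P_{0}+P_{1}\Vert Q_{0}+Q_{1})\le D(P_{0}\Vert Q_{0})+D(P_{1}\Vert Q_{1})$ directly to the block sum; your bookkeeping also makes the cancellation of the ancilla dimension explicit rather than assuming $d$-dimensional ancillas as in the paper's $1/d^{4}$ prefactor.
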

\begin{proof}
	The left-hand side (LHS) is equal to or greater than the right-hand side (RHS) due to the broader domain being considered for maximization. The opposite direction is demonstrated below:
	\begin{align}
		&\max_{\mathcal{M}\in \cloccm(A\widetilde{A} : B\widetilde{B})}D_{m}(\mathcal{M}_{A\widetilde{A}B\widetilde{B}}\circ \Delta_{AB}\circ\mathcal{E}_{AB}\Vert 
		\mathcal{M}_{A\widetilde{A}B\widetilde{B}}\circ \Delta_{AB}\circ\mathcal{F}_{AB})\\
		&=\max_{\mathcal{M}\in \cloccm(A\widetilde{A} : B\widetilde{B})}\dfrac{1}{d^{4}}\sum_{x}D(\mathcal{E}_{AB}^{\dag}\circ\Delta_{AB}(M_{A\widetilde{A}B\widetilde{B}}^{x})\Vert  \mathcal{F}_{AB}^{\dag}\circ\Delta_{AB}(M_{A\widetilde{A}B\widetilde{B}}^{x}))\\
		&=\max_{\mathcal{M}\in \cloccm(A\widetilde{A} : B\widetilde{B})}\dfrac{1}{d^{4}}\sum_{x}D\bigg (\sum_{i,j}\mathcal{E}_{AB}^{\dag}(\ketbra{ij}{ij}_{AB})\otimes \bra{ij}M_{A\widetilde{A}B\widetilde{B}}^{x}\ket{ij}_{AB}\bigg\Vert \\
		&\qquad\qquad\qquad\qquad\quad \sum_{i,j} \mathcal{F}_{AB}^{\dag}(\ketbra{ij}{ij}_{AB})\otimes \bra{ij}M_{A\widetilde{A}B\widetilde{B}}^{x}\ket{ij}_{AB}\bigg )\\
		&\le \max_{\mathcal{M}\in \cloccm(A\widetilde{A} : B\widetilde{B})}\dfrac{1}{d^{4}}\sum_{x}\sum_{i,j} \tr_{\widetilde{A}\widetilde{B}}( \bra{ij}M_{A\widetilde{A}B\widetilde{B}}^{x}\ket{ij}_{AB})\\
		&\qquad\qquad\qquad\qquad\quad  D (\mathcal{E}_{AB}^{\dag}(\ketbra{ij}{ij}_{AB})\Vert \mathcal{F}_{AB}^{\dag}(\ketbra{ij}{ij}_{AB}) )\\
		&\le \dfrac{1}{d^{2}} \sum_{i,j} D (\mathcal{E}_{AB}^{\dag}(\ketbra{ij}{ij}_{AB})\Vert \mathcal{F}_{AB}^{\dag}(\ketbra{ij}{ij}_{AB}) )\\
		&=D_{m}(\mathcal{I}_{AB}\circ \mathcal{E}_{AB} \Vert \mathcal{I}_{AB}\circ \mathcal{F}_{AB})\\
		&=D_{m}(\mathcal{I}_{AB}\circ \Delta_{AB} \circ \mathcal{E}_{AB} \Vert \mathcal{I}_{AB}\circ \Delta_{AB} \circ \mathcal{F}_{AB})\\
		&\le \max_{\mathcal{M}\in \cloccm(A: B)} D_{m}(\mathcal{M}_{AB}\circ \Delta_{AB}\circ \mathcal{E}_{AB}\Vert \mathcal{M}_{AB}\circ \Delta_{AB}\circ \mathcal{F}_{AB}),
	\end{align}
	where the first inequality follows from the two facts~\cite{watrous2018TTQI}:
	\begin{equation}
		D(P_{0}+P_{1}\Vert Q_{0}+Q_{1}) \le D(P_{0}\Vert Q_{0}) + D(P_{1}\Vert Q_{1})
	\end{equation}
	for positive semidefinite matrices $P_{0},P_{1},Q_{0}$, and $Q_{1}$, and,
	\begin{equation}
		D(X\otimes Y\Vert K\otimes L) = \tr(Y) D(X\Vert K) + \tr(X) D(Y\Vert L)
	\end{equation}
	for positive semidefinite matrices $X\neq 0, Y\neq 0, K$ and $L$.
\end{proof}
\begin{theorem}\label{sup_thm: mcgp bounds megp}
	For a quantum channel $\mathcal{E}_{A}$, a $\dio$ channel $\mathcal{K}_{AB}$, and a unital $\dio$ channel $\mathcal{L}_{AB}$, it holds that
	\begin{equation}
		C(\mathcal{E}_{A}) \ge E(\Delta_{AB}\circ \mathcal{K}_{AB}\circ\mathcal{E}_{A}\circ \mathcal{L}_{AB}).
	\end{equation}
\end{theorem}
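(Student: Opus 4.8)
The plan is to turn the minimization over free channels in the definition of $E$ into an upper bound by comparing the composite channel against one carefully chosen $\clocc(A\!:\!B)$ channel. Write $\widetilde{\mathcal{E}}_{AB}=\Delta_{AB}\circ\mathcal{K}_{AB}\circ\mathcal{E}_{A}\circ\mathcal{L}_{AB}$, with $\mathcal{E}_A$ understood as $\mathcal{E}_A\otimes\mathsf{id}_B$. As the comparison channel I would take $\mathcal{F}_{AB}:=\Delta_{AB}\circ\mathcal{K}_{AB}\circ(\mathcal{E}_A\circ\Delta_A)\circ\mathcal{L}_{AB}$, i.e., the same circuit with $\mathcal{E}_A$ replaced by the $\dio$ channel $\mathcal{E}_A\circ\Delta_A$ already used in the proof of Lemma~\ref{sup_thm: meas-coh gen power}. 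Since $E(\widetilde{\mathcal{E}}_{AB})=\min_{\mathcal{F}\in\clocc(A:B)}\max_{\mathcal{M}\in\cloccm(A\widetilde{A}:B\widetilde{B})}D_m(\mathcal{M}\circ\widetilde{\mathcal{E}}_{AB}\Vert\mathcal{M}\circ\mathcal{F}_{AB})$, it then suffices to (i) verify $\mathcal{F}_{AB}\in\clocc(A\!:\!B)$ and (ii) bound the resulting maximum by $C(\mathcal{E}_A)$.

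For (i), I would peel off all the dephasings using the $\dio$ conditions. The identity $\Delta_{AB}\circ\mathcal{K}_{AB}=\Delta_{AB}\circ\mathcal{K}_{AB}\circ\Delta_{AB}$ pushes a $\Delta_{AB}$ in front of $\mathcal{E}_A\circ\Delta_A$; because $\Delta_{AB}=\Delta_A\otimes\Delta_B$ this collapses $\Delta_{AB}\circ((\mathcal{E}_A\circ\Delta_A)\otimes\mathsf{id}_B)$ to the local classical channel $(\Delta_A\circ\mathcal{E}_A\circ\Delta_A)\otimes\Delta_B$. That channel both starts and ends with dephasing, so it can be flanked by $\Delta_{AB}$'s, and applying the $\dio$ condition once more to $\mathcal{L}_{AB}$ yields the factorization $\mathcal{F}_{AB}=[\Delta_{AB}\circ\mathcal{K}_{AB}\circ\Delta_{AB}]\circ[(\Delta_A\circ\mathcal{E}_A\circ\Delta_A)\otimes\Delta_B]\circ[\Delta_{AB}\circ\mathcal{L}_{AB}\circ\Delta_{AB}]$. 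The outer two factors are classical channels in the product incoherent basis (they dephase the input, apply a channel, and dephase again), and any such channel is $\mathbf{LOCC}$ -- Alice and Bob measure locally in the incoherent basis, broadcast their outcomes classically, and jointly re-prepare an incoherent state -- while the middle factor acts only on $A$ (tensored with dephasing on $B$). Hence $\mathcal{F}_{AB}$ is a composition of $\mathbf{LOCC}(A\!:\!B)$ channels and lies in $\clocc(A\!:\!B)$.

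For (ii), since $\widetilde{\mathcal{E}}_{AB}$ and $\mathcal{F}_{AB}$ both have $\Delta_{AB}$ as the last operation, Lemma~\ref{sup_thm: no need for anc. meas.} (applied with $\mathcal{K}_{AB}\circ\mathcal{E}_A\circ\mathcal{L}_{AB}$ and $\mathcal{K}_{AB}\circ\mathcal{E}_A\circ\Delta_A\circ\mathcal{L}_{AB}$ in the roles of its $\mathcal{E}_{AB},\mathcal{F}_{AB}$) removes the ancillas, reducing the maximum to $\mathcal{M}\in\cloccm(A\!:\!B)$. Unitality of $\mathcal{L}_{AB}$ and monotonicity of $D_m$ under pre-composition with a unital channel then drop $\mathcal{L}_{AB}$, leaving $D_m(\mathcal{M}\circ\Delta_{AB}\circ\mathcal{K}_{AB}\circ\mathcal{E}_A\Vert\mathcal{M}\circ\Delta_{AB}\circ\mathcal{K}_{AB}\circ\mathcal{E}_A\circ\Delta_A)$. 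The key point is that $\mathcal{N}_{AB}:=\mathcal{M}\circ\Delta_{AB}\circ\mathcal{K}_{AB}$ is itself an incoherent measurement on $AB$: its POVM element for outcome $x$ is $\mathcal{K}_{AB}^{\dag}(\Delta_{AB}(M^{x}))$, which is positive, sums over $x$ to $\mathcal{K}_{AB}^{\dag}(I)=I$, and is diagonal in the product incoherent basis by the adjoint form $\mathcal{K}_{AB}^{\dag}\circ\Delta_{AB}=\Delta_{AB}\circ\mathcal{K}_{AB}^{\dag}\circ\Delta_{AB}$ of the $\dio$ condition. By Corollary~\ref{sup_thm: structure of a bipartite IM}, $\mathcal{N}_{AB}=\mathcal{S}_{AB}\circ\mathcal{I}_{AB}$ for some classical post-processing $\mathcal{S}_{AB}$, so monotonicity of $D_m$ under classical post-processing bounds the displayed quantity by $D_m(\mathcal{I}_{AB}\circ\mathcal{E}_A\Vert\mathcal{I}_{AB}\circ\mathcal{E}_A\circ\Delta_A)$. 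Since $\mathcal{I}_{AB}=\mathcal{I}_A\otimes\mathcal{I}_B$ and $\mathcal{E}_A,\Delta_A$ act only on $A$, this factorizes as $(\mathcal{I}_A\circ\mathcal{E}_A)\otimes\mathcal{I}_B$ versus $(\mathcal{I}_A\circ\mathcal{E}_A\circ\Delta_A)\otimes\mathcal{I}_B$, so additivity of $D_m$ under tensor products and $D_m(\mathcal{I}_B\Vert\mathcal{I}_B)=0$ leave $D_m(\mathcal{I}_A\circ\mathcal{E}_A\Vert\mathcal{I}_A\circ\mathcal{E}_A\circ\Delta_A)$. By the reduced formula \eqref{eq: measurement relative entropy of coherence reduced} this equals $C_m(\mathcal{I}_A\circ\mathcal{E}_A)$, hence $C(\mathcal{E}_A)$ by Lemma~\ref{sup_thm: meas-coh gen power}; as this bound is independent of $\mathcal{M}$, the maximum is at most $C(\mathcal{E}_A)$ and the theorem follows.

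The step I expect to be the main obstacle is (i): proving that the comparison channel $\mathcal{F}_{AB}$ is genuinely in $\clocc(A\!:\!B)$, which hinges on the dephasing-peeling manipulations together with the elementary (but worth stating) fact that every classical channel in a product basis is $\mathbf{LOCC}$. The other point requiring care -- and the conceptual core -- is the observation that a $\dio$ channel sandwiched between dephasings and an arbitrary measurement is again an incoherent measurement; once that is in hand, the rest is routine manipulation with the properties of $D_m$ and the known expression for the measurement-cohering power.
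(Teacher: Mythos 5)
Your argument is correct, and it reaches the bound by a genuinely different route than the paper. The paper never fixes a comparison channel: it starts from the min--max defining $C(\mathcal{E}_{A})$, restricts the incoherent measurement to the form $\mathcal{M}_{AB}\circ\mathcal{K}_{AB}$, applies data-processing under the unital $\mathcal{L}_{AB}$, inserts $\Delta_{AB}$ for free, trades the maximization over $\imeas$ for one over $\cloccm$ behind the dephasing, removes the ancillas with Lemma~\ref{sup_thm: no need for anc. meas.}, and only at the end enlarges the minimization from channels of the form $\mathcal{F}_{A}\otimes\mathsf{id}_{B}$ with $\mathcal{F}_{A}\in\dio$ to all of $\clocc(A\!:\!B)$, recovering the definition of $E$. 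You instead collapse the minimization in $E$ at once by exhibiting the single comparison channel $\mathcal{F}_{AB}=\Delta_{AB}\circ\mathcal{K}_{AB}\circ(\mathcal{E}_{A}\circ\Delta_{A})\circ\mathcal{L}_{AB}$, verifying its membership in $\clocc(A\!:\!B)$ by dephasing-peeling, and bounding the one remaining maximization by $C(\mathcal{E}_{A})$ via the ancilla-removal lemma, monotonicity of $D_{m}$ under the unital pre-composition $\mathcal{L}_{AB}$, the observation that $\mathcal{M}\circ\Delta_{AB}\circ\mathcal{K}_{AB}\in\imeas$ (positivity and normalization because $\mathcal{K}^{\dag}_{AB}$ is CP and unital, diagonality by the adjoint $\dio$ relation) combined with Corollary~\ref{sup_thm: structure of a bipartite IM}, additivity of $D_{m}$, and $D(X\Vert\Delta(X))=S(\Delta(X))-S(X)$ together with Lemma~\ref{sup_thm: meas-coh gen power}; all of these steps check out. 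Incidentally, your step (i) need not be proved from scratch: since $\Delta_{AB}\circ\mathcal{K}_{AB}\circ(\mathcal{E}_{A}\circ\Delta_{A})\circ\mathcal{L}_{AB}=\Delta_{AB}\circ\mathcal{K}_{AB}\circ\mathcal{E}_{A}\circ(\Delta_{AB}\circ\mathcal{L}_{AB})\circ\Delta_{AB}$ and $\Delta_{AB}\circ\mathcal{L}_{AB}$ is again a unital $\dio$ channel, Proposition~\ref{sup_thm: incoh. meas. sandwiched ch. is locc} gives $\mathcal{F}_{AB}\in\clocc(A\!:\!B)$ directly, and your factorization into measure-and-prepare channels in the product incoherent basis is essentially a re-derivation of that proposition (including the point that preparing correlated diagonal states needs classical communication, which LOCC permits). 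What your route buys is that the only free-channel membership to be verified concerns one explicit channel, verified concretely, whereas the paper's final enlargement of the minimization leaves the $\clocc$ membership of its comparison channels more implicit; what it costs is the extra LOCC verification, which, as noted, the paper already supplies.
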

\begin{proof}
	Let $\mathcal{K}_{AB}$ be a $\dio$ channel and $\mathcal{L}_{AB}$ be a unital $\dio$ channel. We have that
	\begin{align}
		C(\mathcal{E}_{A})& \ge\min_{\mathcal{F}_{A}\in \dio} \max_{\mathcal{M}_{AB}\in \imeas} D_{m}\left ( \mathcal{M}_{AB}\circ \mathcal{E}_{A}\Vert  \mathcal{M}_{AB}\circ \mathcal{F}_{A}\right ) \\
		&\ge \min_{\mathcal{F}_{A}\in \dio} \max_{\mathcal{M}_{AB}\in \imeas} D_{m}\left ( \mathcal{M}_{AB}\circ \mathcal{K}_{AB}\circ\mathcal{E}_{A}\Vert  \mathcal{M}_{AB}\circ \mathcal{K}_{AB}\circ \mathcal{F}_{A}\right ) \\
		&\ge \min_{\mathcal{F}_{A}\in \dio} \max_{\mathcal{M}_{AB}\in \imeas} D_{m}\left ( \mathcal{M}_{AB}\circ \mathcal{K}_{AB}\circ\mathcal{E}_{A}\circ \mathcal{L}_{AB}\Vert  \mathcal{M}_{AB}\circ \mathcal{K}_{AB}\circ \mathcal{F}_{A}\circ \mathcal{L}_{AB}\right ) \\
		&= \min_{\mathcal{F}_{A}\in \dio} \max_{\mathcal{M}_{AB}\in \imeas} \nonumber \\
        &\qquad D_{m}\left ( \mathcal{M}_{AB}\circ \Delta_{AB}\circ \mathcal{K}_{AB}\circ\mathcal{E}_{A}\circ \mathcal{L}_{AB}\Vert  \mathcal{M}_{AB}\circ \Delta_{AB}\circ \mathcal{K}_{AB}\circ \mathcal{F}_{A}\circ \mathcal{L}_{AB}\right ) \\
		&= \min_{\mathcal{F}_{A}\in \dio} \max_{\mathcal{M}\in \cloccm (A:B)}\nonumber \\
        &\qquad D_{m}\left ( \mathcal{M}_{AB}\circ \Delta_{AB}\circ \mathcal{K}_{AB}\circ\mathcal{E}_{A}\circ \mathcal{L}_{AB}\Vert  \mathcal{M}_{AB}\circ \Delta_{AB}\circ \mathcal{K}_{AB}\circ \mathcal{F}_{A}\circ \mathcal{L}_{AB}\right ) \\
		&= \min_{\mathcal{F}_{A}\in \dio} \max_{\mathcal{M}\in \cloccm (A\widetilde{A}:B\widetilde{B})} \nonumber\\
        &\qquad D_{m}\left ( \mathcal{M}_{A\widetilde{A}B\widetilde{B}}\circ \Delta_{AB}\circ \mathcal{K}_{AB}\circ\mathcal{E}_{A}\circ \mathcal{L}_{AB}\Vert  \mathcal{M}_{A\widetilde{A}B\widetilde{B}}\circ \Delta_{AB}\circ \mathcal{K}_{AB}\circ \mathcal{F}_{A}\circ \mathcal{L}_{AB}\right )\\
		&\ge \min_{\mathcal{F}\in \clocc(A:B)} \max_{\mathcal{M}\in \cloccm (A\widetilde{A}:B\widetilde{B})} \nonumber\\
        &\qquad D_{m}\left ( \mathcal{M}_{A\widetilde{A}B\widetilde{B}}\circ \Delta_{AB}\circ \mathcal{K}_{AB}\circ\mathcal{E}_{A}\circ \mathcal{L}_{AB}\Vert  \mathcal{M}_{A\widetilde{A}B\widetilde{B}}\circ \Delta_{AB}\circ \mathcal{K}_{AB}\circ \mathcal{F}_{AB}\circ \mathcal{L}_{AB}\right )\\
		&=E(\Delta_{AB}\circ \mathcal{K}_{AB}\circ\mathcal{E}_{A}\circ \mathcal{L}_{AB}),
	\end{align}
	where the second inequality holds for $\mathcal{M}_{AB}\circ \mathcal{K}_{AB}\in\imeas$, the third inequality follows from the data-processing inequality of $D_{m}$. The third equality holds due to Lemma~\ref{sup_thm: no need for anc. meas.}.
\end{proof}
\begin{theorem}\label{sup_thm: mcgp to megp by cnot}
	For a quantum channel $\mathcal{E}_{A}$, it holds that
	\begin{equation}
		C(\mathcal{E}_{A}) \le E(\Delta_{AB}\circ \mathcal{E}_{A} \circ \mathcal{U}^{\dag}_{\cnot}),
	\end{equation}
	where the generalized CNOT channel $\mathcal{U}_{\cnot}$ is a unitary channel given by the unitary operator
	\begin{equation}
		U_{\cnot} = \sum_{i,j=0}^{d-1}\ketbra{i,i\oplus j}{ij}_{AB}
	\end{equation}
	with $\oplus$ denoting addition modulo $d$.
\end{theorem}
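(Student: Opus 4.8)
The plan is to establish $C(\mathcal{E}_{A}) \le E(\Delta_{AB}\circ\mathcal{E}_{A}\circ\mathcal{U}^{\dag}_{\cnot})$; combined with Theorem~\ref{sup_thm: mcgp bounds megp} applied with $\mathcal{K}_{AB}=\mathsf{id}$ and the unital incoherent unitary $\mathcal{L}_{AB}=\mathcal{U}^{\dag}_{\cnot}\in\dio$, this yields the equality quoted in the main text. I would proceed in two stages: first reduce the dynamical quantity $E$ to the static quantity $E_{m}$, then bound $E_{m}$ below via the ``coherification'' of $\mathcal{E}_{A}$ induced by $U_{\cnot}$.

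\emph{Stage 1 (from $E$ to $E_{m}$).} Set $\mathcal{G}_{AB}=\Delta_{AB}\circ(\mathcal{E}_{A}\otimes\mathsf{id}_{B})\circ\mathcal{U}^{\dag}_{\cnot}$. Since $\Delta_{AB}^{2}=\Delta_{AB}$, we have $\Delta_{AB}\circ\mathcal{G}_{AB}=\mathcal{G}_{AB}$, so $\mathcal{G}_{AB}$ is itself a quantum measurement. In the definition of $E(\mathcal{G}_{AB})$ I would drop the ancillas and fix $\mathcal{M}=\mathcal{I}_{AB}$, the local incoherent-basis measurement, which lies in $\cloccm(A\!:\!B)$ and equals $\Delta_{AB}$ as a channel by Proposition~\ref{sup_thm: dephasing ch = incoh. basis meas.}. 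Because $\mathcal{I}_{AB}\circ\mathcal{G}_{AB}=\mathcal{G}_{AB}$ and $\mathcal{I}_{AB}\circ\mathcal{F}_{AB}\in\cloccm(A\!:\!B)$ for every $\mathcal{F}\in\clocc(A\!:\!B)$, one gets
\[
E(\mathcal{G}_{AB}) \ge \min_{\mathcal{F}\in\clocc(A:B)} D_{m}\big(\mathcal{G}_{AB}\,\big\Vert\,\mathcal{I}_{AB}\circ\mathcal{F}_{AB}\big) \ge \min_{\mathcal{F}'\in\cloccm(A:B)} D_{m}(\mathcal{G}_{AB}\Vert\mathcal{F}') = E_{m}(\mathcal{G}_{AB}),
\]
and since $\cloccm(A\!:\!B)\subseteq\sepm(A\!:\!B)$ it even suffices to lower-bound the minimum over all separable measurements.

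\emph{Stage 2 (lower bound $E_{m}$).} A short computation of $\mathcal{G}_{AB}^{\dag}=\mathcal{U}_{\cnot}\circ(\mathcal{E}_{A}^{\dag}\otimes\mathsf{id}_{B})\circ\Delta_{AB}$ shows that $\mathcal{G}_{AB}$ has POVM elements $M_{ab}=U_{\cnot}\,(N_{a}\otimes\ketbra{b}{b}_{B})\,U_{\cnot}^{\dag}$ with $N_{a}=\mathcal{E}_{A}^{\dag}(\ketbra{a}{a}_{A})$, so $\{N_{a}\}$ is exactly the POVM of $\mathcal{I}_{A}\circ\mathcal{E}_{A}$ and, by Lemma~\ref{sup_thm: meas-coh gen power}, $C(\mathcal{E}_{A})=C_{m}(\mathcal{I}_{A}\circ\mathcal{E}_{A})=\tfrac1d\sum_{a}\{S(\Delta N_{a})-S(N_{a})\}$. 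Since $\ketbra{b}{b}_{B}$ differs from $\ketbra{0}{0}_{B}$ by a local incoherent unitary that commutes with $U_{\cnot}$, each normalized element $\hat M_{ab}=M_{ab}/\tr M_{ab}$ is locally-unitarily equivalent to the maximally correlated (coherified) state built from $\hat N_{a}=N_{a}/\tr N_{a}$; the relative-entropy coherence-to-entanglement correspondence \cite{streltsov2015MeasuringCoherenceEntanglement,hjkim2022RelationQuantumCoherence} then gives $E_{R}(\hat M_{ab})\ge C_{R}(\hat N_{a})$. For any separable POVM $\{F_{ab}\}$ with $\sum_{ab}F_{ab}=I_{AB}$ I would split, for each $(a,b)$,
\[
D(M_{ab}\Vert F_{ab}) = \tr(M_{ab})\log\frac{\tr M_{ab}}{\tr F_{ab}} + \tr(M_{ab})\,D(\hat M_{ab}\Vert\hat F_{ab}) \ \ge\ \tr(M_{ab})\log\frac{\tr M_{ab}}{\tr F_{ab}} + \big\{S(\Delta N_{a})-S(N_{a})\big\},
\]
using $D(\hat M_{ab}\Vert\hat F_{ab})\ge E_{R}(\hat M_{ab})\ge C_{R}(\hat N_{a})$ and the scaling identity $\tr(M_{ab})\,C_{R}(\hat N_{a})=S(\Delta N_{a})-S(N_{a})$. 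Summing over $a,b$: the first terms add to $d^{2}$ times a classical relative entropy between the probability vectors $\{\tr M_{ab}/d^{2}\}$ and $\{\tr F_{ab}/d^{2}\}$ (both normalizations follow from $\sum_{ab}M_{ab}=\sum_{ab}F_{ab}=I_{AB}$), hence are nonnegative; the second terms sum, the $b$-sum contributing a factor $d$, to $d\sum_{a}\{S(\Delta N_{a})-S(N_{a})\}=d^{2}C(\mathcal{E}_{A})$. Dividing by $d^{2}$ gives $D_{m}(\mathcal{G}_{AB}\Vert\mathcal{F}_{AB})\ge C(\mathcal{E}_{A})$ for every separable—hence every $\cloccm$—measurement, i.e. $E_{m}(\mathcal{G}_{AB})\ge C(\mathcal{E}_{A})$, and with Stage~1 the theorem follows.

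The main obstacle I anticipate is the single-element bound $E_{R}(\hat M_{ab})\ge C_{R}(\hat N_{a})$: it encodes the claim that $U_{\cnot}$ faithfully stores the coherence of $\hat N_{a}$ as bipartite entanglement, and making it rigorous amounts to the ``easy'' half of the relative-entropy coherence–entanglement correspondence, equivalently a lower bound on the relative entropy of entanglement of a maximally correlated state. A secondary subtlety is that $\mathcal{E}_{A}$ is not assumed unital, so the traces $\tr M_{ab}=\tr N_{a}$ need not all equal $1$; this is precisely why the $\log(\tr M_{ab}/\tr F_{ab})$ terms must be retained and identified as a nonnegative classical relative entropy rather than discarded.
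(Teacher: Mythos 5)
Your proof is correct, and it runs on the same engine as the paper's: reduce the dynamical min--max to a static quantity for the measurement $\Delta_{AB}\circ\mathcal{E}_{A}\circ\mathcal{U}^{\dag}_{\cnot}$, whose POVM elements $U_{\cnot}(N_{a}\otimes\ketbra{b}{b}_{B})U_{\cnot}^{\dag}$ with $N_{a}=\mathcal{E}_{A}^{\dag}(\ketbra{a}{a}_{A})$ are maximally correlated ``coherifications'' of the elements of $\mathcal{I}_{A}\circ\mathcal{E}_{A}$, and then bound each element's relative entropy of entanglement from below by the corresponding coherence. The differences are in execution, and both work in your favor. In Stage 1 you only need the one-sided steps of fixing $\mathcal{M}=\mathcal{I}_{AB}\in\cloccm(A\!:\!B)$ and using the easy containment $\mathcal{I}_{AB}\circ\mathcal{F}_{AB}\in\cloccm(A\!:\!B)$, whereas the paper reaches the same intermediate quantity through a chain of equalities invoking Lemma~\ref{sup_thm: no need for anc. meas.}, Corollary~\ref{sup_thm: structure of a bipartite IM}, and the converse claim that any $\cloccm$ measurement with $d^{2}$ outcomes is a $\clocc$ channel followed by the incoherent basis measurement; your route bypasses all of that. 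In Stage 2 you explicitly split off $\tr(M_{ab})\log(\tr M_{ab}/\tr F_{ab})$ and recognize the sum of these terms as a nonnegative classical relative entropy (using $\sum_{ab}\tr M_{ab}=\sum_{ab}\tr F_{ab}=d^{2}$), which cleanly handles the fact that the competing separable POVM elements $F_{ab}$ need not have the same traces as the $M_{ab}$ --- a normalization point that the paper's one-line appeal to the subnormalized $E_{R}$ of Lemma~\ref{sup_thm: rel. entropy bound} (whose minimization is over separable operators of \emph{equal} trace) leaves implicit. The one ingredient you take as given, $E_{R}(\hat M_{ab})\ge C_{R}(\hat N_{a})$ together with the scaling identity $\tr(M_{ab})\,C_{R}(\hat N_{a})=S(\Delta_{A}(N_{a}))-S(N_{a})$, is exactly the normalized-state form of Lemma~\ref{sup_thm: rel. entropy bound}, which the paper itself imports from \cite{hjkim2022RelationQuantumCoherence}, so this is not a gap; your identification of it as the crux is accurate.
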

The proof of Theorem~\ref{sup_thm: mcgp to megp by cnot} needs a bound for the relative entropy of entanglement of a positive semidefinite operator as given below:
\begin{lemma}\label{sup_thm: rel. entropy bound}
	For a positive semidefinite matrix $X_{AB}\ge 0$, the relative entropy of entanglement of $X_{AB}$ is defined as
	\begin{equation}
		E_{R}(X_{AB}) = \min_{Y\in \sepo(A:B), \tr{Y}=\tr{X}} D(X_{AB}\Vert Y_{AB}),
	\end{equation}
	where $\sepo(A:B)$ is the set of separable operators~\cite{watrous2018TTQI}. The relative entropy of entanglement of $X_{AB}$ is lower-bounded as follows:
	\begin{equation}
		E_{R}(X_{AB})\ge \max\{ S(X_{A})-S(X_{AB}), S(X_{B})-S(X_{AB})\}.
	\end{equation}
\end{lemma}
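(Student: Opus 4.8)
The plan is to establish the one‑sided bound $E_R(X_{AB})\ge S(X_A)-S(X_{AB})$ for each separable competitor $Y_{AB}$, then minimize over $Y_{AB}$, and finally invoke the $A\leftrightarrow B$ symmetry to obtain the other term and hence the maximum in the statement. If $D(X_{AB}\Vert Y_{AB})=+\infty$ there is nothing to prove, so I would assume $\operatorname{supp}X_{AB}\subseteq\operatorname{supp}Y_{AB}$, which keeps every quantity below finite.

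The one ingredient that genuinely uses separability is the elementary domination $Y_{AB}\preceq Y_A\otimes I_B$, where $Y_A=\tr_B Y_{AB}$: writing a separable $Y_{AB}=\sum_k q_k\,\alpha_k\otimes\beta_k$ with $q_k\ge 0$ and $\alpha_k,\beta_k$ density operators, each $\beta_k\preceq I_B$ gives $\alpha_k\otimes\beta_k\preceq\alpha_k\otimes I_B$, and summing over $k$ yields the claim. Operator monotonicity of $\log$ then promotes this to $\log Y_{AB}\preceq \log Y_A\otimes I_B$ (read on $\operatorname{supp}Y_{AB}$), so that, since $X_{AB}\ge 0$,
\begin{equation}
	\tr(X_{AB}\log Y_{AB})\le \tr\bigl(X_{AB}(\log Y_A\otimes I_B)\bigr)=\tr(X_A\log Y_A).
\end{equation}

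I would then just assemble the relative entropies: from $D(X_{AB}\Vert Y_{AB})=-S(X_{AB})-\tr(X_{AB}\log Y_{AB})$ and the identity $-\tr(X_A\log Y_A)=D(X_A\Vert Y_A)+S(X_A)$ one gets
\begin{equation}
	D(X_{AB}\Vert Y_{AB})\ \ge\ D(X_A\Vert Y_A)+S(X_A)-S(X_{AB})\ \ge\ S(X_A)-S(X_{AB}),
\end{equation}
where the last step uses $\tr X_A=\tr X_{AB}=\tr Y_{AB}=\tr Y_A$ and hence $D(X_A\Vert Y_A)\ge 0$ by Klein's inequality. Minimizing over $Y_{AB}\in\sepo(A:B)$, then repeating the whole argument with $A$ and $B$ interchanged, and taking the larger of the two bounds finishes the proof.

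The routine calculations here are all one‑liners; the step I expect to need the most care is the passage $Y_{AB}\preceq Y_A\otimes I_B\ \Rightarrow\ \log Y_{AB}\preceq\log Y_A\otimes I_B$ when $Y_A$ (or $Y_{AB}$) is rank‑deficient — there one restricts every operator to $\operatorname{supp}Y_{AB}$, noting that this support lies inside $\operatorname{supp}(Y_A\otimes I_B)$ by the domination and that $X_{AB}$ vanishes outside it by the assumed support inclusion, after which operator monotonicity of $\log$ applies verbatim.
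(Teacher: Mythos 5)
Your proof is correct and follows the standard route — the paper itself does not spell out an argument but defers to Lemma 12 of its reference [Kim and Lee, Phys. Rev. A 106, 022401 (2022)], which rests on the same ingredients you use: the separability-based domination $Y_{AB}\le Y_A\otimes I_B$, operator monotonicity of the logarithm, and nonnegativity of $D(X_A\Vert Y_A)$ from the equal-trace constraint. The support technicality you flag is real but harmless; the cleanest way to settle it is to apply operator monotonicity to $Y_{AB}+\epsilon I\le Y_A\otimes I_B+\epsilon I$ and let $\epsilon\to 0^{+}$, using $\operatorname{supp}X_{AB}\subseteq\operatorname{supp}Y_{AB}$ so both traces converge.
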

For the proof of the above lemma, see Lemma 12 of Ref.~\cite{hjkim2022RelationQuantumCoherence}. The proof of Theorem~\ref{sup_thm: mcgp to megp by cnot} is stated below:
\begin{proof}
	\begin{align}
		E(\Delta_{AB}\circ \mathcal{E}_{A} \circ \mathcal{U}^{\cnot\dag}_{AB})& = \min_{\mathcal{F}\in \clocc(A:B)} \max_{\mathcal{M}\in \cloccm(A\widetilde{A}:B\widetilde{B})}\\
		&\qquad D_{m}(\mathcal{M}_{A\widetilde{A}B\widetilde{B}}\circ \Delta_{AB}\circ \mathcal{E}_{A}\circ \mathcal{U}^{\cnot \dag}_{AB}\Vert \mathcal{M}_{A\widetilde{A}B\widetilde{B}}\circ  \mathcal{F}_{AB}) \\
		& \ge \min_{\mathcal{F}\in \clocc(A:B)} \max_{\mathcal{M}\in \cloccm(A\widetilde{A}:B\widetilde{B})}\\
		&\qquad D_{m}(\mathcal{M}_{A\widetilde{A}B\widetilde{B}}\circ \Delta_{AB}\circ \mathcal{E}_{A}\circ \mathcal{U}^{\cnot \dag}_{AB}\Vert \mathcal{M}_{A\widetilde{A}B\widetilde{B}}\circ \Delta_{AB}\circ \mathcal{F}_{AB}) \\
		& = \min_{\mathcal{F}\in \clocc(A:B)} \max_{\mathcal{M}_{AB}\in \cloccm(A:B)}\\
		&\qquad D_{m}(\mathcal{M}_{AB}\circ \Delta_{AB}\circ \mathcal{E}_{A}\circ \mathcal{U}^{\cnot \dag}_{AB}\Vert \mathcal{M}_{AB}\circ \Delta_{AB}\circ  \mathcal{F}_{AB}) \\
		& = \min_{\mathcal{F}\in \clocc(A:B)} \max_{\mathcal{M}_{AB}\in \imeas} D_{m}(\mathcal{M}_{AB}\circ  \mathcal{E}_{A}\circ \mathcal{U}^{\cnot \dag}_{AB}\Vert \mathcal{M}_{AB}\circ  \mathcal{F}_{AB}) \\
		& = \min_{\mathcal{F}\in \clocc(A:B)} D_{m}(\mathcal{I}_{AB}\circ  \mathcal{E}_{A}\circ \mathcal{U}^{\cnot \dag}_{AB}\Vert \mathcal{I}_{AB}\circ  \mathcal{F}_{AB}) \\
		& = \min_{\mathcal{M}\in \cloccm(A:B)} D_{m}(\mathcal{I}_{AB}\circ  \mathcal{E}_{A}\circ \mathcal{U}^{\cnot \dag}_{AB}\Vert \mathcal{M}_{AB}) \\
		& \ge \dfrac{1}{d^{2}}\sum_{i,j} E_{R}(\mathcal{U}_{AB}^{\cnot}\circ \mathcal{E}_{A}^{\dag}(\ketbra{ij}{ij}_{AB}))\\
		&= \dfrac{1}{d}\sum_{i} E_{R}(\mathcal{U}_{AB}^{\cnot}\circ \mathcal{E}_{A}^{\dag}(\ketbra{i0}{i0}_{AB}))\\
		&\ge \dfrac{1}{d}\sum_{i} \{S(\Delta_{A}\circ \mathcal{E}_{A}^{\dag}(\ketbra{i}{i}_{A}))-S(\mathcal{E}_{A}^{\dag}(\ketbra{i}{i}_{A}))\}\\
		&= C(\mathcal{E}_{A}),
	\end{align}
	where in the first inequality we maximize over $\mathcal{M}_{A\widetilde{A}B\widetilde{B}}\circ \Delta_{AB}$ instead of $\mathcal{M}_{A\widetilde{A}B\widetilde{B}}$ reducing the domain of the maximization. The second equality follows from Lemma~\ref{sup_thm: no need for anc. meas.}. The fourth equality holds for Corollary~\ref{sup_thm: structure of a bipartite IM} and the monotonicity of the measurement relative entropy for classical post-processing channels. The fifth equality holds since any measurement in $\cloccm$ having $d^2$ outcomes can be implemented by a $\clocc$ channel followed by the incoherent basis measurement. The final inequality follows from Lemma~\ref{sup_thm: rel. entropy bound}.
\end{proof}

Theorem~\ref{sup_thm: mcgp bounds megp} and Theorem~\ref{sup_thm: mcgp to megp by cnot} combine to the main result:
\begin{theorem}\label{sup_thm: coh. conversion to ent.}
	For a quantum channel $\mathcal{E}_{A}$, it holds that
	\begin{equation}
		C(\mathcal{E}_{A}) = \max_{\mathcal{\mathcal{K_{AB},\mathcal{L}_{AB}}}} E(\Delta_{AB}\circ \mathcal{K}_{AB}\circ\mathcal{E}_{A} \circ \mathcal{L}_{AB}),
	\end{equation}
	where the maximum is taken over a detection-incoherent operation $\mathcal{K}_{AB}$ and a unital detection-incoherent operation $\mathcal{L}_{AB}$. The maximum is achieved as follows:
	\begin{equation}
		C(\mathcal{E}_{A}) = E(\Delta_{AB}\circ \mathcal{E}_{A} \circ \mathcal{U}^{\dag}_{\cnot}).
	\end{equation}
\end{theorem}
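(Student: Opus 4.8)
The plan is to obtain the equality by squeezing $C(\mathcal{E}_{A})$ between the two bounds already proved in Theorem~\ref{sup_thm: mcgp bounds megp} and Theorem~\ref{sup_thm: mcgp to megp by cnot}, so the argument is essentially just the combination step. First I would invoke Theorem~\ref{sup_thm: mcgp bounds megp}: for \emph{every} detection-incoherent channel $\mathcal{K}_{AB}$ and \emph{every} unital detection-incoherent channel $\mathcal{L}_{AB}$ one has $E(\Delta_{AB}\circ\mathcal{K}_{AB}\circ\mathcal{E}_{A}\circ\mathcal{L}_{AB})\le C(\mathcal{E}_{A})$. Maximizing the left-hand side over all admissible pairs $(\mathcal{K}_{AB},\mathcal{L}_{AB})$ yields $\max_{\mathcal{K}_{AB},\mathcal{L}_{AB}}E(\Delta_{AB}\circ\mathcal{K}_{AB}\circ\mathcal{E}_{A}\circ\mathcal{L}_{AB})\le C(\mathcal{E}_{A})$; that this supremum is actually attained (so that writing $\max$ is legitimate) will follow a posteriori once the optimal configuration is exhibited.

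Second, I would verify that the particular configuration appearing in Theorem~\ref{sup_thm: mcgp to megp by cnot} is one of the admissible pairs. Take $\mathcal{K}_{AB}=\mathsf{id}_{AB}$, which is trivially in $\dio$, and $\mathcal{L}_{AB}=\mathcal{U}_{\cnot}^{\dag}$ as the unital detection-incoherent pre-processing. The point to check is that $\mathcal{U}_{\cnot}^{\dag}$ really is a unital $\dio$ channel: the generalized CNOT unitary $U_{\cnot}=\sum_{i,j}\ketbra{i,i\oplus j}{ij}_{AB}$ merely permutes the product incoherent basis $\{\ket{ij}_{AB}\}$ — for each fixed $i$ the map $j\mapsto i\oplus j$ is a bijection of $\mathbb{Z}_{d}$ — and so does its inverse $U_{\cnot}^{\dag}$. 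Hence both $\mathcal{U}_{\cnot}$ and $\mathcal{U}_{\cnot}^{\dag}$ are incoherent operations; in particular one checks directly that $\Delta_{AB}\circ\mathcal{U}_{\cnot}^{\dag}=\Delta_{AB}\circ\mathcal{U}_{\cnot}^{\dag}\circ\Delta_{AB}$ (both sides send $X_{AB}$ to $\sum_{k,l}\bra{k,k\oplus l}X_{AB}\ket{k,k\oplus l}\ketbra{kl}{kl}_{AB}$), so $\mathcal{U}_{\cnot}^{\dag}\in\dio$, and being a unitary channel it is unital. Therefore $E(\Delta_{AB}\circ\mathcal{E}_{A}\circ\mathcal{U}_{\cnot}^{\dag})\le\max_{\mathcal{K}_{AB},\mathcal{L}_{AB}}E(\Delta_{AB}\circ\mathcal{K}_{AB}\circ\mathcal{E}_{A}\circ\mathcal{L}_{AB})$.

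Finally I would close the loop using the reverse inequality $C(\mathcal{E}_{A})\le E(\Delta_{AB}\circ\mathcal{E}_{A}\circ\mathcal{U}_{\cnot}^{\dag})$ from Theorem~\ref{sup_thm: mcgp to megp by cnot}, obtaining
\begin{equation}
	C(\mathcal{E}_{A})\le E(\Delta_{AB}\circ\mathcal{E}_{A}\circ\mathcal{U}_{\cnot}^{\dag})\le\max_{\mathcal{K}_{AB},\mathcal{L}_{AB}}E(\Delta_{AB}\circ\mathcal{K}_{AB}\circ\mathcal{E}_{A}\circ\mathcal{L}_{AB})\le C(\mathcal{E}_{A}),
\end{equation}
which forces all three quantities to coincide; in particular the maximum is attained at $\mathcal{K}_{AB}=\mathsf{id}_{AB}$, $\mathcal{L}_{AB}=\mathcal{U}_{\cnot}^{\dag}$, giving the second displayed identity of the theorem.

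Since both substantive inequalities are already in hand, no genuine obstacle is anticipated: the argument is bookkeeping, and the only step needing explicit care is the verification that $\mathcal{U}_{\cnot}^{\dag}$ is a \emph{unital} detection-incoherent channel — so that it is a legitimate competitor in the maximization — which is immediate from the fact that $U_{\cnot}$ (hence $U_{\cnot}^{\dag}$) permutes the computational basis.
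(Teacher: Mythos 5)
Your proposal is correct and follows essentially the same route as the paper, which simply combines Theorem~\ref{sup_thm: mcgp bounds megp} and Theorem~\ref{sup_thm: mcgp to megp by cnot} and notes that $\mathcal{U}_{\cnot}^{\dag}$ is a unital $\dio$ channel (the paper argues this via $\mathcal{U}_{\cnot}\in\mio$, you via the fact that $U_{\cnot}$ permutes the incoherent product basis — both are fine). Your explicit check that the $\cnot$ configuration is an admissible competitor, so the supremum is attained, is exactly the bookkeeping the paper leaves implicit.
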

\begin{figure*}[th]
	\centering
	\begin{subfigure}[b]{0.45\textwidth}
		\centering
		\includegraphics[width=\textwidth]{"fig_mcgp_to_megp_conversion"}
		\caption{}
		\label{fig-A: mcgp to megp conversion}
	\end{subfigure}
	\hspace{1cm}
	\begin{subfigure}[b]{0.45\textwidth}
		\centering
		\includegraphics[width=\textwidth]{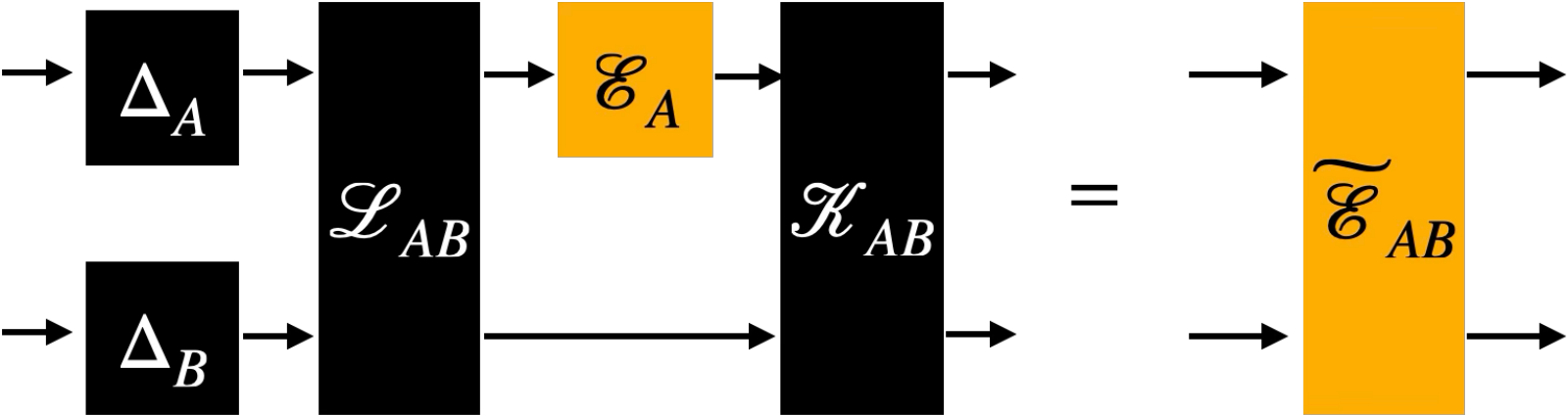}
		\caption{}
		\label{fig-A: scgp to segp conversion}
	\end{subfigure}
	
	\caption{(a) Measurement-cohering power of a quantum channel $\mathcal{E}_{A}$ can convert to \mep{} of a quantum channel $\widetilde{\mathcal{E}}_{AB}$ under a $\dio$ channel $\mathcal{K}_{AB}$, a unital $\dio$ channel $\mathcal{L}_{AB}$, and the dephasing channel $\Delta_{AB}$. (b) State-cohering power of a quantum channel $\mathcal{E}_{A}$ can convert to \sep{} of a quantum channel $\widetilde{\mathcal{E}}_{AB}$ under $\mio$ channels $\mathcal{K}_{AB}$ and $\mathcal{L}_{AB}$, and the dephasing channel $\Delta_{AB}$~\cite{theurer2020QuantifyingDynamicalCoherence}.}
\end{figure*}
Note that $\mathcal{U}_{\cnot}^{\dag}$ is in $\dio$ since  $\mathcal{U}_{\cnot}$ is in $\mio$. This result says that the \mcp{} of a quantum channel $\mathcal{E}_{A}$ can be converted into the \mep{} of a bipartite channel without any additional \mcp{}s. The post-dephasing channels on the right-hand side act on a quantum measurement, making it incoherent with respect to the incoherent basis, as depicted in Fig.~\ref{fig-A: mcgp to megp conversion}. This allows the adjoint channel of the generalized CNOT channel $\mathcal{U}_{\cnot}^{\dag}$ to effectively entangle each POVM element.  Conversely, when converting \scp{} to \sep{}, dephasing channels function as a pre-processing step to ensure the input states are incoherent, as illustrated in Fig.~\ref{fig-A: scgp to segp conversion}~\cite{theurer2020QuantifyingDynamicalCoherence}.

Next, we show that the \mep{} induces the \mcp{}.
\begin{theorem}\label{sup_thm: megp induces mcgp}
	The \mep{} monotone $\widetilde{E}$ induces the \mcp{} monotone $\widetilde{C}$ given by
	\begin{equation}
		\widetilde{C}(\mathcal{E}_{A}) = \max_{\mathcal{\mathcal{K_{AB},\mathcal{L}_{AB}}}} \widetilde{E}(\Delta_{AB}\circ \mathcal{K}_{AB}\circ \mathcal{E}_{A}\circ \mathcal{L}_{AB}),
	\end{equation}
	where the maximum is taken over a detection-incoherent operation $\mathcal{K}_{AB}$ and a unital detection-incoherent operation $\mathcal{L}_{AB}$.
\end{theorem}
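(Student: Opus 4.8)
The plan is to verify that the functional $\widetilde{C}(\mathcal{E}_{A}) = \max_{\mathcal{K}_{AB},\mathcal{L}_{AB}} \widetilde{E}(\Delta_{AB}\circ \mathcal{K}_{AB}\circ \mathcal{E}_{A}\circ \mathcal{L}_{AB})$, with the maximum over $\dio$ channels $\mathcal{K}_{AB}$ and unital $\dio$ channels $\mathcal{L}_{AB}$, inherits from $\widetilde{E}$ the three defining properties of a \mcp{} monotone: non-negativity and faithfulness, monotonicity under $\dio$/unital $\dio$ composition, and convexity. The argument runs parallel to the monotone proofs already given for $C$ and $E$, but now at the level of an abstract monotone. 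Non-negativity is immediate from $\widetilde{E}\ge 0$, so the content lies in monotonicity, convexity, and faithfulness.

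\emph{Monotonicity and convexity.} Let $\mathcal{K}_{A}$ be detection-incoherent and $\mathcal{L}_{A}$ a unital detection-incoherent channel. Each configuration in the maximization defining $\widetilde{C}(\mathcal{K}_{A}\circ \mathcal{E}_{A}\circ \mathcal{L}_{A})$ is $\Delta_{AB}\circ \mathcal{K}'_{AB}\circ \mathcal{K}_{A}\circ \mathcal{E}_{A}\circ \mathcal{L}_{A}\circ \mathcal{L}'_{AB} = \Delta_{AB}\circ (\mathcal{K}'_{AB}\circ \mathcal{K}_{A})\circ \mathcal{E}_{A}\circ (\mathcal{L}_{A}\circ \mathcal{L}'_{AB})$; since $\dio$ and unital $\dio$ are closed under composition and under tensoring with $\mathsf{id}_{B}$ (from the characterization $\Delta\circ \mathcal{F}=\Delta\circ \mathcal{F}\circ \Delta$), the pair $(\mathcal{K}'_{AB}\circ \mathcal{K}_{A},\,\mathcal{L}_{A}\circ \mathcal{L}'_{AB})$ is admissible in the maximization for $\widetilde{C}(\mathcal{E}_{A})$, so $\widetilde{E}$ of that configuration is at most $\widetilde{C}(\mathcal{E}_{A})$; maximizing over $\mathcal{K}'_{AB},\mathcal{L}'_{AB}$ gives $\widetilde{C}(\mathcal{K}_{A}\circ \mathcal{E}_{A}\circ \mathcal{L}_{A})\le \widetilde{C}(\mathcal{E}_{A})$. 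Convexity follows because channel composition is linear in the middle slot, so $\Delta_{AB}\circ \mathcal{K}_{AB}\circ (p\mathcal{E}_{A}+(1-p)\mathcal{G}_{A})\circ \mathcal{L}_{AB}$ equals $p\,\Delta_{AB}\circ \mathcal{K}_{AB}\circ \mathcal{E}_{A}\circ \mathcal{L}_{AB}+(1-p)\,\Delta_{AB}\circ \mathcal{K}_{AB}\circ \mathcal{G}_{A}\circ \mathcal{L}_{AB}$; applying convexity of $\widetilde{E}$ and then maximizing over $\mathcal{K}_{AB},\mathcal{L}_{AB}$ yields $\widetilde{C}(p\mathcal{E}_{A}+(1-p)\mathcal{G}_{A})\le p\,\widetilde{C}(\mathcal{E}_{A})+(1-p)\,\widetilde{C}(\mathcal{G}_{A})$.

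\emph{Faithfulness.} If $\mathcal{E}_{A}\notin \dio$, then Theorem~\ref{sup_thm: mcgp to megp by cnot} together with the faithfulness of $C$ gives $E(\Delta_{AB}\circ \mathcal{E}_{A}\circ \mathcal{U}_{\cnot}^{\dag})\ge C(\mathcal{E}_{A})>0$, hence $\Delta_{AB}\circ \mathcal{E}_{A}\circ \mathcal{U}_{\cnot}^{\dag}\notin \clocc(A\!:\!B)$ by the faithfulness of $E$; since $\mathcal{U}_{\cnot}^{\dag}$ is unital $\dio$ and $\mathsf{id}_{AB}$ is $\dio$, this configuration is admissible and faithfulness of $\widetilde{E}$ gives $\widetilde{C}(\mathcal{E}_{A})\ge \widetilde{E}(\Delta_{AB}\circ \mathcal{E}_{A}\circ \mathcal{U}_{\cnot}^{\dag})>0$. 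Conversely, if $\mathcal{E}_{A}\in \dio$ then for any admissible $\mathcal{K}_{AB},\mathcal{L}_{AB}$ the channel $\mathcal{D}_{AB}:=\mathcal{K}_{AB}\circ \mathcal{E}_{A}$ is $\dio$, so $\Delta_{AB}\circ \mathcal{D}_{AB}=\Delta_{AB}\circ \mathcal{D}_{AB}\circ \Delta_{AB}$ and $\Delta_{AB}\circ \mathcal{L}_{AB}=\Delta_{AB}\circ \mathcal{L}_{AB}\circ \Delta_{AB}$; consequently $\Delta_{AB}\circ \mathcal{D}_{AB}\circ \mathcal{L}_{AB}$ equals a product dephasing $\Delta_{A}\otimes \Delta_{B}$ followed by a classical stochastic map on the outcome registers $A'B'$, which is realizable by LOCC (Alice forwards her outcome, Bob samples the conditional distribution and returns Alice's output), so $\Delta_{AB}\circ \mathcal{K}_{AB}\circ \mathcal{E}_{A}\circ \mathcal{L}_{AB}\in \mathbf{LOCC}(A\!:\!B)\subseteq \clocc(A\!:\!B)$ and $\widetilde{E}$ of it vanishes; maximizing gives $\widetilde{C}(\mathcal{E}_{A})=0$. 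Both directions can equivalently be obtained at once by combining Theorem~\ref{sup_thm: coh. conversion to ent.} with the faithfulness of $C$ and $E$.

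The step I expect to require the most care is this last one: checking that the free-channel sandwich $\Delta_{AB}\circ \mathcal{K}_{AB}\circ \mathcal{E}_{A}\circ \mathcal{L}_{AB}$ of a $\dio$ channel is genuinely an LOCC channel. This rests on the closure of $\dio$ under composition and under $\otimes\,\mathsf{id}_{B}$, and on the fact that an arbitrary classical post-processing of the joint outcome register is a one-way LOCC operation; every other step is routine bookkeeping paralleling the monotone proofs already carried out for $C$ and $E$.
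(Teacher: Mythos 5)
Your proof is correct and follows essentially the same route as the paper: verify non-negativity/faithfulness, monotonicity, and convexity, with the positive direction of faithfulness via the CNOT conversion result and the zero direction via showing the dephased sandwich of a $\dio$ channel lies in $\clocc(A\!:\!B)$. The only difference is cosmetic—you re-derive inline (measure in the incoherent basis, classically post-process, re-prepare via LOCC) what the paper isolates as Proposition~\ref{sup_thm: incoh. meas. sandwiched ch. is locc}.
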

The proof of Theorem~\ref{sup_thm: megp induces mcgp} needs the following fact:
\begin{proposition}\label{sup_thm: incoh. meas. sandwiched ch. is locc}
	For a quantum channel $\mathcal{E}_{A}$, a detection-incoherent operation $\mathcal{K}_{AB}$, and a unital detection-incoherent operation $\mathcal{L}_{AB}$, it holds that $\Delta_{AB}\circ \mathcal{K}_{AB}\circ \mathcal{E}_{A}\circ\mathcal{L}_{AB}\circ \Delta_{AB}\in \clocc(A\!:\!B)$.
\end{proposition}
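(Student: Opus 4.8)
The plan is to show that the sandwiched map is in fact a \emph{classical} bipartite channel, and that every classical bipartite channel is realized by an explicit LOCC protocol. Write $\Phi_{AB}\coloneqq\Delta_{AB}\circ \mathcal{K}_{AB}\circ \mathcal{E}_{A}\circ\mathcal{L}_{AB}\circ \Delta_{AB}$. Since $\Delta_{AB}$ is idempotent, $\Phi_{AB}=\Delta_{AB}\circ\Phi_{AB}\circ\Delta_{AB}$; this identity is the only structural input the argument uses, so in fact neither detection-incoherence nor unitality of $\mathcal{K}_{AB},\mathcal{L}_{AB}$ is needed and $\mathcal{E}_{A}$ may be any channel (the hypotheses are simply inherited from the context of Theorem~\ref{sup_thm: megp induces mcgp}). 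From $\Phi_{AB}=\Delta_{AB}\circ\Phi_{AB}\circ\Delta_{AB}$ one reads off that $\Phi_{AB}$ depends on its input only through its diagonal in the product incoherent basis and produces a diagonal output, whence
\[
\Phi_{AB}(\rho_{AB})=\sum_{i,j,x,y} q(xy\vert ij)\,\bra{ij}\rho_{AB}\ket{ij}_{AB}\,\ketbra{xy}{xy}_{AB},
\]
with $q(xy\vert ij)\coloneqq\bra{xy}\Phi_{AB}(\ketbra{ij}{ij}_{AB})\ket{xy}_{AB}\ge 0$ and $\sum_{x,y}q(xy\vert ij)=1$ for each $(i,j)$ by trace preservation; that is, $\Phi_{AB}$ is the classical channel with stochastic matrix $q$.

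Next I would exhibit an explicit LOCC protocol implementing $\Phi_{AB}$. Alice and Bob each measure their subsystem in the local incoherent bases $\mathcal{I}_{A}$ and $\mathcal{I}_{B}$, obtaining outcomes $i$ and $j$ with joint probability $\bra{ij}\rho_{AB}\ket{ij}_{AB}$ (this local product measurement, followed by the preparation step below, realizes the two sandwiching dephasings). Bob then communicates $j$ to Alice; Alice, now holding $(i,j)$, draws $(x,y)\sim q(\cdot,\cdot\vert i,j)$ using local randomness, prepares $\ketbra{x}{x}_{A}$, and communicates $y$ to Bob, who prepares $\ketbra{y}{y}_{B}$. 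Discarding the post-measurement state and composing these steps reproduces the displayed formula for $\Phi_{AB}$, and since the whole procedure consists only of local instruments and classical communication, $\Phi_{AB}\in\clocc(A\!:\!B)$ — indeed it already lies in $\mathbf{LOCC}(A\!:\!B)$, so the closure is not even needed here.

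The one step that warrants care is precisely this last one: a classical, hence entanglement-breaking, bipartite channel need not a priori be LOCC — separability of its Choi state is not by itself sufficient — so the explicit protocol above must be produced rather than an appeal to a general principle. Everything else is elementary bookkeeping: idempotence of $\Delta_{AB}$, the implication $\Phi_{AB}=\Delta_{AB}\circ\Phi_{AB}\circ\Delta_{AB}\Rightarrow$ classicality, and trace preservation for the normalization of $q$. I expect no further obstacles, and this fact then slots directly into the proof of Theorem~\ref{sup_thm: megp induces mcgp}, where it guarantees that the sandwiched channel on which $\widetilde{E}$ is evaluated is free.
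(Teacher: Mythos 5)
Your proof is correct and follows essentially the same route as the paper's: sandwich the map between the two dephasings to see that it acts as a classical stochastic channel on the incoherent basis, then implement it by local incoherent-basis measurements followed by classical communication and local preparation of the (diagonal, hence separable) output. Your additional observations — that only the idempotence of $\Delta_{AB}$ is used (so the $\dio$/unitality hypotheses are inessential), and that the explicit protocol places the channel in $\mathbf{LOCC}$ itself rather than merely its closure — are accurate refinements of the paper's sketch, not a different argument.
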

\begin{proof}
	For an arbitrary input state $\rho_{AB}$, the composite channel results in
	\begin{align}
		\Delta_{AB}\circ \mathcal{K}_{AB}\circ \mathcal{E}_{A}\circ\mathcal{L}_{AB}\circ \Delta_{AB}(\rho_{AB}) & =\Delta_{AB}\circ \mathcal{K}_{AB}\circ \mathcal{E}_{A}\circ\mathcal{L}_{AB}\left (\sum_{i,j} p_{ij}\ketbra{ij}{ij}_{AB}\right ) \\
		 &=\sum_{i,j}p_{ij}\sum_{k,l} q_{kl}^{(ij)}\ketbra{kl}{kl}_{AB}\\
		 &=\sum_{i,j}p_{ij} \sigma^{(ij)},
	\end{align}
	which is a statistical mixture of separable states $\sigma^{(ij)}$. This channel can be implemented by first measuring the input state in the computational basis, then preparing the states $\sigma^{(ij)}$ according to the outcome. All of the procedure can be done in LOCC.
\end{proof}

Here's the proof of Theorem~\ref{sup_thm: megp induces mcgp}:
\begin{proof}
	\begin{enumerate}
		\item Non-negativity and faithfulness can be shown as follows: $\widetilde{C}(\mathcal{E}_{A})\ge 0$ due to the non-negativity of the entanglement monotone $\widetilde{E}$. If $\mathcal{E}_{A}\in \dio$, then
		\begin{equation}
			\Delta_{AB}\circ \mathcal{K}_{AB}\circ \mathcal{E}_{A}\circ \mathcal{L}_{AB}=\Delta_{AB}\circ \mathcal{K}_{AB}\circ \mathcal{E}_{A}\circ \mathcal{L}_{AB}\circ \Delta_{AB}
		\end{equation}
		is a LOCC channel by Proposition~\ref{sup_thm: incoh. meas. sandwiched ch. is locc}. Therefore $\widetilde{C}(\mathcal{E_{A}})=0$. If $\mathcal{E}_{A}\notin \dio$, then Theorem~\ref{sup_thm: coh. conversion to ent.} implies that $\Delta_{AB}\circ \mathcal{E}_{A}\circ \mathcal{U}_{\cnot}\notin \clocc(A\!:\!B)$, thereby $\widetilde{C}(\mathcal{E}_{A})> 0$.
		\item Convexity holds due to convexity of the entanglement monotone $ \widetilde{E}$.
		\item For a detection-incoherent channel $\mathcal{K}'_{A}$ and a unital detector-incoherent channel $\mathcal{L}'_{A}$, the monotonicity holds as
		\begin{align}
			\widetilde{C}(\mathcal{K}'_{A}\circ \mathcal{E}_{A}\circ \mathcal{L}'_{A}) &= \max_{\mathcal{\mathcal{K_{AB},\mathcal{L}_{AB}}}} E(\Delta_{AB}\circ \mathcal{K}_{AB}\circ (\mathcal{K}'_{A}\circ \mathcal{E}_{A}\circ \mathcal{L}'_{A})\circ \mathcal{L}_{AB}) \\
			& \le \widetilde{C}(\mathcal{E}_{A}),
		\end{align}
		
		because $\mathcal{K}'_{A}$ and $\mathcal{L}'_{A}$ reduce the domain of maximization of $\mathcal{K}_{AB}$ and $\mathcal{L}_{AB}$, respectively.
	\end{enumerate}
\end{proof}

\subsection{Reduction of measurement-resource powers to measurement-resources}
When we consider a measurement channel $\mathcal{N}_{A}$ with output system $A'$, Theorem~\ref{sup_thm: coh. conversion to ent.} reduces to the measurement-coherence conversion to measurement-entanglement~\cite{hjkim2022RelationQuantumCoherence}. It holds because $\mathcal{N}_{A}=\Delta_{A'}\circ \mathcal{N}_{A} = \mathcal{I}_{A'}\circ \mathcal{N}_{A}$.

The \mcp{} of a measurement channel $\mathcal{N}_{A}$ reduces to the measurement-coherence of the measurement channel $\mathcal{N}_{A}$ by Lemma~\ref{sup_thm: meas-coh gen power} as follows:
\begin{align}
	C(\mathcal{N}_{A}) &= C_{m} (\mathcal{I}_{A'}\circ \mathcal{N}_{A})\\
	&=C_{m} (\mathcal{N}_{A}).
\end{align}

Similarly, the \mep{} $E(\mathcal{N}_{AB})$ of a measurement channel $\mathcal{N}_{AB}$ reduces to the measurement-entanglement $E_{m}(\mathcal{N}_{AB})$ of a measurement channel $\mathcal{N}_{AB}$. We first prove that $E(\mathcal{N}_{AB})\le E_{m}(\mathcal{N}_{AB})$:
\begin{align}
	E(\mathcal{N}_{AB}) &\coloneqq \min_{\mathcal{F}\in \clocc(A:B)} \max_{\mathcal{M}\in \cloccm(A\widetilde{A}:B\widetilde{B})}\\
	&\qquad D_{m}(\mathcal{M}_{A\widetilde{A}B\widetilde{A}}\circ \mathcal{N}_{AB}\Vert \mathcal{M}_{A\widetilde{A}B\widetilde{B}}\circ \mathcal{F}_{AB}) \\
	&\le \min_{\mathcal{F}\in \cloccm(A:B)} \max_{\mathcal{M}\in \cloccm(A\widetilde{A}:B\widetilde{B})}\\
	&\qquad D_{m}(\mathcal{M}_{A\widetilde{A}B\widetilde{A}}\circ \mathcal{N}_{AB}\Vert \mathcal{M}_{A\widetilde{A}B\widetilde{B}}\circ \mathcal{F}_{AB}) \\
	&=\min_{\mathcal{F}\in \cloccm(A:B)} \max_{\mathcal{M}\in \cloccm(A\widetilde{A}:B\widetilde{B})}\\
	&\qquad D_{m}(\mathcal{M}_{A\widetilde{A}B\widetilde{B}}\circ \Delta_{AB}\circ \mathcal{N}_{AB}\Vert \mathcal{M}_{A\widetilde{A}B\widetilde{B}}\circ \Delta_{AB}\circ \mathcal{F}_{AB}) \\
	& =\min_{\mathcal{F}\in \cloccm(A:B)} \max_{\mathcal{M}\in \cloccm(A:B)}\\
	&\qquad D_{m}(\mathcal{M}_{AB}\circ \Delta_{AB}\circ \mathcal{N}_{AB}\Vert \mathcal{M}_{AB}\circ \Delta_{AB}\circ\mathcal{F}_{AB}) \\
	& =\min_{\mathcal{F}\in \cloccm(A:B)} \max_{\mathcal{S}_{AB}\in \mathbf{S}} \\
	&\qquad D_{m}(\mathcal{S}_{AB}\circ \Delta_{AB}\circ \mathcal{N}_{AB}\Vert\mathcal{S}_{AB}\circ \Delta_{AB}\circ\mathcal{F}_{AB}) \\
	& =\min_{\mathcal{F}\in \cloccm(A:B)} D_{m}( \mathcal{N}_{AB}\Vert\mathcal{F}_{AB}) \\
	&= E_{m}(\mathcal{N}_{AB}),
\end{align}
where the third equality follows from Lemma~\ref{sup_thm: no need for anc. meas.} and the fact that measurement channels output incoherent states. The fourth equality holds from Corollary~\ref{sup_thm: structure of a bipartite IM}. The fifth equality follows from the monotonicity of $D_{m}$ under classical post-processing channel.

The opposite direction is proved as follows:
\begin{align}
	E(\mathcal{N}_{AB}) &\coloneqq \min_{\mathcal{F}\in \clocc(A:B)} \max_{\mathcal{M}\in \cloccm(A\widetilde{A}:B\widetilde{B})}\\
	&\qquad D_{m}(\mathcal{M}_{A\widetilde{A}B\widetilde{A}}\circ \mathcal{N}_{AB}\Vert \mathcal{M}_{A\widetilde{A}B\widetilde{B}}\circ \mathcal{F}_{AB}) \\
	&\ge \min_{\mathcal{F}\in \clocc(A:B)} \max_{\mathcal{M}\in \cloccm(A\widetilde{A}:B\widetilde{B})}\\
	&\qquad D_{m}(\mathcal{M}_{A\widetilde{A}B\widetilde{B}}\circ \Delta_{AB}\circ \mathcal{N}_{AB}\Vert \mathcal{M}_{A\widetilde{A}B\widetilde{B}}\circ \Delta_{AB}\circ \mathcal{F}_{AB}) \\
	& =\min_{\mathcal{F}\in \clocc(A:B)} \max_{\mathcal{M}\in \cloccm(A:B)}\\
	&\qquad D_{m}(\mathcal{M}_{AB}\circ \Delta_{AB}\circ \mathcal{N}_{AB}\Vert \mathcal{M}_{AB}\circ \Delta_{AB}\circ\mathcal{F}_{AB}) \\
	& =\min_{\mathcal{F}\in \cloccm(A:B)} \max_{\mathcal{M}\in \cloccm(A:B)}\\
	&\qquad D_{m}(\mathcal{M}_{AB}\circ \Delta_{AB}\circ \mathcal{N}_{AB}\Vert \mathcal{M}_{AB}\circ \Delta_{AB}\circ\mathcal{F}_{AB}) \\
	&= E_{m}(\mathcal{N}_{AB}),
\end{align}
where the first inequality holds due to $\mathcal{M}_{A\widetilde{A}B\widetilde{B}}\circ \Delta_{AB}\in \cloccm(A\widetilde{A}\!:\!B\widetilde{B})$, the second equality follows from Lemma~\ref{sup_thm: no need for anc. meas.}. The third equality holds because of $\Delta_{AB}\circ\mathcal{F}_{AB}\in \cloccm(A\!:\!B)$.

\subsection{Relation between \scp{} and \mcp{}}
Here we investigate the relation between a quantum channel's state-resource and measurement-resource powers. We begin by defining the \scp{} of a quantum channel $\mathcal{E}_{A}$ as
\begin{equation}
	C_{g}(\mathcal{E}_{A})\coloneqq \max_{\sigma_{A\widetilde{A}}\in \textbf{I}} C_{R}(\mathcal{E}_{A}(\sigma_{A\widetilde{A}})),
\end{equation}
where $C_{R}(\rho_{A})\coloneqq \min_{\sigma_{A}\in \mathbf{I}} D(\rho_{A}\vert \sigma_{A})$ is the relative entropy of coherence~\cite{baumgratz2014QuantifyingCoherence}, and $\mathbf{I}$ is the set of incoherent states. The \scp{} of a quantum channel $\mathcal{E}_{A}$ can be written as
\begin{equation}\label{sup_eq: coh. gen. power over incoh. basis}
	C_{g}(\mathcal{E}_{A}) = \max_{\ketbra{i}{i}_{A}\in \textbf{I}} C_{R}(\mathcal{E}_{A}(\ketbra{i}{i}_{A})).
\end{equation}
This can be shown as follows:
\begin{align}
	C_{g}(\mathcal{E}_{A})&\coloneqq \max_{\sigma_{A\widetilde{A}}\in \textbf{I}} C_{R}(\mathcal{E}_{A}(\sigma_{A\widetilde{A}}))\\
	&=C_{R}(\mathcal{E}_{A}(\sigma_{A\widetilde{A}}^{\ast}))\\
	&=C_{R}\left (\mathcal{E}_{A}\left (\sum_{i,j}p_{ij}\ketbra{ij}{ij}_{A\widetilde{A}}\right )\right )\\
	&\le \sum_{i,j}p_{ij} C_{R}(\mathcal{E}_{A}(\ketbra{ij}{ij}_{A\widetilde{A}}))\\
	&\le \max_{\ketbra{i}{i}_{A}\in \textbf{I}} C_{R}(\mathcal{E}_{A}(\ketbra{i}{i}_{A})),
\end{align}
where $\sigma_{A\widetilde{A}}^\ast$ is an optimal argument in the definition of $C_{g}(\mathcal{E}_{A})$.
The other direction $C_{g}(\mathcal{E}_{A})\ge \max_{\ketbra{i}{i}_{A}\in \textbf{I}} C_{R}(\mathcal{E}_{A}(\ketbra{i}{i}_{A})) $ holds because the domain of the maximization in the definition of $C_{g}$ includes the incoherent basis.

In general, there is no relation between a quantum channel's \scp{} and \mcp{}. This can be seen by examining some quantum channels. Consider the following preparation channel:
\begin{equation}
	\mathcal{E}_{A}(\rho_{A}) = \bra{0}\rho_{A}\ket{0}_{A}\ketbra{+}{+}_{A} + \bra{1}\rho_{A}\ket{1}_{A}\ketbra{+}{+}_{A} = \ketbra{+}{+}_{A}.
\end{equation}
The quantum channel $\mathcal{E}_{A}$ has maximum \scp{} as
\begin{equation}
	C_{g}(\mathcal{E}_{A}) = \max_{\sigma_{A}\in \text{I}} C_{R}(\mathcal{E}_{A}(\rho)_{A}) = 1,
\end{equation}
while its \mcp{} is zero as
\begin{equation}
	C(\mathcal{E}_{A}) = C_{m}(\mathcal{I}_{A}\circ \mathcal{E}_{A}) = 0,
\end{equation}
because $\mathcal{I}_{A}\circ \mathcal{E}_{A} = \{\frac{I_{A}}{2},\frac{I_{A}}{2}\}$.

Similarly, one can conceive a quantum channel $\mathcal{G}_{A}$ such that
\begin{equation}
	\mathcal{G}_{A}(\rho_{A}) =\bra{+}\rho_{A}\ket{+}_{A}\ketbra{0}{0}_{A} + \bra{-}\rho_{A}\ket{-}_{A}\ketbra{1}{1}_{A}.
\end{equation}
The quantum channel $\mathcal{G}_{A}$ does not generate any state-coherence while it can generate measurement-coherence as
\begin{equation}
	\mathcal{I}_{A}\circ \mathcal{G}_{A} = \{ \ketbra{+}{+}_{A}, \ketbra{-}{-}_{A} \}.
\end{equation}

Thus, it is not possible to assert a relation between the \scp{} and \mcp{} of a quantum channel. However, it can be observed that a unital quantum channel is related to its adjoint channel in terms of static resource powers as follows:
\begin{theorem}
	For a unital quantum channel $\mathcal{E}_{A}$, the \scp{} of $\mathcal{E}_{A}$ and \mcp{} of its adjoint channel $\mathcal{E}_{A}^{\dag}$ satisfy the following relation:
	\begin{equation}
		\dfrac{1}{d}C_{g}(\mathcal{E}_{A}) \le C(\mathcal{E}_{A}^{\dag})\le C_{g}(\mathcal{E}_{A}).
	\end{equation}
\end{theorem}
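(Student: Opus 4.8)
The plan is to obtain a closed-form expression for $C(\mathcal{E}_{A}^{\dag})$ and then compare it term by term with the expression for $C_{g}(\mathcal{E}_{A})$ in Eq.~\eqref{sup_eq: coh. gen. power over incoh. basis}. First I would note that the hypothesis that $\mathcal{E}_{A}$ is unital is precisely what makes the adjoint $\mathcal{E}_{A}^{\dag}$ a quantum channel---it is completely positive, and trace-preserving because $\mathcal{E}_{A}$ is unital---so that the measurement channel $\mathcal{I}_{A}\circ \mathcal{E}_{A}^{\dag}$ is well defined. Its POVM elements are obtained by pulling the incoherent-basis effects back through $\mathcal{E}_{A}^{\dag}$, namely $M_{i} = (\mathcal{E}_{A}^{\dag})^{\dag}(\ketbra{i}{i}_{A}) = \mathcal{E}_{A}(\ketbra{i}{i}_{A})$ for $i = 0,\dots,d-1$; these are positive since $\mathcal{E}_{A}$ is completely positive and sum to $\mathcal{E}_{A}(I_{A}) = I_{A}$ by unitality, so $\{M_{i}\}$ is a legitimate $d$-outcome POVM.

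Next I would invoke Lemma~\ref{sup_thm: meas-coh gen power}, which gives $C(\mathcal{E}_{A}^{\dag}) = C_{m}(\mathcal{I}_{A}\circ \mathcal{E}_{A}^{\dag})$, and then apply the reduced formula for the relative entropy of measurement-coherence, Eq.~\eqref{eq: measurement relative entropy of coherence reduced}, together with the standard identity $C_{R}(\rho) = S(\Delta_{A}(\rho)) - S(\rho)$ for the relative entropy of coherence. Combining these gives
\begin{equation}
	C(\mathcal{E}_{A}^{\dag}) = \frac{1}{d}\sum_{i=0}^{d-1}\left\{ S\big(\Delta_{A}(\mathcal{E}_{A}(\ketbra{i}{i}_{A}))\big) - S\big(\mathcal{E}_{A}(\ketbra{i}{i}_{A})\big)\right\} = \frac{1}{d}\sum_{i=0}^{d-1} C_{R}\big(\mathcal{E}_{A}(\ketbra{i}{i}_{A})\big),
\end{equation}
so that $C(\mathcal{E}_{A}^{\dag})$ is exactly the average over the incoherent basis of the relative entropy of coherence that $\mathcal{E}_{A}$ generates from incoherent basis states.

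Finally I would sandwich this average between $1/d$ times its largest term and the largest term itself: since every $C_{R}(\mathcal{E}_{A}(\ketbra{i}{i}_{A}))\ge 0$, one has
\begin{equation}
	\frac{1}{d}\max_{i} C_{R}(\mathcal{E}_{A}(\ketbra{i}{i}_{A})) \le \frac{1}{d}\sum_{i=0}^{d-1} C_{R}(\mathcal{E}_{A}(\ketbra{i}{i}_{A})) \le \max_{i} C_{R}(\mathcal{E}_{A}(\ketbra{i}{i}_{A})),
\end{equation}
and by Eq.~\eqref{sup_eq: coh. gen. power over incoh. basis} the maximum over incoherent-basis states equals $C_{g}(\mathcal{E}_{A})$, which delivers $\tfrac{1}{d}C_{g}(\mathcal{E}_{A})\le C(\mathcal{E}_{A}^{\dag})\le C_{g}(\mathcal{E}_{A})$. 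There is no serious obstacle beyond bookkeeping: the points to handle with care are verifying that $\mathcal{E}_{A}^{\dag}$ is a channel (which is exactly where unitality enters), correctly reading off $M_{i} = \mathcal{E}_{A}(\ketbra{i}{i}_{A})$ as the POVM of $\mathcal{I}_{A}\circ \mathcal{E}_{A}^{\dag}$, and recalling the closed form $C_{R}(\rho) = S(\Delta_{A}(\rho)) - S(\rho)$. The factor $1/d$ in the lower bound is merely the generic gap between the average and the maximum of $d$ nonnegative numbers, and it is saturated when $\mathcal{E}_{A}$ produces all of its coherence from a single incoherent basis element.
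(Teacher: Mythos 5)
Your proof is correct and follows essentially the same route as the paper: identify the POVM of $\mathcal{I}_{A}\circ\mathcal{E}_{A}^{\dag}$ as $\{\mathcal{E}_{A}(\ketbra{i}{i}_{A})\}$, use Lemma~\ref{sup_thm: meas-coh gen power} and the reduced formula for $C_{m}$ to get $C(\mathcal{E}_{A}^{\dag})=\frac{1}{d}\sum_{i}C_{R}(\mathcal{E}_{A}(\ketbra{i}{i}_{A}))$, and then bound this average by the maximum via Eq.~\eqref{sup_eq: coh. gen. power over incoh. basis} and the non-negativity of $C_{R}$. The only difference is that you spell out details the paper leaves implicit (that unitality makes $\mathcal{E}_{A}^{\dag}$ trace-preserving and that $\{\mathcal{E}_{A}(\ketbra{i}{i}_{A})\}$ is a legitimate POVM), which is a welcome addition rather than a deviation.
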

\begin{proof}
	It holds that
	\begin{align}
		C(\mathcal{E}_{A}^{\dag}) & =C_{m}(\mathcal{I}_{A}\circ \mathcal{E}_{A}^{\dag}) \\
		& =\dfrac{1}{d}\sum_{i=0}^{d-1} C_{R}(\mathcal{E}_{A}(\ketbra{i}{i}_{A}))\\
		&\le C_{g}(\mathcal{E}_{A}),
	\end{align}
	where the inequality follows from Eq.~\eqref{sup_eq: coh. gen. power over incoh. basis}. The lower bound follows from Eq.~\eqref{sup_eq: coh. gen. power over incoh. basis} and the non-negativity of $C_{R}$.
\end{proof}
This shows the equivalence of the \scp{} of a unital channel and the \mcp{} of its adjoint channel in the sense that a unital quantum channel having no \scp{} can neither generate measurement-coherence, nor vice versa.

State-entangling power of a quantum channel $\mathcal{E}_{AB}$ can be defined as
\begin{equation}
	E_{g}(\mathcal{E}_{AB})=\max_{\sigma \in \sepd(A\widetilde{A}:B\widetilde{B})} E_{R}(\mathcal{E}_{AB}(\sigma_{A\widetilde{A}B\widetilde{B}})).
\end{equation}
The ancillary system in the definition is necessary to capture the \sep{} of a quantum channel in a situation where the quantum channel acts on subsystems of a large system. Typical example is the SWAP gate on system $AB$. It does not generate entanglement from $\sigma_{AB}\in\sepd (A\!:\!B)$ while it can generate the maximum entanglement from $\sigma_{A\widetilde{A}B\widetilde{B}}\in\sepd(A\widetilde{A}\!:\!B\widetilde{B})$~\cite{nielsen2003QuantumDynamicsPhysical}.

A quantum channel's \sep{} is equivalent to its adjoint channel's \mep{} in the following sense:
\begin{theorem}
	For a unital quantum channel $\mathcal{E}_{AB}$, the \sep{} of $\mathcal{E}_{AB}$ and the \mep{} of its adjoint channel $\mathcal{E}_{AB}^{\dag}$ satisfy the following relation:
	\begin{equation}
		\dfrac{1}{d^{4}}E_{g}(\mathcal{E}_{AB}) \le E(\mathcal{E}_{AB}^{\dag})\le E_{g}(\mathcal{E}_{AB}).
	\end{equation}
\end{theorem}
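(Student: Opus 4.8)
The plan is to prove the two inequalities separately. Unlike the preceding theorem relating $C_{g}(\mathcal{E}_{A})$ and $C(\mathcal{E}_{A}^{\dag})$, where the closed form of Lemma~\ref{sup_thm: meas-coh gen power} made both directions immediate, entanglement has no preferred basis and $E(\,\cdot\,)$ admits no such formula, so each direction needs its own argument. As a preliminary I would record the entanglement analog of Eq.~\eqref{sup_eq: coh. gen. power over incoh. basis}: since $E_{R}$ is convex and $\sepd$ is convex with extreme points the pure product states, the maximum defining $E_{g}(\mathcal{E}_{AB})$ is attained at some $\Pi=\ketbra{\psi}{\psi}_{A\widetilde{A}}\otimes\ketbra{\phi}{\phi}_{B\widetilde{B}}$, and Schmidt-reducing $\ket{\psi}$ across $A\!:\!\widetilde{A}$ and $\ket{\phi}$ across $B\!:\!\widetilde{B}$ lets one take $\dim\widetilde{A}=\dim\widetilde{B}=d$; this dimension is the source of the factor $d^{4}$ below.

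For the lower bound $E(\mathcal{E}_{AB}^{\dag})\ge d^{-4}E_{g}(\mathcal{E}_{AB})$ I would use, as the measurement in the definition of $E(\mathcal{E}_{AB}^{\dag})$, the two-outcome measurement $\mathcal{M}=\{\Pi,\,I-\Pi\}$ with $\Pi$ the optimal product projector on $A\widetilde{A}B\widetilde{B}$ (of dimension $d^{4}$); being a product POVM, $\mathcal{M}\in\cloccm(A\widetilde{A}\!:\!B\widetilde{B})$. The measurement $\mathcal{M}\circ\mathcal{E}_{AB}^{\dag}$ has POVM elements $\{\mathcal{E}_{AB}(\Pi),\mathcal{E}_{AB}(I-\Pi)\}$, while for any $\mathcal{F}_{AB}\in\clocc(A\!:\!B)$ the composition $\mathcal{M}\circ\mathcal{F}_{AB}$ is again in $\cloccm$, so its element $\mathcal{F}_{AB}^{\dag}(\Pi)$ is separable. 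As $\mathcal{E}_{AB}$ is trace preserving, $\mathcal{E}_{AB}(\Pi)$ is a state; dropping the nonnegative $(I-\Pi)$ contribution and using that $\min_{Y\ge 0\text{ separable}}D(\mathcal{E}_{AB}(\Pi)\Vert Y)$ is attained at $\tr Y=1$ (the scaling term $t-1-\log t$ is nonnegative), one obtains $D_{m}(\mathcal{M}\circ\mathcal{E}_{AB}^{\dag}\Vert\mathcal{M}\circ\mathcal{F}_{AB})\ge d^{-4}E_{R}(\mathcal{E}_{AB}(\Pi))=d^{-4}E_{g}(\mathcal{E}_{AB})$; taking the infimum over $\mathcal{F}_{AB}$ gives the claim, paralleling the $d^{-1}E_{g}$ step of the coherence theorem.

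For the upper bound $E(\mathcal{E}_{AB}^{\dag})\le E_{g}(\mathcal{E}_{AB})$ I must produce a single $\mathcal{F}^{\ast}\in\clocc(A\!:\!B)$ good uniformly in $\mathcal{M}$. The structural fact to exploit is that every POVM element $M_{x}$ of an $\cloccm$ measurement is separable, so $M_{x}=\tr(M_{x})\,\sigma_{x}$ with $\sigma_{x}$ a separable state; then $D_{m}(\mathcal{M}\circ\mathcal{E}_{AB}^{\dag}\Vert\mathcal{M}\circ\mathcal{F}^{\ast}_{AB})=\tfrac{1}{N}\sum_{x}\tr(M_{x})\,D(\mathcal{E}_{AB}(\sigma_{x})\Vert\mathcal{F}_{AB}^{\ast\dag}(\sigma_{x}))$, where $N=\tr I$ is the dimension of the measured space and $\sum_{x}\tr(M_{x})=N$, so it suffices to find $\mathcal{F}^{\ast}\in\clocc$ with $D(\mathcal{E}_{AB}(\sigma)\Vert\mathcal{F}_{AB}^{\ast\dag}(\sigma))\le E_{g}(\mathcal{E}_{AB})$ for every separable state $\sigma$; the target is pointwise attainable because $E_{R}(\mathcal{E}_{AB}(\sigma))\le E_{g}(\mathcal{E}_{AB})$ for separable $\sigma$. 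The difficulty is that the optimal assignment $\sigma\mapsto\tau_{\sigma}$ is not linear and hence is not literally an adjoint channel; I would circumvent this by taking $\mathcal{F}^{\ast}$ to be a measure-and-prepare channel — measure with an informationally complete product POVM, and, conditioned on the outcome, prepare a near-optimal separable approximation of the corresponding $\mathcal{E}_{AB}$-image — which is automatically in $\clocc$, and then close the residual error by refining the POVM and passing to the limit, using compactness of $\clocc$ and lower semicontinuity of $D$; restricting the $\max_{\mathcal{M}}$ to extreme $\cloccm$ POVMs (rank-one product elements, so the $\sigma_{x}$ are pure product states) streamlines the approximation.

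I expect the uniform construction of $\mathcal{F}^{\ast}$ to be the sole genuine obstacle; the lower bound, the ancilla reduction, and the normalization bookkeeping producing the prefactors $d^{-4}$ and $1$ are routine. The announced consequence — that $E_{g}(\mathcal{E}_{AB})$ and $E(\mathcal{E}_{AB}^{\dag})$ vanish together — then follows at once from the sandwich, exactly as in the unital-channel coherence statement.
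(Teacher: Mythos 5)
Your lower bound is essentially the paper's own argument: you fix one specific free measurement whose element is (proportional to) the optimal separable input and use that $\mathcal{M}\circ\mathcal{F}$ is again in $\cloccm$, so $\mathcal{F}^{\dag}(\Pi)$ is separable and $D$ against it is compared with $E_{R}(\mathcal{E}_{AB}(\Pi))=E_{g}(\mathcal{E}_{AB})$; the paper does the same by inserting an $\cloccm$ measurement containing the optimal separable state $\sigma^{\ast}$ as a POVM element, and your explicit reduction of the ancillas to dimension $d$ via Schmidt rank is a sensible addition. One caveat: your two shortcuts (``drop the nonnegative $(I-\Pi)$ term'' and ``the minimum over separable $Y$ is attained at $\tr Y=1$ since $t-1-\log t\ge 0$'') silently assume the convention $D(P\Vert Q)=\tr P(\log P-\log Q)+\tr Q-\tr P$; with the convention actually used here, $D(P\Vert Q)=\tr P\log P-\tr P\log Q$ (the one compatible with the quoted identity $D(X\otimes Y\Vert K\otimes L)=\tr(Y)D(X\Vert K)+\tr(X)D(Y\Vert L)$), the discarded term need not be nonnegative and $\tr\mathcal{F}^{\dag}(\Pi)$ can exceed $1$, so both steps require an explicit argument rather than a parenthesis.

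The genuine gap is the upper bound, which you yourself leave as ``the sole genuine obstacle,'' and the sketch you give would not close it. First, the roles of the free channel and its adjoint are swapped: the quantity to control is $D_{m}(\mathcal{M}\circ\mathcal{E}_{AB}^{\dag}\Vert\mathcal{M}\circ\mathcal{F}^{\ast})=\tfrac{1}{N}\sum_{x}\tr(M_{x})\,D(\mathcal{E}_{AB}(\sigma_{x})\Vert\mathcal{F}_{AB}^{\ast\dag}(\sigma_{x}))$, so the map that must send separable states to good separable approximants of their $\mathcal{E}_{AB}$-images is $\mathcal{F}^{\ast\dag}$, not $\mathcal{F}^{\ast}$. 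Taking $\mathcal{F}^{\ast}(X)=\sum_{k}\tr(E_{k}X)\,\omega_{k}$ to be measure-and-prepare gives $\mathcal{F}^{\ast\dag}(\sigma)=\sum_{k}\tr(\omega_{k}\sigma)E_{k}$, which is not a prepared separable approximation of $\mathcal{E}_{AB}(\sigma)$ at all; if instead you make $\mathcal{F}^{\ast\dag}$ the measure-and-prepare map, then $\mathcal{F}^{\ast}$ is its adjoint and must still be trace preserving and in $\clocc$, i.e.\ your prepared ensemble must satisfy a unitality constraint the sketch never imposes. Second, ``refine the POVM and pass to the limit'' has no mechanism for uniform control: relative entropy is not uniformly continuous in its second argument (it diverges when the approximant loses support), the uniformity must hold over separable $\sigma$ on $A\widetilde{A}B\widetilde{B}$ for arbitrary ancilla dimensions while your informationally complete POVM lives on $AB$ alone, and lower semicontinuity of $D$ points the wrong way for an upper bound. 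The paper's route avoids this entirely: it restricts the minimization to \emph{unital} $\clocc$ channels, so that $\mathcal{F}^{\dag}$ is automatically trace preserving and separability preserving, rewrites $D_{m}$ through $\mathcal{F}^{\dag}$ acting on the POVM elements, and uses the scaling $D(qP\Vert qQ)=qD(P\Vert Q)$ with $q_{0}=\tr M_{0}^{\ast}\le 1$ to reduce to a single renormalized separable element, concluding $q_{0}D(\mathcal{E}_{AB}(\sigma)\Vert\mathcal{F}_{AB}^{\ast\dag}(\sigma))\le E_{g}(\mathcal{E}_{AB})$ without ever asking a single linear map to realize the pointwise-optimal assignment $\sigma\mapsto\tau_{\sigma}$. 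As submitted, the right-hand inequality $E(\mathcal{E}_{AB}^{\dag})\le E_{g}(\mathcal{E}_{AB})$ is therefore not established.
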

\begin{proof}
	We first show the upper bound on $E(\mathcal{E}^\dag)$. Let $ \overline{\mathbf{ULOCC}}(A\!:\!B)=\{\mathcal{F}_{AB}\in \clocc(A\!:\!B):\mathcal{F}_{AB}(I_{AB})=I_{AB}\}$ be the set of unital $\clocc$ channels. The following holds that:
	\begin{align}
		E(\mathcal{E}_{AB}^{\dag}) & =\min_{\mathcal{F}\in \clocc(A:B)}\max_{\mathcal{M}\in \cloccm(A\widetilde{A}:B\widetilde{B})} D_{m}(M_{A\widetilde{A}B\widetilde{B}}\circ \mathcal{E}_{AB}^{\dag}\Vert \mathcal{M}_{A\widetilde{A}B\widetilde{B}}\circ \mathcal{F}_{AB})\\
		& \le\min_{\mathcal{F}\in \overline{\mathbf{ULOCC}}(A:B)}\max_{\mathcal{M}\in \cloccm(A\widetilde{A}:B\widetilde{B})} D_{m}(M_{A\widetilde{A}B\widetilde{B}}\circ \mathcal{E}_{AB}^{\dag}\Vert \mathcal{M}_{A\widetilde{A}B\widetilde{B}}\circ \mathcal{F}_{AB})\\
		& =D_{m}(\mathcal{M}_{A\widetilde{A}B\widetilde{B}}^{\ast} \circ \mathcal{E}_{AB}^{\dag}\Vert \mathcal{M}_{A\widetilde{A}B\widetilde{B}}^{\ast}\circ \mathcal{F}_{AB}^{\ast})\\
		&=\dfrac{1}{d^{4}}\sum_{x,y}D(\mathcal{E}_{AB}(M_{xy}^{\ast})\Vert \mathcal{F}_{AB}^{\ast\dag}(M_{xy}^{\ast}))\\
		&\le \max_{x,y}D\left (\mathcal{E}_{AB}\left (\dfrac{M_{xy}^{\ast}}{d^{4}}\right )\bigg\Vert \mathcal{F}_{AB}^{\ast\dag}\left (\dfrac{M_{xy}^{\ast}}{d^{4}}\right )\right )\\
		&=D(\mathcal{E}_{AB}(M_{0}^{\ast})\Vert \mathcal{F}_{AB}^{\ast \dag}(M_{0}^{\ast})).
	\end{align}
	Let $M_{0}^{\ast}=q_{0}\sigma_{A\widetilde{A}B\widetilde{B}} $, where $q_{0}=\tr M_{0}^{\ast}\le 1$ and $\sigma_{A\widetilde{A}B\widetilde{B}}=M_{0}^{\ast}/q_{0}$ is a separable state in $\sepd(A\widetilde{A}\!:\!B\widetilde{B})$. Then it follows that
	\begin{align}
		E(\mathcal{E}_{AB}^{\dag})& \le D(\mathcal{E}_{AB}(q_{0}\sigma_{A\widetilde{A}B\widetilde{B}})\Vert \mathcal{F}_{AB}^{\ast \dag}(q_{0}\sigma_{A\widetilde{A}B\widetilde{B}})) \\
		&=q_{0}D(\mathcal{E}_{AB}(\sigma_{A\widetilde{A}B\widetilde{B}})\Vert \mathcal{F}_{AB}^{\ast \dag}(\sigma_{A\widetilde{A}B\widetilde{B}}))\\
		&\le E_{g}(\mathcal{E}_{AB}).
	\end{align}
	
	The lower bound on $ E(\mathcal{E}_{AB}^{\dag})$ can be derived as follows:
	\begin{align}
		E_{g}(\mathcal{E}_{AB})&=\max_{\sigma \in \sepd(A\widetilde{A}:B\widetilde{B})} E_{R}(\mathcal{E}_{AB}(\sigma_{A\widetilde{A}B\widetilde{B}}))\\
		&=\max_{\sigma \in \sepd(A\widetilde{A}:B\widetilde{B})} \min_{\tau\in \sepd(A\widetilde{A}:B\widetilde{B})} D(\mathcal{E}_{AB}(\sigma_{A\widetilde{A}B\widetilde{B}}) \Vert \tau_{A\widetilde{A}B\widetilde{B}})\\
		 &= D(\mathcal{E}_{AB}(\sigma_{A\widetilde{A}B\widetilde{B}}^{\ast})\Vert \tau_{A\widetilde{A}B\widetilde{B}}^{\ast}).
	\end{align}
	We construct a measurement $\mathcal{N}_{AB}$ in $\cloccm(A\widetilde{A}:B\widetilde{B})$ that includes $\sigma_{A\widetilde{A}B\widetilde{B}}$ as one of its POVM elements. Then we observe that
	\begin{align}
		E(\mathcal{E}_{AB}^{\dag}) & =\min_{\mathcal{F}\in \clocc(A:B)}\max_{\mathcal{M}\in \cloccm(A\widetilde{A}:B\widetilde{B})}  D_{m}(\mathcal{M}_{A\widetilde{A}B\widetilde{B}}\circ \mathcal{E}_{AB}^{\dag}\Vert \mathcal{M}_{A\widetilde{A}B\widetilde{B}}\circ \mathcal{F}_{AB})\\
		&=D_{m}(\mathcal{M}_{A\widetilde{A}B\widetilde{B}}^{\ast} \circ \mathcal{E}_{AB}^{\dag}\Vert \mathcal{M}_{A\widetilde{A}B\widetilde{B}}^{\ast}\circ \mathcal{F}_{AB}^{\ast})\\
		&\ge D_{m}(\mathcal{N}_{A\widetilde{A}B\widetilde{B}} \circ \mathcal{E}_{AB}^{\dag}\Vert \mathcal{N}_{A\widetilde{A}B\widetilde{B}}\circ \mathcal{F}_{AB}^{\ast})\\
		&=\dfrac{1}{d^{4}}\sum_{x,y}D(\mathcal{E}_{AB}(N_{xy})\Vert \mathcal{F}_{AB}^{\ast\dag}(N_{xy}))\\
		&\ge \dfrac{1}{d^{4}} D(\mathcal{E}_{AB}(\sigma_{A\widetilde{A}B\widetilde{B}}^{\ast}) \Vert \tau_{A\widetilde{A}B\widetilde{B}}^{\ast})\\
		&\ge \dfrac{1}{d^{4}} E_{g}(\mathcal{E}_{AB}).
	\end{align}
    This completes the proof.
\end{proof}


\begin{thebibliography}{10}

\bibitem{levi2014QuantitativeTheoryCoherent}
Federico Levi and Florian Mintert.
\newblock ``A quantitative theory of coherent delocalization''.
\newblock \href{https://dx.doi.org/10.1088/1367-2630/16/3/033007}{New Journal of Physics {\bf 16}, 033007}~(2014).

\bibitem{baumgratz2014QuantifyingCoherence}
T.~Baumgratz, M.~Cramer, and M.~B. Plenio.
\newblock ``Quantifying {Coherence}''.
\newblock \href{https://dx.doi.org/10.1103/PhysRevLett.113.140401}{Physical Review Letters {\bf 113}, 140401}~(2014).

\bibitem{winter2016OperationalResourceCoherence}
Andreas Winter and Dong Yang.
\newblock ``Operational {Resource} {Theory} of {Coherence}''.
\newblock \href{https://dx.doi.org/10.1103/PhysRevLett.116.120404}{Physical Review Letters {\bf 116}, 120404}~(2016).

\bibitem{chitambar2016ComparisonIncoherentOperations}
Eric Chitambar and Gilad Gour.
\newblock ``Comparison of incoherent operations and measures of coherence''.
\newblock \href{https://dx.doi.org/10.1103/PhysRevA.94.052336}{Physical Review A {\bf 94}, 052336}~(2016).

\bibitem{streltsov2017QuantumCoherenceResource}
Alexander Streltsov, Gerardo Adesso, and Martin~B. Plenio.
\newblock ``Colloquium: {{Quantum}} coherence as a resource''.
\newblock \href{https://dx.doi.org/10.1103/RevModPhys.89.041003}{Reviews of Modern Physics {\bf 89}, 041003}~(2017).

\bibitem{ramsey1950MolecularBeamResonance}
Norman~F. Ramsey.
\newblock ``A {{Molecular Beam Resonance Method}} with {{Separated Oscillating Fields}}''.
\newblock \href{https://dx.doi.org/10.1103/PhysRev.78.695}{Physical Review {\bf 78}, 695--699}~(1950).

\bibitem{czhang2019DemonstratingQuantumCoherence}
Chao Zhang, Thomas~R. Bromley, Yun-Feng Huang, Huan Cao, Wei-Min Lv, Bi-Heng Liu, Chuan-Feng Li, Guang-Can Guo, Marco Cianciaruso, and Gerardo Adesso.
\newblock ``Demonstrating {{Quantum Coherence}} and {{Metrology}} that is {{Resilient}} to {{Transversal Noise}}''.
\newblock \href{https://dx.doi.org/10.1103/PhysRevLett.123.180504}{Physical Review Letters {\bf 123}, 180504}~(2019).

\bibitem{hillery2016CoherenceResourceDecision}
Mark Hillery.
\newblock ``Coherence as a resource in decision problems: {{The Deutsch-Jozsa}} algorithm and a variation''.
\newblock \href{https://dx.doi.org/10.1103/PhysRevA.93.012111}{Physical Review A {\bf 93}, 012111}~(2016).

\bibitem{lhshi2017CoherenceDepletionGrover}
Hai-Long Shi, Si-Yuan Liu, Xiao-Hui Wang, Wen-Li Yang, Zhan-Ying Yang, and Heng Fan.
\newblock ``Coherence depletion in the {{Grover}} quantum search algorithm''.
\newblock \href{https://dx.doi.org/10.1103/PhysRevA.95.032307}{Physical Review A {\bf 95}, 032307}~(2017).

\bibitem{ahnefeld2022CoherenceResourceShors}
Felix Ahnefeld, Thomas Theurer, Dario Egloff, Juan~Mauricio Matera, and Martin~B. Plenio.
\newblock ``Coherence as a {{Resource}} for {{Shor}}'s {{Algorithm}}''.
\newblock \href{https://dx.doi.org/10.1103/PhysRevLett.129.120501}{Physical Review Letters {\bf 129}, 120501}~(2022).

\bibitem{gour2022RoleQuantumCoherence}
Gilad Gour.
\newblock ``Role of {{Quantum Coherence}} in {{Thermodynamics}}''.
\newblock \href{https://dx.doi.org/10.1103/PRXQuantum.3.040323}{PRX Quantum {\bf 3}, 040323}~(2022).

\bibitem{gour2023ErratumRoleQuantum}
Gilad Gour.
\newblock ``Erratum: {{Role}} of {{Quantum Coherence}} in {{Thermodynamics}} [{{PRX Quantum}} 3, 040323 (2022)]''.
\newblock \href{https://dx.doi.org/10.1103/PRXQuantum.4.040901}{PRX Quantum {\bf 4}, 040901}~(2023).

\bibitem{mskim2002EntanglementBeamSplitter}
M.~S. Kim, W.~Son, V.~Bu{\v z}ek, and P.~L. Knight.
\newblock ``Entanglement by a beam splitter: {{Nonclassicality}} as a prerequisite for entanglement''.
\newblock \href{https://dx.doi.org/10.1103/PhysRevA.65.032323}{Physical Review A {\bf 65}, 032323}~(2002).

\bibitem{asboth2005ComputableMeasureNonclassicality}
J{\'a}nos~K. Asb{\'o}th, John Calsamiglia, and Helmut Ritsch.
\newblock ``Computable {{Measure}} of {{Nonclassicality}} for {{Light}}''.
\newblock \href{https://dx.doi.org/10.1103/PhysRevLett.94.173602}{Physical Review Letters {\bf 94}, 173602}~(2005).

\bibitem{vogel2014UnifiedQuantificationNonclassicality}
W.~Vogel and J.~Sperling.
\newblock ``Unified quantification of nonclassicality and entanglement''.
\newblock \href{https://dx.doi.org/10.1103/PhysRevA.89.052302}{Physical Review A {\bf 89}, 052302}~(2014).

\bibitem{killoran2016ConvertingNonclassicalityEntanglement}
N.~Killoran, F.~E.~S. Steinhoff, and M.~B. Plenio.
\newblock ``Converting {{Nonclassicality}} into {{Entanglement}}''.
\newblock \href{https://dx.doi.org/10.1103/PhysRevLett.116.080402}{Physical Review Letters {\bf 116}, 080402}~(2016).

\bibitem{streltsov2015MeasuringCoherenceEntanglement}
Alexander Streltsov, Uttam Singh, Himadri~Shekhar Dhar, Manabendra~Nath Bera, and Gerardo Adesso.
\newblock ``Measuring {Quantum} {Coherence} with {Entanglement}''.
\newblock \href{https://dx.doi.org/10.1103/PhysRevLett.115.020403}{Physical Review Letters {\bf 115}, 020403}~(2015).

\bibitem{chitambar2016RelatingEntanglementCoherence}
Eric Chitambar and Min-Hsiu Hsieh.
\newblock ``Relating the {Resource} {Theories} of {Entanglement} and {Quantum} {Coherence}''.
\newblock \href{https://dx.doi.org/10.1103/PhysRevLett.117.020402}{Physical Review Letters {\bf 117}, 020402}~(2016).

\bibitem{regula2018ConvertingMultilevelNonclassicality}
Bartosz Regula, Marco Piani, Marco Cianciaruso, Thomas~R Bromley, Alexander Streltsov, and Gerardo Adesso.
\newblock ``Converting multilevel nonclassicality into genuine multipartite entanglement''.
\newblock \href{https://dx.doi.org/10.1088/1367-2630/aaae9d}{New Journal of Physics {\bf 20}, 033012}~(2018).

\bibitem{shkim2021ConvertingCoherenceBased}
Sunho Kim, Chunhe Xiong, Asutosh Kumar, and Junde Wu.
\newblock ``Converting coherence based on positive-operator-valued measures into entanglement''.
\newblock \href{https://dx.doi.org/10.1103/PhysRevA.103.052418}{Physical Review A {\bf 103}, 052418}~(2021).

\bibitem{buscemi2005CleanPOVM}
Francesco Buscemi, Michael Keyl, Giacomo~Mauro D'Ariano, Paolo Perinotti, and Reinhard~F. Werner.
\newblock ``Clean positive operator valued measures''.
\newblock \href{https://dx.doi.org/10.1063/1.2008996}{Journal of Mathematical Physics {\bf 46}, 082109}~(2005).

\bibitem{girolami2014ObservableMeasureCoherence}
Davide Girolami.
\newblock ``Observable {Measure} of {Quantum} {Coherence} in {Finite} {Dimensional} {Systems}''.
\newblock \href{https://dx.doi.org/10.1103/PhysRevLett.113.170401}{Physical Review Letters {\bf 113}, 170401}~(2014).

\bibitem{oszmaniec2017SimulatingPOVM}
Micha{\l} Oszmaniec, Leonardo Guerini, Peter Wittek, and Antonio Ac{\'i}n.
\newblock ``Simulating {{Positive}}-{{Operator}}-{{Valued Measures}} with {{Projective Measurements}}''.
\newblock \href{https://dx.doi.org/10.1103/PhysRevLett.119.190501}{Physical Review Letters {\bf 119}, 190501}~(2017).

\bibitem{oszmaniec2019OperationalRelevanceResource}
Micha{\l} Oszmaniec and Tanmoy Biswas.
\newblock ``Operational relevance of resource theories of quantum measurements''.
\newblock \href{https://dx.doi.org/10.22331/q-2019-04-26-133}{Quantum {\bf 3}, 133}~(2019).

\bibitem{skrzypczyk2019RobustnessMeasurementDiscrimination}
Paul Skrzypczyk and Noah Linden.
\newblock ``Robustness of {{Measurement}}, {{Discrimination Games}}, and {{Accessible Information}}''.
\newblock \href{https://dx.doi.org/10.1103/PhysRevLett.122.140403}{Physical Review Letters {\bf 122}, 140403}~(2019).

\bibitem{skrzypczyk2019AllSetsIncompatible}
Paul Skrzypczyk, Ivan {\v S}upi{\'c}, and Daniel Cavalcanti.
\newblock ``All {{Sets}} of {{Incompatible Measurements}} give an {{Advantage}} in {{Quantum State Discrimination}}''.
\newblock \href{https://dx.doi.org/10.1103/PhysRevLett.122.130403}{Physical Review Letters {\bf 122}, 130403}~(2019).

\bibitem{theurer2019QuantifyingOperationsApplication}
Thomas Theurer, Dario Egloff, Lijian Zhang, and Martin~B. Plenio.
\newblock ``Quantifying {{Operations}} with an {{Application}} to {{Coherence}}''.
\newblock \href{https://dx.doi.org/10.1103/PhysRevLett.122.190405}{Physical Review Letters {\bf 122}, 190405}~(2019).

\bibitem{bischof2019ResourceTheoryCoherence}
Felix Bischof, Hermann Kampermann, and Dagmar Bru{\ss}.
\newblock ``Resource {{Theory}} of {{Coherence Based}} on {{Positive-Operator-Valued Measures}}''.
\newblock \href{https://dx.doi.org/10.1103/PhysRevLett.123.110402}{Physical Review Letters {\bf 123}, 110402}~(2019).

\bibitem{lipka-bartosik2021OperationalSignificanceQuantum}
Patryk {Lipka-Bartosik}, Andr{\'e}s~F. Ducuara, Tom Purves, and Paul Skrzypczyk.
\newblock ``Operational {{Significance}} of the {{Quantum Resource Theory}} of {{Buscemi Nonlocality}}''.
\newblock \href{https://dx.doi.org/10.1103/PRXQuantum.2.020301}{PRX Quantum {\bf 2}, 020301}~(2021).

\bibitem{guff2021ResourceTheoryQuantum}
Thomas Guff, Nathan~A. McMahon, Yuval~R. Sanders, and Alexei Gilchrist.
\newblock ``A resource theory of quantum measurements''.
\newblock \href{https://dx.doi.org/10.1088/1751-8121/abed67}{Journal of Physics A: Mathematical and Theoretical {\bf 54}, 225301}~(2021).

\bibitem{buscemi2024CompleteOperationalResource}
Francesco Buscemi, Kodai Kobayashi, and Shintaro Minagawa.
\newblock ``A complete and operational resource theory of measurement sharpness''.
\newblock \href{https://dx.doi.org/10.22331/q-2024-01-25-1235}{Quantum {\bf 8}, 1235}~(2024).

\bibitem{linden2025HowUseArbitrary}
Noah Linden and Paul Skrzypczyk.
\newblock ``How to use arbitrary measuring devices to perform almost-perfect measurements''.
\newblock \href{https://dx.doi.org/10.1103/8tph-mc2p}{Physical Review A {\bf 112}, 022405}~(2025).

\bibitem{cimini2019MeasuringCoherenceQuantum}
Valeria Cimini, Ilaria Gianani, Marco Sbroscia, Jan Sperling, and Marco Barbieri.
\newblock ``Measuring coherence of quantum measurements''.
\newblock \href{https://dx.doi.org/10.1103/PhysRevResearch.1.033020}{Physical Review Research {\bf 1}, 033020}~(2019).

\bibitem{narasimhachar2025CoherentMeasurementCost}
Varun Narasimhachar.
\newblock ``The coherent measurement cost of coherence distillation''.
\newblock \href{https://dx.doi.org/10.22331/q-2025-04-15-1707}{Quantum {\bf 9}, 1707}~(2025).

\bibitem{hshuang2021InformationTheoreticBoundsQuantum}
Hsin-Yuan Huang.
\newblock ``Information-{{Theoretic Bounds}} on {{Quantum Advantage}} in {{Machine Learning}}''.
\newblock \href{https://dx.doi.org/10.1103/PhysRevLett.126.190505}{Physical Review Letters{\bf 126}}~(2021).

\bibitem{schen2024OptimalTradeoffEntanglement}
Sitan Chen, Jerry Li, and Allen Liu.
\newblock ``An {{Optimal Tradeoff}} between {{Entanglement}} and {{Copy Complexity}} for {{State Tomography}}''.
\newblock In Proceedings of the 56th {{Annual ACM Symposium}} on {{Theory}} of {{Computing}}.
\newblock \href{https://dx.doi.org/10.1145/3618260.3649704}{Pages 1331--1342}.
\newblock {{STOC}} 2024New York, NY, USA~(2024). Association for Computing Machinery.

\bibitem{hjkim2022RelationQuantumCoherence}
Ho-Joon Kim and Soojoon Lee.
\newblock ``Relation between quantum coherence and quantum entanglement in quantum measurements''.
\newblock \href{https://dx.doi.org/10.1103/PhysRevA.106.022401}{Physical Review A {\bf 106}, 022401}~(2022).

\bibitem{zanardi2000EntanglingPowerQuantum}
Paolo Zanardi, Christof Zalka, and Lara Faoro.
\newblock ``Entangling power of quantum evolutions''.
\newblock \href{https://dx.doi.org/10.1103/PhysRevA.62.030301}{Physical Review A {\bf 62}, 030301}~(2000).

\bibitem{zanardi2017CoherencegeneratingPowerQuantum}
Paolo Zanardi, Georgios Styliaris, and Lorenzo Campos~Venuti.
\newblock ``Coherence-generating power of quantum unitary maps and beyond''.
\newblock \href{https://dx.doi.org/10.1103/PhysRevA.95.052306}{Physical Review A {\bf 95}, 052306}~(2017).

\bibitem{takahashi2022CreatingDestroyingCoherence}
Masaya Takahashi, Swapan Rana, and Alexander Streltsov.
\newblock ``Creating and destroying coherence with quantum channels''.
\newblock \href{https://dx.doi.org/10.1103/PhysRevA.105.L060401}{Physical Review A {\bf 105}, L060401}~(2022).

\bibitem{horodecki2003EntanglementBreakingChannels}
Michael Horodecki, Peter~W. Shor, and Mary~Beth Ruskai.
\newblock ``Entanglement {{Breaking Channels}}''.
\newblock \href{https://dx.doi.org/10.1142/S0129055X03001709}{Reviews in Mathematical Physics {\bf 15}, 629--641}~(2003).

\bibitem{zwliu2017ResourceDestroyingMaps}
Zi-Wen Liu, Xueyuan Hu, and Seth Lloyd.
\newblock ``Resource {Destroying} {Maps}''.
\newblock \href{https://dx.doi.org/10.1103/PhysRevLett.118.060502}{Physical Review Letters {\bf 118}, 060502}~(2017).

\bibitem{cyhsieh2020ResourcePreservability}
Chung-Yun Hsieh.
\newblock ``Resource {{Preservability}}''.
\newblock \href{https://dx.doi.org/10.22331/q-2020-03-19-244}{Quantum {\bf 4}, 244}~(2020).

\bibitem{nielsen2003QuantumDynamicsPhysical}
Michael~A. Nielsen, Christopher~M. Dawson, Jennifer~L. Dodd, Alexei Gilchrist, Duncan Mortimer, Tobias~J. Osborne, Michael~J. Bremner, Aram~W. Harrow, and Andrew Hines.
\newblock ``Quantum dynamics as a physical resource''.
\newblock \href{https://dx.doi.org/10.1103/PhysRevA.67.052301}{Physical Review A {\bf 67}, 052301}~(2003).

\bibitem{hjkim2021OneShotManipulationEntanglement}
Ho-Joon Kim, Soojoon Lee, Ludovico Lami, and Martin~B. Plenio.
\newblock ``One-{{Shot Manipulation}} of {{Entanglement}} for {{Quantum Channels}}''.
\newblock \href{https://dx.doi.org/10.1109/TIT.2021.3079938}{IEEE Transactions on Information Theory {\bf 67}, 5339--5351}~(2021).

\bibitem{saxena2020DynamicalResourceTheory}
Gaurav Saxena, Eric Chitambar, and Gilad Gour.
\newblock ``Dynamical resource theory of quantum coherence''.
\newblock \href{https://dx.doi.org/10.1103/PhysRevResearch.2.023298}{Physical Review Research {\bf 2}, 023298}~(2020).

\bibitem{yliu2020OperationalResourceTheory}
Yunchao Liu and Xiao Yuan.
\newblock ``Operational resource theory of quantum channels''.
\newblock \href{https://dx.doi.org/10.1103/PhysRevResearch.2.012035}{Physical Review Research {\bf 2}, 012035}~(2020).

\bibitem{yluo2025OneshotManipulationCoherence}
Yu~Luo.
\newblock ``One-shot manipulation of coherence in dynamic quantum resource theory''.
\newblock \href{https://dx.doi.org/10.1103/PhysRevA.111.022447}{Physical Review A {\bf 111}, 022447}~(2025).

\bibitem{theurer2020QuantifyingDynamicalCoherence}
Thomas Theurer, Saipriya Satyajit, and Martin~B. Plenio.
\newblock ``Quantifying {{Dynamical Coherence}} with {{Dynamical Entanglement}}''.
\newblock \href{https://dx.doi.org/10.1103/PhysRevLett.125.130401}{Physical Review Letters {\bf 125}, 130401}~(2020).

\bibitem{watrous2018TTQI}
John Watrous.
\newblock ``The {{Theory}} of {{Quantum Information}}''.
\newblock \href{https://dx.doi.org/https://doi.org/10.1017/9781316848142}{{Cambridge University Press}}. {Cambridge, United Kingdom}~(2018).
\newblock 1 edition.

\bibitem{kdwu2021ExperimentalProgressQuantum}
Kang-Da Wu, Alexander Streltsov, Bartosz Regula, Guo-Yong Xiang, Chuan-Feng Li, and Guang-Can Guo.
\newblock ``Experimental {{Progress}} on {{Quantum Coherence}}: {{Detection}}, {{Quantification}}, and {{Manipulation}}''.
\newblock \href{https://dx.doi.org/10.1002/qute.202100040}{Advanced Quantum Technologies {\bf 4}, 2100040}~(2021).

\bibitem{gour2017QuantumResourceTheories}
Gilad Gour.
\newblock ``Quantum resource theories in the single-shot regime''.
\newblock \href{https://dx.doi.org/10.1103/PhysRevA.95.062314}{Physical Review A {\bf 95}, 062314}~(2017).

\bibitem{djzhang2018EstimatingCoherenceMeasures}
Da-Jian Zhang, C.~L. Liu, Xiao-Dong Yu, and D.~M. Tong.
\newblock ``Estimating {{Coherence Measures}} from {{Limited Experimental Data Available}}''.
\newblock \href{https://dx.doi.org/10.1103/PhysRevLett.120.170501}{Physical Review Letters {\bf 120}, 170501}~(2018).

\bibitem{tzhang2024QuantificationEntanglementCoherence}
Ting Zhang, Graeme Smith, John~A. Smolin, Lu~Liu, Xu-Jie Peng, Qi~Zhao, Davide Girolami, Xiongfeng Ma, Xiao Yuan, and He~Lu.
\newblock ``Quantification of entanglement and coherence with purity detection''.
\newblock \href{https://dx.doi.org/10.1038/s41534-024-00857-2}{npj Quantum Information {\bf 10}, 60}~(2024).

\bibitem{eisert2000OptimalLocalImplementation}
J.~Eisert, K.~Jacobs, P.~Papadopoulos, and M.~B. Plenio.
\newblock ``Optimal local implementation of nonlocal quantum gates''.
\newblock \href{https://dx.doi.org/10.1103/PhysRevA.62.052317}{Physical Review A {\bf 62}, 052317}~(2000).

\bibitem{chitambar2019QuantumResourceTheories}
Eric Chitambar and Gilad Gour.
\newblock ``Quantum resource theories''.
\newblock \href{https://dx.doi.org/10.1103/RevModPhys.91.025001}{Reviews of Modern Physics {\bf 91}, 025001}~(2019).

\bibitem{aharonov2022QuantumAlgorithmicMeasurement}
Dorit Aharonov, Jordan Cotler, and Xiao-Liang Qi.
\newblock ``Quantum algorithmic measurement''.
\newblock \href{https://dx.doi.org/10.1038/s41467-021-27922-0}{Nature Communications {\bf 13}, 887}~(2022).

\bibitem{schen2022ExponentialSeparationsLearning}
Sitan Chen, Jordan Cotler, Hsin-Yuan Huang, and Jerry Li.
\newblock ``Exponential {{Separations Between Learning With}} and {{Without Quantum Memory}}''.
\newblock In 2021 {{IEEE}} 62nd {{Annual Symposium}} on {{Foundations}} of {{Computer Science}} ({{FOCS}}).
\newblock \href{https://dx.doi.org/10.1109/FOCS52979.2021.00063}{Pages 574--585}.
\newblock ~(2022).

\bibitem{vedral1997}
V.~Vedral, M.~B. Plenio, M.~A. Rippin, and P.~L. Knight.
\newblock ``Quantifying entanglement''.
\newblock \href{https://dx.doi.org/10.1103/PhysRevLett.78.2275}{Physical Review Letters {\bf 78}, 2275--2279}~(1997).

\bibitem{gour2019ComparisonQuantumChannels}
G.~Gour.
\newblock ``Comparison of quantum channels by superchannels''.
\newblock \href{https://dx.doi.org/10.1109/TIT.2019.2907989}{{IEEE} Transactions on Information TheoryPages 1--1}~(2019).

\bibitem{gour2019HowQuantifyDynamical}
Gilad Gour and Andreas Winter.
\newblock ``How to {{Quantify}} a {{Dynamical Quantum Resource}}''.
\newblock \href{https://dx.doi.org/10.1103/PhysRevLett.123.150401}{Physical Review Letters {\bf 123}, 150401}~(2019).

\bibitem{gour2020DynamicalEntanglement}
Gilad Gour and Carlo~Maria Scandolo.
\newblock ``Dynamical {{Entanglement}}''.
\newblock \href{https://dx.doi.org/10.1103/PhysRevLett.125.180505}{Physical Review Letters {\bf 125}, 180505}~(2020).

\bibitem{regula2021OneShotManipulationDynamical}
Bartosz Regula and Ryuji Takagi.
\newblock ``One-{{Shot Manipulation}} of {{Dynamical Quantum Resources}}''.
\newblock \href{https://dx.doi.org/10.1103/PhysRevLett.127.060402}{Physical Review Letters {\bf 127}, 060402}~(2021).

\bibitem{xyuan2020OneshotDynamicalResource}
Xiao Yuan, Pei Zeng, Minbo Gao, and Qi~Zhao.
\newblock ``One-shot dynamical resource theory''~(2020).
\newblock  \href{http://arxiv.org/abs/2012.02781}{arXiv:2012.02781}.

\end{thebibliography}
\end{document}